\documentclass[]{fairmeta}
\usepackage{enumitem}
\title{TextSeal: A Localized LLM Watermark \\ for Provenance \& Distillation Protection}

\newcommand{\cross}{\mathsection} %
\newcommand{\core}{\dagger}      %

\author[\star,\core]{Tom Sander}
\author[\core]{Hongyan Chang}
\author{Sylvestre-Alvise Rebuffi}
\author{Tom\'{a}\v{s} Sou\v{c}ek}
\author{Tuan Tran}
\author{Valeriu Lacatusu}
\author{Alexandre Mourachko}
\author[\cross]{Surya Parimi}
\author[\cross]{Christophe Ropers}
\author[\cross]{Rashel Moritz}
\author[\cross]{Vanessa Stark}
\author[\core]{Hady Elsahar}
\author[\star,\core]{Pierre Fernandez}
\affiliation{FAIR, Meta Superintelligence Labs}
\contribution[\star]{Equal contributors}
\contribution[\core]{Core team}
\contribution[\cross]{Project Support.}
 
\correspondence{\email{tomsander@meta.com}, \email{pfz@meta.com}}
\metadata[Code]{\url{https://github.com/facebookresearch/textseal}}

\abstract{
We introduce \emph{TextSeal}, a state-of-the-art watermark for large language models.
Building on Gumbel-max sampling, TextSeal introduces dual-key generation to restore output diversity, along with entropy-weighted scoring and multi-region localization for improved detection.
It supports serving optimizations such as speculative decoding and multi-token prediction, and does not add any inference overhead.
TextSeal strictly dominates baselines like SynthID-text in detection strength and is robust to dilution, maintaining confident localized detection even in heavily mixed human/AI documents.
The scheme is theoretically distortion-free, and evaluation across reasoning benchmarks confirms that it preserves downstream performance; while a multilingual human evaluation (6{,}000 A/B comparisons, 5 languages) shows no perceptible quality difference.
Beyond its use for provenance detection, TextSeal is also ``radioactive'': its watermark signal transfers through model distillation, enabling detection of unauthorized use. \vspace{-1em}
}

\usepackage[utf8]{inputenc}
\usepackage{amsmath, amssymb, amsthm}
\usepackage{algorithm}
\usepackage{algpseudocode}
\usepackage{pgfplots}
\usepackage{wrapfig}
\usepackage{float}
\usepackage{graphicx}

\usepackage{capt-of}     
\usepackage{subcaption}  
\usepackage{booktabs}    
\usepackage{pifont}

\usepackage{tikz}
\usetikzlibrary{decorations.pathreplacing, positioning, arrows.meta, shapes.geometric, fit, backgrounds, calc, patterns}

\newcommand{\figTextSealFull}{%
\scalebox{0.9}{%
\begin{tikzpicture}[
    >=Stealth,
    font=\small,
    every node/.style={font=\small},
    token/.style={rectangle, draw=black!70, rounded corners, minimum height=1.5em, minimum width=1.5em, fill=white, align=center, font=\small, inner sep=2pt},
    tokengray/.style={token, fill=gray!20},
    block/.style={rectangle, draw=blue!70, rounded corners, align=center, minimum height=1.6em, fill=blue!3, font=\small, inner sep=3pt},
    llmblock/.style={block, draw=orange!80, fill=orange!10},
    router/.style={diamond, draw=green!70!black, aspect=2, align=center, fill=green!5, inner sep=1pt, font=\small},
    container/.style={draw=gray!50, thick, dashed, inner sep=8pt, rounded corners, fill=gray!2}
]

\begin{scope}[shift={(0,0)}]

\node[token] (x1) at (0, 0) {$x_1$};
\node[token] (x2) at (0.7, 0) {$x_2$};
\node[font=\scriptsize] at (1.25, 0) {$\cdots$};
\node[tokengray] (xtk) at (1.85, 0) {$x_{t\text{-}k}$};
\node[font=\scriptsize] at (2.5, 0) {$\cdots$};
\node[tokengray] (xt1) at (3.1, 0) {$x_{t\text{-}1}$};

\draw[decorate, decoration={brace, amplitude=3pt, raise=2pt}]
    (xtk.north west) -- (xt1.north east)
    node[midway, above=6pt, font=\footnotesize] (wlabel) {$\mathbf{w}$};

\draw[decorate, decoration={brace, amplitude=3pt, mirror, raise=2pt}]
    (x1.south west) -- (xt1.south east)
    node[midway, below=6pt, font=\scriptsize] (llm_ctx) {LLM context};

\node[llmblock] (llm) at (0.5, -3.8) {LLM};
\node[below=0.05cm of llm, font=\scriptsize] (pvec) {${\vec{p}}^{(t)}$};
\draw[->, thick] (0.5, -0.85) -- (llm.north);

\node[router] (router) at (3.7, -1.9) {\scriptsize Route};

\draw[->, thick, rounded corners] (wlabel.east) -| (router.north);

\node[font=\scriptsize] (cand_label) at (5.5, -0.4) {Candidates $v \in \mathcal{V}$};
\node[token, fill=white, minimum width=0.6cm] (v1) at (4.7, -0.9) {$v_1$};
\node[token, fill=white, minimum width=0.6cm] (v2) at (5.4, -0.9) {$v_2$};
\node[font=\scriptsize] at (5.85, -0.9) {$..$};
\node[token, fill=white, minimum width=0.6cm] (vV) at (6.3, -0.9) {$v_V$};
\node[font=\scriptsize, gray] at (5.7, -2.0) {Each $v$};

\node[block, minimum width=1.5cm, fill opacity=0.7, text opacity=1] (k1) at (2.5, -3.1) {PRF$(k^{(1)})$};
\draw[->, thick] (router.south west) -- node[left, font=\scriptsize] {$1{-}\alpha$} (k1.north);
\draw[->, thick, gray, rounded corners] (vV.south) -- (6.3, -2.8) |- ([yshift=-2pt]k1.east);

\node[block, minimum width=1.5cm, fill opacity=0.7, text opacity=1] (k2) at (4.9, -3.1) {PRF$(k^{(2)})$};
\draw[->, thick] (router.south east) -- node[right, font=\scriptsize] {$\alpha$} (k2.north);
\draw[->, thick, gray, rounded corners] (vV.south) -- (6.3, -2.8) |- ([yshift=2pt]k2.east);

\node[block, fill=yellow!10, draw=yellow!60!black, minimum width=1.8cm] (selR) at (3.7, -4.2) {$\vec{R}^{(j)}$};
\draw[->, thick] (k1.south) |- (selR.west);
\draw[->, thick] (k2.south) |- (selR.east);

\node[block, fill=red!5, draw=red!70, minimum width=3.5cm, inner sep=4pt] (argmax) at (3.7, -5.3) {$x_t = \arg\max_v R_v^{(j),\; 1/p_v^{(t)}}$};
\draw[->, thick] (selR.south) -- (argmax.north);
\draw[->, thick, rounded corners] (pvec.south) -- (0.5, -5.3) -- (argmax.west);

\node[token, fill=purple!20, font=\footnotesize\bfseries] (xt) at (3.7, -6.4) {$x_t$};
\draw[->, thick] (argmax.south) -- (xt.north);
\node[font=\scriptsize, gray, right=6pt, align=left] at (xt.east) {$\Prob(x_t=i) = p_i$\\[0pt]{\scriptsize (distortion-free)}};

\end{scope}

\begin{scope}[shift={(10, 0)}]

\node[token] (d1) at (0, 0) {$x_1$};
\node[token] (d2) at (0.8, 0) {$x_2$};
\node[font=\scriptsize] at (1.4, 0) {$\cdots$};
\node[tokengray] (dtk) at (1.95, 0) {$x_{i\text{-}k}$};
\node[tokengray, font=\scriptsize, minimum width=0.6em] (ddots_w) at (2.6, 0) {$\cdots$};
\node[tokengray] (dt1) at (3.2, 0) {$x_{i\text{-}1}$};
\node[token, fill=purple!15] (di) at (3.95, 0) {$x_i$};
\node[font=\scriptsize] at (4.5, 0) {$\cdots$};
\node[token] (dn) at (5.0, 0) {$x_n$};

\draw[decorate, decoration={brace, amplitude=3pt, raise=2pt}]
    (d1.north west) -- (dt1.north east)
    node[midway, above=6pt, font=\scriptsize] (det_ctx) {LLM context};

\draw[decorate, decoration={brace, amplitude=3pt, mirror, raise=2pt}]
    (dtk.south west) -- (dt1.south east)
    node[midway, below=6pt, font=\scriptsize] (det_wlabel) {$\mathbf{w}_i$};

\node[block, minimum width=1.5cm] (dk1) at (0.6, -2.4) {PRF$(k^{(1)})$};
\node[block, minimum width=1.5cm] (dk2) at (2.4, -2.4) {PRF$(k^{(2)})$};
\node[block, draw=orange!60, fill=orange!6, minimum width=1.1cm, minimum height=1.3em, font=\scriptsize, inner sep=2pt] (llm_det) at (4.5, -2.4) {LLM\textsuperscript{\scalebox{0.6}{proxy}}};
\node[font=\tiny, gray, above right=2pt and -14pt of llm_det, align=left] (llm_det_note) {e.g., small,\\quantized};

\draw[thick, rounded corners] (det_wlabel.south) -- (det_wlabel.south |- 0,-1.1);
\draw[thick, rounded corners] (di.south) -- (di.south |- 0,-1.1);
\draw[thick] (0.6,-1.1) -- (di.south |- 0,-1.1);
\draw[->, thick] (0.6,-1.1) -- (dk1.north);
\draw[->, thick] (2.4,-1.1) -- (dk2.north);

\draw[->, thick, dashed, rounded corners] (det_ctx.east) -| (llm_det.north);

\node[block, draw=orange!80, fill=orange!10, minimum width=1.5cm, inner sep=2pt] (ent_box) at (4.5, -3.5) {$H_i \!\to\! w_i^{\text{ent}}$};
\draw[->, thick, dashed] (llm_det.south) -- (ent_box.north);

\node[block, minimum width=4.0cm, fill=blue!5, inner sep=3pt] (fusion) at (1.5, -3.9) {$s_i = (1{-}\alpha)\,s_i^{(1)} + \alpha\,s_i^{(2)}$};
\draw[->, thick] (dk1.south) -- node[left, font=\scriptsize] {$s_i^{(1)} \!=\! {\scriptstyle -\ln(1 - R_i^{(1)})}$} (0.6, -3.55);
\draw[->, thick] (dk2.south) -- node[right, font=\scriptsize] {$s_i^{(2)}$} (2.4, -3.55);

\node[block, minimum width=4.0cm, fill=yellow!10, draw=yellow!60!black, dashed, inner sep=3pt] (score) at (1.5, -4.9) {Reweight: $\tilde{s}_i = w_i^{\text{ent}} \cdot s_i$};
\draw[->, thick] (fusion.south) -- (score.north);
\draw[->, thick, dashed, rounded corners] (ent_box.south) -- (4.5, -4.9) -- (score.east);

\draw[->, thick] (score.south) -- (1.5, -5.5);

\begin{scope}[shift={(-0.6, -5.8)}]
    \def\barW{4.8}
    \def\barH{0.3}

    \fill[gray!8] (0, 0) rectangle (\barW, \barH);
    \draw[gray!40] (0, 0) rectangle (\barW, \barH);
    \foreach \x/\h in {
        0.0/0.15, 0.2/0.20, 0.4/0.25, 0.6/0.22, 0.8/0.18, 1.0/0.24, 1.2/0.20, 1.4/0.15,
        1.6/0.05, 1.8/0.04, 2.0/0.06, 2.2/0.03, 2.4/0.05, 2.6/0.08,
        2.8/0.12, 3.0/0.22, 3.2/0.25, 3.4/0.20,
        3.6/0.04, 3.8/0.05, 4.0/0.03, 4.2/0.06, 4.4/0.04, 4.6/0.02} {
        \fill[purple!40, opacity=0.8] (\x, 0) rectangle (\x+0.16, \h);
    }
    \node[font=\tiny, left] at (-0.05, \barH/2) {$\tilde{s}_i$};

    \def\wy{-0.25}  %
    \def\wym{-0.55} %
    \def\wyl{-0.85} %

    \node[font=\tiny, gray, left, align=right] at (-0.1, -0.55) {Dyadic\\candidate\\windows};

    \foreach \a in {0, 0.4, 0.8, 1.2, 1.6, 2.0, 2.4, 2.8, 3.2, 3.6, 4.0} {
        \draw[gray!50, thin, fill=white] (\a, \wy-0.08) rectangle (\a+0.8, \wy+0.08);
    }
    \foreach \a in {0, 0.8, 1.6, 2.4, 3.2} {
        \draw[gray!50, thin, fill=white] (\a, \wym-0.08) rectangle (\a+1.6, \wym+0.08);
    }
    \foreach \a in {0, 1.6} {
        \draw[gray!50, thin, fill=white] (\a, \wyl-0.08) rectangle (\a+3.2, \wyl+0.08);
    }

    \fill[green!20, draw=green!70!black, thick] (0, \wym-0.08) rectangle (1.6, \wym+0.08);
    \fill[green!20, draw=green!70!black, thick] (2.8, \wy-0.08) rectangle (3.6, \wy+0.08);

    \node[font=\tiny, gray, right] at (\barW+0.05, \wy) {$L_0$};
    \node[font=\tiny, gray, right] at (\barW+0.05, \wym) {$2L_0$};
    \node[font=\tiny, gray, right] at (\barW+0.05, \wyl) {$4L_0$};

    \def\outy{-1.6}
    \fill[gray!15] (0, \outy) rectangle (\barW, \outy+\barH);
    
    \fill[green!40] (0, \outy) rectangle (1.6, \outy+\barH);
    \fill[green!40] (2.8, \outy) rectangle (3.6, \outy+\barH);
    
    \draw[thick] (0, \outy) rectangle (\barW, \outy+\barH);
    \draw[thick] (1.6, \outy) -- (1.6, \outy+\barH);
    \draw[thick] (2.8, \outy) -- (2.8, \outy+\barH);
    \draw[thick] (3.6, \outy) -- (3.6, \outy+\barH);

    \node[font=\tiny] at (-0.2, \outy+0.15) {$x_1$};
    \node[font=\tiny] at (\barW+0.25, \outy+0.15) {$x_n$};

    \draw[->, green!60!black, thick] (0.8, \wym-0.08) -- (0.8, \outy+\barH);
    \draw[->, green!60!black, thick] (3.2, \wy-0.08) -- (3.2, \outy+\barH);

    \node[font=\fontsize{4}{5}\selectfont, green!70!black] at (0.8, \wym) {$p_{\text{raw}}$};
    \node[font=\fontsize{4}{5}\selectfont, green!70!black] at (3.2, \wy) {$p_{\text{raw}}$};

    \draw[decorate, decoration={brace, amplitude=2pt, mirror, raise=1pt}] (0, \outy) -- (1.6, \outy);
    \node[font=\tiny, green!50!black, align=center] at (0.8, \outy-0.35) {Region 1\\$p < 10^{-6}$};
    \draw[decorate, decoration={brace, amplitude=2pt, mirror, raise=1pt}] (2.8, \outy) -- (3.6, \outy);
    \node[font=\tiny, green!50!black, align=center] at (3.2, \outy-0.35) {Region 2\\$p < 10^{-4}$};
\end{scope}

\coordinate (regbar_bot) at (-1.8, -8.1);
\coordinate (regbar_right) at (5.5, -7.8);

\end{scope}

\coordinate (emb_top) at (0, 0.9);
\coordinate (det_top) at (8.5, 0.9);

\begin{scope}[on background layer]
    \node[container, fit={(emb_top) (x1) (xt1) (xt) (argmax) (k2) (vV) (cand_label) (router) (wlabel)}, inner sep=10pt] (emb_box) {};
    \node[container, fit={(det_top) (d1) (dn) (dk1) (llm_det) (llm_det_note) (ent_box) (fusion) (score) (regbar_bot) (regbar_right) (det_wlabel)}, inner sep=10pt] (det_box) {};
\end{scope}

\node[above, font=\footnotesize\bfseries] at (emb_box.north) {Embedding};
\node[above, font=\footnotesize\bfseries] at (det_box.north) {Detection};

\end{tikzpicture}%
}%
}

\newcommand{\figDiversityMechanisms}{%
\resizebox{\textwidth}{!}{%
\begin{tikzpicture}[
    box/.style={draw, rounded corners=2pt, minimum width=1.4cm, minimum height=0.45cm, align=center, font=\tiny},
    arrow/.style={-{Stealth[length=1.5mm]}},
    title/.style={font=\small\bfseries, align=center},
    note/.style={font=\tiny, align=center},
    colsep/.style={draw, dashed, gray!50}
]

\draw[colsep] (3.6, 4.3) -- (3.6, -2.1);
\draw[colsep] (7.4, 4.3) -- (7.4, -2.1);
\draw[colsep] (11.2, 4.3) -- (11.2, -2.1);
\draw[colsep] (15.0, 4.3) -- (15.0, -2.1);
\draw[colsep] (18.8, 4.3) -- (18.8, -2.1);
\draw[colsep] (22.6, 4.3) -- (22.6, -2.1);

\begin{scope}[shift={(0, 0)}]
    \node[title] at (1.7, 3.6) {Standard\\[-2pt]Gumbel};
    \node[box, fill=orange!20] (prf1) at (1.7, 2.6) {PRF$(\mathbf{w}, K)$};
    \node[box, fill=gray!15] (r1) at (1.7, 1.6) {$\vec{R}$};
    \node[box, fill=green!15] (sel1) at (1.7, 0.6) {$\arg\max R_v^{1/p_v}$};
    \node[box, fill=purple!20] (out1) at (1.7, -0.6) {$x_t$};
    
    \draw[arrow] (prf1.south) -- (r1.north);
    \draw[arrow] (r1.south) -- (sel1.north);
    \draw[arrow] (sel1.south) -- (out1.north);
    
    \node[note, red!70!black] at (1.7, -1.4) {Deterministic};
    \node[note, gray] at (1.7, -1.8) {$\checkmark$ Distortion-free};
\end{scope}

\begin{scope}[shift={(3.8, 0)}]
    \node[title] at (1.7, 3.6) {Stochastic\\[-2pt]Mixing};
    \node[box, fill=orange!20] (prf2) at (0.7, 2.6) {PRF};
    \node[box, fill=red!20] (rnd2) at (2.7, 2.6) {$r_0$};
    \node[box, fill=yellow!20] (mix2) at (1.7, 1.6) {Mix $r_1, r_0$};
    \node[box, fill=green!15] (sel2) at (1.7, 0.6) {$\arg\max r_v^{1/p_v}$};
    \node[box, fill=purple!20] (out2) at (1.7, -0.6) {$x_t$};
    
    \draw[rounded corners=2pt] (prf2.south) |- (1.7, 2.1);
    \draw[rounded corners=2pt] (rnd2.south) |- (1.7, 2.1);
    \draw[arrow] (1.7, 2.1) -- (mix2.north);
    
    \draw[arrow] (mix2.south) -- (sel2.north);
    \draw[arrow] (sel2.south) -- (out2.north);
    
    \node[note, green!50!black] at (1.7, -1.4) {Stochastic};
    \node[note, gray] at (1.7, -1.8) {$\checkmark$ Distortion-free};
\end{scope}

\begin{scope}[shift={(7.6, 0)}]
    \node[title] at (1.7, 3.6) {Entropy\\[-2pt]Warmup};
    \node[box, fill=blue!15] (phase) at (1.7, 2.6) {$\sum H_i > \tau_s$?};
    \node[box, fill=red!15] (nowm) at (0.7, 1.6) {Sample $\vec{p}$};
    \node[box, fill=orange!20] (yeswm) at (2.7, 1.6) {Gumbel};
    \node[box, fill=purple!20] (out3) at (1.7, -0.6) {$x_t$};
    
    \draw[arrow, rounded corners=2pt] (phase.south) -- (1.7, 2.1) -| (nowm.north);
    \node[font=\tiny] at (1.1, 2.25) {no};
    \draw[arrow, rounded corners=2pt] (phase.south) -- (1.7, 2.1) -| (yeswm.north);
    \node[font=\tiny] at (2.3, 2.25) {yes};
    
    \draw[rounded corners=2pt] (nowm.south) |- (1.7, 0.0);
    \draw[rounded corners=2pt] (yeswm.south) |- (1.7, 0.0);
    \draw[arrow] (1.7, 0.0) -- (out3.north);
    
    \node[note, green!50!black] at (1.7, -1.4) {Stochastic (prefix)};
    \node[note, gray] at (1.7, -1.8) {$\checkmark$ Distortion-free};
\end{scope}

\begin{scope}[shift={(11.4, 0)}]
    \node[title] at (1.7, 3.6) {Random\\[-2pt]Skip};
    \node[box, fill=red!20] (coin4) at (1.7, 2.6) {Coin flip $\alpha$};
    \node[box, fill=red!15] (skip4) at (0.7, 1.6) {Sample $\vec{p}$};
    \node[box, fill=orange!20] (wm4) at (2.7, 1.6) {Gumbel};
    \node[box, fill=purple!20] (out4) at (1.7, -0.6) {$x_t$};
    
    \draw[arrow, rounded corners=2pt] (coin4.south) -- (1.7, 2.1) -| (skip4.north);
    \node[font=\tiny] at (1.1, 2.25) {skip};
    \draw[arrow, rounded corners=2pt] (coin4.south) -- (1.7, 2.1) -| (wm4.north);
    \node[font=\tiny] at (2.3, 2.25) {keep};
    
    \draw[rounded corners=2pt] (skip4.south) |- (1.7, 0.0);
    \draw[rounded corners=2pt] (wm4.south) |- (1.7, 0.0);
    \draw[arrow] (1.7, 0.0) -- (out4.north);
    
    \node[note, green!50!black] at (1.7, -1.4) {Stochastic};
    \node[note, gray] at (1.7, -1.8) {$\checkmark$ Distortion-free};
\end{scope}

\begin{scope}[shift={(15.2, 0)}]
    \node[title] at (1.7, 3.6) {Adaptive\\[-2pt]Skip};
    \node[box, fill=orange!20] (prf5) at (1.7, 2.6) {Gumbel};
    \node[box, fill=blue!15] (test5) at (1.7, 1.6) {$R_{V^\star}\!<\!\tau$?};
    \node[box, fill=red!15] (skip5) at (0.7, 0.6) {Sample $\vec{p}$};
    \node[box, fill=green!15] (keep5) at (2.7, 0.6) {Keep $V^\star$};
    \node[box, fill=purple!20] (out5) at (1.7, -0.6) {$x_t$};
    
    \draw[arrow] (prf5.south) -- (test5.north);
    
    \draw[arrow, rounded corners=2pt] (test5.south) -- (1.7, 1.1) -| (skip5.north);
    \node[font=\tiny] at (1.1, 1.25) {yes};
    \draw[arrow, rounded corners=2pt] (test5.south) -- (1.7, 1.1) -| (keep5.north);
    \node[font=\tiny] at (2.3, 1.25) {no};
    
    \draw[rounded corners=2pt] (skip5.south) |- (1.7, -0.1);
    \draw[rounded corners=2pt] (keep5.south) |- (1.7, -0.1);
    \draw[arrow] (1.7, -0.1) -- (out5.north);
    
    \node[note, green!50!black] at (1.7, -1.4) {Stochastic};
    \node[note, gray] at (1.7, -1.8) {$\times$ Not distortion-free};
\end{scope}

\begin{scope}[shift={(19.0, 0)}]
    \node[title] at (1.7, 3.6) {Ent-Norm\\[-2pt]Skip};
    \node[box, fill=orange!20] (prf6) at (1.7, 2.6) {Gumbel};
    \node[box, fill=blue!15] (test6) at (1.7, 1.6) {$R_{V^\star}\!<\!\tau^{p_{V^\star}}$?};
    \node[box, fill=red!15] (skip6) at (0.7, 0.6) {Sample $\vec{p}$};
    \node[box, fill=green!15] (keep6) at (2.7, 0.6) {Keep $V^\star$};
    \node[box, fill=purple!20] (out6) at (1.7, -0.6) {$x_t$};
    
    \draw[arrow] (prf6.south) -- (test6.north);
    
    \draw[arrow, rounded corners=2pt] (test6.south) -- (1.7, 1.1) -| (skip6.north);
    \node[font=\tiny] at (1.1, 1.25) {yes};
    \draw[arrow, rounded corners=2pt] (test6.south) -- (1.7, 1.1) -| (keep6.north);
    \node[font=\tiny] at (2.3, 1.25) {no};
    
    \draw[rounded corners=2pt] (skip6.south) |- (1.7, -0.1);
    \draw[rounded corners=2pt] (keep6.south) |- (1.7, -0.1);
    \draw[arrow] (1.7, -0.1) -- (out6.north);
    
    \node[note, green!50!black] at (1.7, -1.4) {Stochastic};
    \node[note, gray] at (1.7, -1.8) {$\checkmark$ Distortion-free};
\end{scope}

\begin{scope}[shift={(22.8, 0)}]
    \node[title] at (1.7, 3.6) {Dual-Key\\[-2pt]Routing};
    \node[box, fill=red!20] (route7) at (1.7, 2.6) {Route $\alpha$};
    \node[box, fill=orange!20] (kd7) at (0.5, 1.6) {PRF$(k^{(1)})$};
    \node[box, fill=orange!20] (kt7) at (2.9, 1.6) {PRF$(k^{(2)})$};
    \node[box, fill=green!15] (gum7) at (1.7, 0.6) {$\arg\max R_v^{1/p_v}$};
    \node[box, fill=purple!20] (out7) at (1.7, -0.6) {$x_t$};

    \draw[arrow] (route7.south) -- node[left, font=\tiny] {$1\!-\!\alpha$} (kd7.north);
    \draw[arrow] (route7.south) -- node[right, font=\tiny] {$\alpha$} (kt7.north);
    \draw[arrow] (kd7.south) -- (0.5, 0.95) -- (1.7, 0.95) -- (gum7.north);
    \draw[arrow] (kt7.south) -- (2.9, 0.95) -- (1.7, 0.95);
    \draw[arrow] (gum7.south) -- (out7.north);

    \node[note, green!50!black] at (1.7, -1.4) {Stochastic};
    \node[note, gray] at (1.7, -1.8) {$\checkmark$ Distortion-free};
\end{scope}

\end{tikzpicture}%
}%
}

\newcommand{\figGumbelMaxTikz}{%
\begin{tikzpicture}[
    node distance=0.8cm,
    box/.style={draw, rounded corners, minimum width=2cm, minimum height=0.7cm, align=center},
    tokenbox/.style={draw, rounded corners, minimum width=0.8cm, minimum height=0.6cm, align=center, font=\small},
    arrow/.style={-{Stealth[length=2mm]}, thick},
    brace/.style={decorate, decoration={brace, amplitude=5pt, raise=2pt}}
]

\node[font=\small] at (-1.8, 4.2) {Generated:};
\node[tokenbox, fill=gray!10] (t1) at (0, 4.2) {$x_1$};
\node[tokenbox, fill=gray!10] (t2) at (1, 4.2) {$x_2$};
\node[font=\small] at (1.8, 4.2) {$\cdots$};
\node[tokenbox, fill=gray!30] (tk1) at (2.8, 4.2) {$x_{t-k}$};
\node[font=\small] at (3.6, 4.2) {$\cdots$};
\node[tokenbox, fill=gray!30] (tk) at (4.4, 4.2) {$x_{t-1}$};

\draw[decorate, decoration={brace, amplitude=4pt, raise=3pt}] (2.4, 4.55) -- (4.8, 4.55);
\node[font=\scriptsize] at (3.6, 5.2) {$\mathbf{w} = (x_{t-k}, \ldots, x_{t-1})$};

\draw[decorate, decoration={brace, amplitude=4pt, mirror, raise=3pt}] (-0.4, 3.85) -- (4.8, 3.85);
\node[font=\scriptsize] at (2.2, 3.35) {LLM context (all tokens)};

\node[box, fill=blue!10] (llm) at (1.9, 2.2) {LLM};
\node[below=0.2cm of llm, font=\small] (prob) {$\vec{p}^{(t)} = (p_1, \ldots, p_V)$};

\draw[arrow] (1.9, 3.2) -- (llm);

\node[box, fill=orange!20, minimum width=3cm] (prf) at (8, 2.2) {PRF$(\mathbf{w}, K)$};
\node[below=0.2cm of prf, font=\small] (rand) {$\vec{R} = (R_1, \ldots, R_V)$};
\node[below=0.1cm of rand, font=\scriptsize, gray] {$R_v = \text{PRF}(v, \mathbf{w}, K)$};

\draw[arrow] (4.8, 4.8) -- (8, 4.8) -- (8, 2.55);
\node[font=\scriptsize] at (5.4, 5.1) {$\mathbf{w}$};

\node[font=\scriptsize] at (11, 4.2) {Candidates $v \in \mathcal{V}$:};
\node[tokenbox, fill=white, minimum width=0.6cm] at (10.2, 3.5) {$v_1$};
\node[tokenbox, fill=white, minimum width=0.6cm] at (10.9, 3.5) {$v_2$};
\node[font=\scriptsize] at (11.5, 3.5) {$\cdots$};
\node[tokenbox, fill=white, minimum width=0.6cm] at (12.1, 3.5) {$v_V$};

\draw[arrow] (11, 3.1) -- (11, 2.2) -- (prf.east);
\node[font=\scriptsize] at (11.7, 2.5) {each $v$};

\node[box, fill=green!15, minimum width=2.5cm] (select) at (5, 0.2) {$x_t = \arg\max_v R_v^{1/p_v}$};

\draw[arrow] (prob.south) -- (1.9, 0.2) -- (select.west);
\draw[arrow] (rand.south) -- (8, 0.2) -- (select.east);

\node[tokenbox, fill=purple!20, minimum width=1cm] (output) at (5, -1.2) {$x_t$};
\draw[arrow] (select) -- (output);

\begin{scope}[shift={(0, -3.2)}]
    \node[font=\small] at (1.5, 0.5) {$R_v^{1/p_v}$:};
    
    \fill[blue!30] (3, 0) rectangle (3.4, 0.6);
    \fill[green!50] (3.6, 0) rectangle (4.0, 1.2);
    \fill[blue!30] (4.2, 0) rectangle (4.6, 0.4);
    \fill[blue!30] (4.8, 0) rectangle (5.2, 0.8);
    \fill[blue!30] (5.4, 0) rectangle (5.8, 0.5);
    
    \node[font=\tiny] at (3.2, -0.2) {$v_1$};
    \node[font=\tiny] at (3.8, -0.2) {$v_2$};
    \node[font=\tiny] at (4.4, -0.2) {$v_3$};
    \node[font=\tiny] at (5.0, -0.2) {$v_4$};
    \node[font=\tiny] at (5.6, -0.2) {$v_5$};
    
    \draw[-{Stealth}, red, thick] (3.8, 1.5) -- (3.8, 1.3);
    \node[font=\tiny, red] at (3.8, 1.7) {max};
    
    \node[font=\scriptsize, align=left, gray] at (8.5, 0.5) {Selected token $v_2$\\has highest $R_v^{1/p_v}$};
\end{scope}

\end{tikzpicture}%
}

\newcommand{\Autoref}[1]{%
  \begingroup%
  \def\chapterautorefname{Chapter}%
  \def\sectionautorefname{Section}%
  \def\subsectionautorefname{Subsection}%
  \autoref{#1}%
  \endgroup%
}

\newtheorem{proposition}{Proposition}
\newtheorem{corollary}{Corollary}

\theoremstyle{definition}
\newtheorem{definition}{Definition}
\newtheorem{remark}{Remark}

\renewcommand{\vec}[1]{\boldsymbol{#1}}

\newcommand{\E}{\mathbb{E}}
\newcommand{\Prob}{\mathbb{P}}
\newcommand{\V}{\mathcal{V}}
\newcommand{\Hnull}{\mathcal{H}_0}
\newcommand{\Halt}{\mathcal{H}_1}
\newcommand{\sk}{K} %

\newcommand{\cmark}{\ding{51}}
\newcommand{\xmark}{\ding{55}}

\newcommand{\eg}{e.g.,\@ }
\newcommand{\ie}{i.e.,\@ }

\begin{document}

\maketitle

\begin{figure}[b!]
  \centering
  \begin{minipage}{\textwidth}
    \centering
    \begin{subfigure}[b]{0.35\textwidth}
      \vspace{0pt} 
      \centering
      \includegraphics[width=\textwidth]{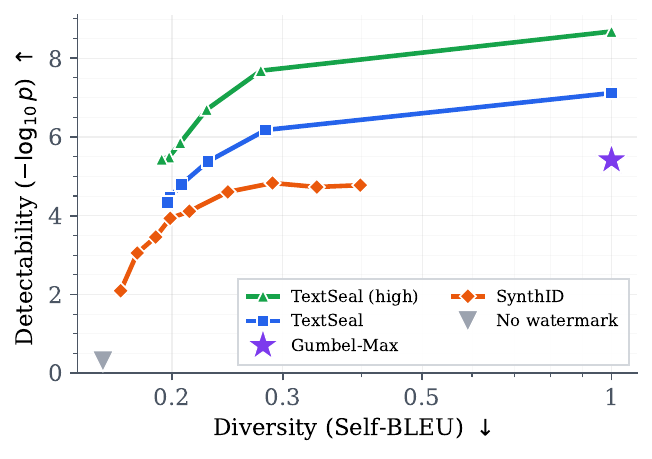}
      \caption{Diversity--detectability trade-off.}
      \label{fig:diversity-detectability}
    \end{subfigure}
    \hfill
    \begin{subfigure}[b]{0.35\textwidth}
      \vspace{0pt} 
      \centering
      \includegraphics[width=\textwidth]{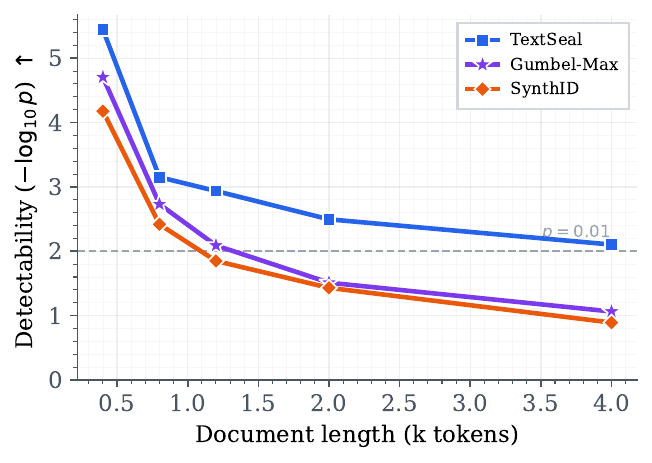}
      \caption{Detectability under dilution.} 
      \label{fig:dilution-catch}
    \end{subfigure}
    \hfill
\begin{subfigure}[b]{0.23\textwidth}
      \vspace{0pt} 
      \centering
      \scriptsize
      \setlength{\tabcolsep}{4pt}
      \resizebox{\textwidth}{!}{
        \begin{tabular}[t]{l cc} 
        \toprule
        \textbf{Task} & \textbf{None} & \textbf{TextSeal} \\
        \midrule
        AIME & 40.1 & 41.1 \\
        MATH & 79.8 & 79.8 \\
        GSM8K & 95.4 & 96.0 \\
        HumanEval & 97.0 & 93.3 \\
        MBPP & 50.2 & 49.2 \\
        ARC-C & 88.3 & 88.5 \\
        ARC-E & 93.4 & 93.7 \\
        GPQA & 50.5 & 50.0 \\
        HellaSwag & 94.7 & 94.8 \\
        MMLU & 49.2 & 51.5 \\
        SQA & 15.8 & 16.0 \\
        WinoGrande & 93.2 & 93.5 \\
        \midrule
        \textbf{Average} & \textbf{70.6} & \textbf{70.6} \\
        \bottomrule
        \end{tabular}
      }
      \caption{Performance.}
      \label{fig:performance-tab}
    \end{subfigure}
  \end{minipage}
  
  {
  \caption{\textbf{TextSeal achieves state-of-the-art detectability while preserving generation diversity and downstream performance} (Qwen3.5-27B).
  \textbf{(a)}~TextSeal strictly dominates SynthID across the diversity-detectability frontier (ELI5, 400 tokens, $T{=}0.8$, top-$p{=}0.9$).
  \textbf{(b)}~Localized detection remains confident even at $10\times$ dilution, where global baselines fail.
  \textbf{(c)}~Accuracy across 12 benchmarks is preserved ($T{=}0.6$).}
  \label{fig:main-results}
  \par
  }
\end{figure}

\section{Introduction}

The rapid adoption of LLMs in production systems has created a need for reliable provenance mechanisms.
Watermarking, by embedding an imperceptible, algorithmically-detectable signal during the generation, addresses several needs at once: detecting AI-generated content, complying with regulations that mandate machine-detectable marking of AI outputs~\citep{EuropeanAIAct, EUCodeOfPractice2026}, and enabling applications such as monitoring model output usage, preventing self-training on generated data, and detecting unauthorized distillation~\citep{sander2024watermarking, sablayrolles2020radioactive}.

For production deployment, it is highly desirable to use \emph{distortion-free} watermarking, which ensures that next-token selection follows exactly the same distribution as that produced by the LLM.
It preserves the exact decoding configuration (temperature, top-$p$) the model was tuned for, embedding the watermark at zero cost to any individual generation's quality.
A recent large-scale comparison~\citep{fernandez2025good} shows that the Gumbel-max watermark~\citep{aaronson2023watermarking} achieves the best detectability-quality Pareto frontier by far among other methods, e.g., green-red list~\citep{kirchenbauer2023watermark}, SynthID~\citep{dathathri2024scalable}, DiPMark~\citep{wu2023dipmark}.
However, Gumbel-max has one important drawback: it is fully \emph{deterministic} (a fixed prompt and secret key always produce the same output, eliminating diversity), which can in turn trigger degenerate loops when repeated n-grams used for hashing cause the pseudo-random function to lock onto the same token (\autoref{rem:deduplication}).
For instance, SynthID (deployed in Google's Gemini) resolves the determinism while remaining distortion-free, with a tournament-sampling design.

\emph{TextSeal} is a distortion-free, non-deterministic watermark for LLMs.
It builds upon the Gumbel-max framework and introduces three core improvements:
\begin{enumerate}
    \item \textbf{Dual-Key Generation:} We overcome determinism by randomly alternating between two secret keys during generation, restoring diversity at low cost to detection power.
    This natively supports speculative decoding and Multi-Token Prediction (MTP) without additional latency (\autoref{subsec:dual_key}).
    \item \textbf{Entropy-Weighted Detection:} We introduce tests tailored to the dual-key generation, that may leverage the entropy of a proxy model, and moment-matched Gamma approximations to have calibrated $p$-values (\autoref{subsec:entropy_detection}).
    \item \textbf{Localized Detection:} We identify individual watermarked segments within a document via a multi-region geometric cover search, dramatically boosting detection under dilution (\autoref{subsec:localization}).
\end{enumerate}

As summarized in \autoref{fig:main-results}, TextSeal achieves state-of-the-art detectability while offering a superior diversity-detectability trade-off compared to existing methods.
Our localized detection is robust to dilution within long documents, and TextSeal preserves the downstream capabilities of the model across 12 complex benchmarks.
TextSeal adds only ${\le}0.3\%$ sampling overhead ($3\times$ faster than SynthID; \autoref{sec:exp3}).
Beyond provenance, TextSeal is \emph{radioactive}~\citep{sander2024watermarking, sablayrolles2020radioactive}: the watermark signal transfers through model distillation, meaning that a student model trained on watermarked outputs inherits a detectable trace.
This provides a practical safeguard against unauthorized distillation and enables monitoring of how model outputs are used downstream (in training pipelines, RAG systems, or by competitors).
We demonstrate this experimentally in \autoref{sec:learnability}.

The paper is organized as follows.
\Autoref{sec:background} presents the technical background .
\Autoref{sec:method} describes the TextSeal method.
\Autoref{sec:experiments} presents the main experimental results.
\Autoref{sec:ablations} provides ablation studies and additional analyses.
\Autoref{sec:learnability} demonstrates watermark transfer through distillation.

\section{Background and Related Work}
\label{sec:background}

\subsection{LLM Watermarking}
Early text watermarking relied on edit-based methods~\citep{topkara2005natural, topkara2006hiding} with low robustness.
For LLMs, two concurrent approaches appeared after ChatGPT: green-red list~\citep{kirchenbauer2023watermark} and Gumbel-max sampling~\citep{aaronson2023watermarking}, both using pseudorandom seeds from a secret key and preceding tokens, enabling lightweight detection without access to the model.
Some subsequent work explores multi-bit watermarking~\citep{fernandez2023three, yoo2024advancing, qu2024provably}, undetectable constructions~\citep{christ2023undetectable, kuditipudi2023robust}, low-entropy optimizations~\citep{lee2023wrote, huang2023optimal}, adaptive green-red variants~\citep{wang2025morphmark}, distillation for open-weights model~\citep{gu2023learnability}, etc.
See \autoref{app:method_overview} for detailed scheme descriptions.
Semantic watermarks~\citep{liu2023semantic, liu2024adaptive, hou2023semstamp} offer increased robustness to adversaries, but require auxiliary semantic encoders.
This makes them harder to deploy, which is the reason why we do not consider them in the remaining of the work.

Beyond detection, watermark radioactivity~\citep{sander2024watermarking} has been leveraged for data protection (RAG~\citep{jovanovic2025ward}, contamination~\citep{sander2025detecting}, copyright~\citep{zhang2025leave}), which we extend in \autoref{sec:learnability} to reasoning-trace distillation.

\subsection{Distortion-Freeness and Choice of Baselines}
In the literature on LLM watermarking, schemes are typically divided into two families: \textit{distortionary} (biased) and \textit{distortion-free} (unbiased/distribution-preserving). The key distinction is whether the watermark alters text quality.
A watermarking sampler is \emph{distortion-free} (or \emph{non-distortionary}) if, in expectation over the random seed, it outputs each token with exactly its original LLM probability: no token is favored or suppressed by the watermark.
This corresponds to the \emph{single-token non-distortion} property of \citet{dathathri2024scalable}; formal definitions and the stronger \emph{single-sequence} variant are given in \autoref{app:distortion_definitions}.

For instance, green-red list~\citep{kirchenbauer2023watermark} and low-entropy filtering methods, like SWEET \citep{lee2023wrote} which skips watermarking on low-entropy tokens, are \emph{distortionary}: they shift the output distribution, degrading generation.
MorphMark~\citep{wang2025morphmark} adaptively scales the green-red bias based on the natural green-list probability mass, reducing distortion in low-entropy contexts, but remains distortionary since it still applies a logit bias.
Gumbel-max~\citep{aaronson2023watermarking}, Permute-and-Flip~\citep{zhao2024permute}, DiPMark~\citep{wu2023dipmark} (non-distortionary green-red via pseudorandom permutations), SynthID-Text~\citep{dathathri2024scalable} (deployed in Google Gemini and detailed in \autoref{app:synthid_details}), and WaterMax~\citep{giboulot2024watermax} (multiple generations per query, impractical for production) are non-distortionary methods.

Aligned with recent large-scale evaluations~\citep{fernandez2025good}, we found that Gumbel-max and SynthID achieved the best detectability-quality Pareto frontier. 
Therefore, we build TextSeal upon Gumbel-Max (which we detail next), and compare against these two baselines.
Because all three are single-token non-distortionary, we can fix the LLM, temperature, and top-$p$, and vary only their watermark-specific diversity parameter, isolating the watermark's effect.

\subsection{Gumbel-Max Watermarking}
\label{subsec:gumbel_mechanism}

We consider a language model generating a sequence of tokens. At each time step $t$, the model predicts a probability distribution $\vec{p}^{(t)} = (p_1, \dots, p_{|\V|})$ over the vocabulary $\V$. Let $\sk$ be a secret key used for watermarking and $h_t$ the context (history of tokens) at step $t$. The goal of watermarking is to select a token $x_t$ such that its selection is statistically correlated with a pseudo-random value derived from $h_t$ and $\sk$, while preserving the original distribution $\vec{p}$ (single-token non-distortion, \autoref{def:single_token_distortion_free}). This concept was introduced for LLMs by \citet{aaronson2023watermarking} with the Gumbel-max scheme.

\subsubsection{Gumbel-max mechanism}

The standard Gumbel watermarking ensures detectability by making the sampling process deterministic given the secret key and watermark context (see App.~Fig~\ref{fig:gumbel-max} for an overview).

\paragraph{Embedding.} At each generation step $t$, the watermark operates on a \emph{watermark context window} $\mathbf{w} = (x_{t-k}, \ldots, x_{t-1})$, consisting of the $k$ last generated tokens. This window, together with the secret key $\sk$, seeds a Pseudo-Random Function (PRF) that assigns a pseudo-random value $R_v \in [0,1]$ to every candidate token $v$ in the vocabulary:
$$R_v = \text{PRF}(v, \mathbf{w}, \sk).$$
The PRF is deterministic: for a given context window, secret key, and candidate token, it always returns the same value. However, its output is indistinguishable from uniform randomness to anyone who does not know $\sk$ (see \autoref{app:hash} for implementation details of the PRF).

The watermark then selects the next token by combining these pseudo-random values with the LLM's probability distribution. Concretely, it picks:
$$x_t = \arg\max_{v \in \V} \; R_v^{1/p_v^{(t)}},$$
where $p_v^{(t)}$ is the probability assigned to token $v$ by the LLM at step $t$. 
This balances two factors: tokens with high model probability $p_v$ are naturally favored, but among tokens of similar probability, the one with the highest PRF value $R_v$ wins. 
This creates a statistical correlation between the chosen tokens and the secret key, which can later be detected.

This selection rule is equivalent to two well-known sampling schemes:
\begin{itemize}
    \item \textbf{Inverse Transform Method:} Sort tokens by descending probability, compute the CDF, and select the token corresponding to the quantile $u = \text{PRF}(\mathbf{w}, \sk)$.
    \item \textbf{Gumbel-Max Trick:} Sample Gumbel noise $G_v = -\log(-\log(R_v))$ for each token and select $x_t = \arg\max_v (G_v + \log p_v^{(t)})$.
\end{itemize}
Put differently, the watermarking scheme samples from the original distribution $\vec{p}$, but uses a deterministic source of randomness derived from the secret key and context, instead of true randomness.
This is what gives it the single-token non-distortion property (\autoref{def:single_token_distortion_free}), as formalized in \autoref{prop:sampling} below.

\paragraph{Detection.} Given a text, the detector re-computes the PRF values using the secret key and the preceding tokens, then checks whether the score is higher than expected by chance.

We denote by $x^{(1)}, \ldots, x^{(T)}$ the sequence of tokens in the text, and by $\vec{R}^{(t)} \in [0,1]^{|\mathcal{V}|}$ the key random vector re-computed from the $k$ preceding tokens and the secret key.
We define $R_t := R^{(t)}_{x^{(t)}}$, the PRF value of the token selected at time-step $t$.
The detection score is calculated as:
$$ S_T=-\sum_{t=1}^{T} \ln (1-R_t). $$
Intuitively, watermarked tokens tend to have high $R_t$ values (since the selection rule favors them), making $-\ln(1-R_t)$ large. For unwatermarked text, $R_t$ values are essentially random, yielding a lower score. 
A statistical test then determines whether the observed score is significantly higher than expected under the null hypothesis $\Hnull$ (no watermark).
In practice, we choose a threshold $\tau$ (depending on the desired false positive rate) and flag a text as watermarked if $S_T > \tau$.

\subsubsection{Theoretical Properties}

The following results formalize the two key guarantees of Gumbel-max watermarking: single-token non-distortion and detectability. The proofs are not original contributions; they were presented by \citet{aaronson2023watermarking} and formalized by \citet{fernandez2023three}. We provide them in \autoref{app:gumbel_proofs}.

\subsubsection{Single-Token Non-Distortion}

\begin{proposition}[Sampling probability]
\label{prop:sampling}
Consider a discrete distribution $\vec{p}=(p_1,\ldots,p_V)$
and $V=|\V |$ random variables $\vec{R} = (R_1,\ldots,R_V)$ s.t. $R_v\overset{iid}{\sim}\mathcal{U}_{[0,1]}$.
Let $V^\star = \arg \max_v R_v^{1/p_v}$.
Then: $$\Prob(V^\star=v) = p_v.$$
\end{proposition}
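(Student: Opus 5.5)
The plan is to transform the uniforms into exponential random variables, where the argmax becomes an argmin of independent exponentials with rates $p_v$. The standard ``exponential race'' computation then gives the result.

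First, dispose of degenerate coordinates. If $p_v=0$, interpret $R_v^{1/p_v}=0$ almost surely (since $R_v<1$ a.s.), so such tokens are never selected, consistent with $\Prob(V^\star=v)=0$. We may therefore restrict attention to $S=\{v: p_v>0\}$. Ties among the $R_v^{1/p_v}$ occur with probability zero because the variables are independent and continuous, so $V^\star$ is almost surely well defined.

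Next, set $E_v := -\ln R_v$. Since $R_v\sim\mathcal{U}_{[0,1]}$, we have $\Prob(E_v>x)=\Prob(R_v<e^{-x})=e^{-x}$, so the $E_v$ are i.i.d.\ $\mathrm{Exp}(1)$. The map $r\mapsto \ln r$ is strictly increasing, hence
\[
\arg\max_v R_v^{1/p_v}=\arg\max_v \tfrac{1}{p_v}\ln R_v=\arg\min_v \tfrac{E_v}{p_v}.
\]
Define $Y_v := E_v/p_v$. Then $\Prob(Y_v>y)=e^{-p_v y}$, so the $Y_v$ are independent with $Y_v\sim\mathrm{Exp}(p_v)$.

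Then compute directly. Conditioning on $Y_v=y$ and using independence,
\[
\Prob(V^\star=v)=\int_0^\infty p_v e^{-p_v y}\prod_{u\neq v}e^{-p_u y}\,dy=p_v\int_0^\infty e^{-y\sum_{u}p_u}\,dy=\frac{p_v}{\sum_u p_u}=p_v,
\]
using $\sum_u p_u=1$. An equivalent route works directly with uniforms: $\Prob(V^\star=v)=\int_0^1 \prod_{u\ne v}\Prob(R_u< r^{p_u/p_v})\,dr=\int_0^1 r^{(1-p_v)/p_v}dr=p_v$.

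No step is a serious obstacle. The only points needing care are the degenerate case $p_v=0$ and the almost-sure uniqueness of the argmax, both handled in the first step. The change of variables is the key idea, and the remainder is the routine minimum-of-exponentials calculation.
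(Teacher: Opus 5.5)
Your proposal is correct and follows essentially the same route as the paper: it also sets $Z_v = -\frac{1}{p_v}\ln R_v \sim \mathcal{E}(p_v)$, rewrites $V^\star$ as $\arg\min_v Z_v$, and uses the competing-exponentials fact $\Prob(V^\star=v)=p_v/\sum_j p_j = p_v$. Beyond the paper, you prove that fact with the explicit integral rather than citing it, and you handle the $p_v=0$ and tie cases, which the paper leaves implicit.
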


\begin{corollary}
\label{cor:beta}
Conditionally on $V^\star = v$, $R_{V^\star} \sim \text{Beta}(1/p_v, 1)$.
\end{corollary}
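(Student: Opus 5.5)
We prove the corollary by computing the conditional distribution function of $R_{V^\star}$ directly, using the same change of variables that underlies \autoref{prop:sampling}. Set $U_v := R_v^{1/p_v}$ for each $v$ with $p_v>0$; tokens with $p_v=0$ can never be selected and are dropped. Since $R_v \sim \mathcal{U}_{[0,1]}$, we have $\Prob(U_v \le u) = \Prob(R_v \le u^{p_v}) = u^{p_v}$ for $u\in[0,1]$, so $U_v$ has density $p_v u^{p_v-1}$. The $U_v$ are independent because the $R_v$ are.

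For fixed $v$ and $r\in[0,1]$, the event $\{V^\star=v,\ R_{V^\star}\le r\}$ equals $\{U_v \ge U_w\ \forall w\neq v,\ U_v \le r^{1/p_v}\}$; ties occur with probability zero. Conditioning on $U_v=u$ and using independence, the probability of this event is
\[
\int_0^{r^{1/p_v}} p_v u^{p_v-1}\prod_{w\neq v} u^{p_w}\,du \;=\; \int_0^{r^{1/p_v}} p_v u^{\sum_w p_w -1}\,du \;=\; p_v \int_0^{r^{1/p_v}} du \;=\; p_v\, r^{1/p_v},
\]
where we used $\sum_w p_w = 1$. This exponent collapse is the only nontrivial step. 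Setting $r=1$ recovers $\Prob(V^\star=v)=p_v$, which is \autoref{prop:sampling}.

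Dividing by $\Prob(V^\star=v)=p_v$ gives
\[
\Prob(R_{V^\star}\le r \mid V^\star=v) = r^{1/p_v}, \qquad r\in[0,1].
\]
This is the CDF of $\mathrm{Beta}(1/p_v,1)$, whose density is $\tfrac{1}{p_v} r^{1/p_v-1}$, which proves the claim.

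The argument involves no real obstacle. The only care needed is to handle tokens with $p_v=0$ and ties by noting they are measure-zero events, and to make sure the product over $w\ne v$ is taken of the CDFs $u^{p_w}$ of the $U_w$, not of the $R_w$.
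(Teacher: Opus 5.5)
Your proof is correct, but it takes a different route from the paper.

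\textbf{The paper's route.} It passes to $Z_v = -\frac{1}{p_v}\ln R_v \sim \mathcal{E}(p_v)$, so that $V^\star$ is the index of the minimum in a race of independent exponentials. It then invokes, without proof, the standard fact that the winner of such a race is independent of the winning time $\underline{Z} = \min_v Z_v \sim \mathcal{E}(1)$. Conditioning on $V^\star = v$ therefore leaves $Z_v \sim \mathcal{E}(1)$, i.e.\ $R_v = e^{-p_v E}$, whose density is $\frac{1}{p_v} r^{1/p_v-1}$.

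\textbf{Your route.} You stay in the multiplicative domain and compute $\Prob(V^\star = v,\ R_{V^\star}\le r) = p_v r^{1/p_v}$ directly. The cancellation $u^{p_v-1}\prod_{w\neq v}u^{p_w} = u^{0}$ does all the work. Since $\{R_v \le r\} = \{U_v \le r^{1/p_v}\}$, your computation says $\Prob(V^\star = v,\ \max_w U_w \le s) = p_v s$. This is exactly the factorization the paper cites: the winner is independent of the maximum, and the maximum is uniformly distributed. Applying $-\ln$ turns it into the competing-exponentials property, because $Z_v = -\ln U_v$ and hence $\underline{Z} = -\ln \max_w U_w$.

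\textbf{What each buys.} Your argument is self-contained and recovers \autoref{prop:sampling} as the case $r=1$. It also handles zero-probability tokens and ties explicitly. The paper's argument is shorter once the lemma is granted. It directly produces the representation $R_t = \exp(-p_t E)$ reused in the proof of \autoref{prop:expected_score}. Your CDF $r^{1/p_v}$ gives that representation equally well, since it means $R = U^{p_v}$ with $U$ uniform.
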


\autoref{prop:sampling} establishes that Gumbel-max is single-token non-distortionary (\autoref{def:single_token_distortion_free}): in expectation over the random key, the selected token follows exactly the LLM's original distribution. 
The corollary characterizes the distribution of the PRF value for the selected token, which is useful for the detection analysis below.

\begin{remark}[Repeated $n$-grams and single-sequence non-distortion]
\label{rem:deduplication}
Single-token non-distortion (\autoref{prop:sampling}) does not guarantee that a full \emph{sequence} is distributed as the original LLM.
\citet{dathathri2024scalable} define a stronger property: \emph{single-sequence non-distortion} (\autoref{def:single_sequence_distortion_free}), requiring that the joint probability of a complete response is preserved: $\E_k[P_{\mathrm{wm}}(\vec{y} \mid \vec{x}, k)] = p_{\mathrm{LM}}(\vec{y} \mid \vec{x})$.
This is violated whenever the same $k$-gram context repeats within a generation: the PRF produces identical values, so the same token is always re-selected, creating unwanted correlations in the sequence distribution.

To restore single-sequence non-distortion, we can apply \emph{repeated context masking}~\citep{dathathri2024scalable}: a set $\mathcal{S}$ of seen context windows is maintained per generation; on the first occurrence the watermark is applied, on subsequent occurrences the sampler falls back to standard unwatermarked sampling.
Our main evaluations do not enforce this protocol (except in \autoref{app:variance_analysis}), as repeated $k$-grams are rare in practice with $k \geq 3$.
\end{remark}

\subsubsection{Detectability}

\begin{proposition}[$p$-value under $\Hnull$]
\label{prop:pvalue}
Under $\Hnull$ (text not watermarked), the score $S_T$ follows a $\Gamma(T,1)$ distribution. The $p$-value associated to a score $s$ is:
\begin{equation}
\text{$p$-value}(s) = \Prob(S_T>s \mid \Hnull) = \frac{\Gamma(T,s)}{\Gamma(T)},
\end{equation}
where $\Gamma(T,s)$ is the upper incomplete gamma function.
\end{proposition}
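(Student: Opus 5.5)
The plan is to establish the null distribution of each summand, then combine them. Under $\Hnull$ the text is generated independently of the secret key $\sk$. We model the PRF as a truly random function, as the paper does when it says its output is indistinguishable from uniform. The idealization is this: for each $t$, the value $R_t = R^{(t)}_{x^{(t)}}$ is uniform on $[0,1]$, and the family $(R_t)_{t=1}^T$ is independent.

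\textbf{Step 1 (uniformity of each $R_t$).} We condition on the whole token sequence. The tokens do not depend on $\sk$, so $R_t$ is a fresh PRF evaluation at the input $(x^{(t)}, \mathbf{w}_t, \sk)$, and under the random-oracle idealization it is $\mathcal{U}_{[0,1]}$.

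\textbf{Step 2 (independence).} Distinct inputs $(x^{(t)}, \mathbf{w}_t)$ give independent PRF outputs. Here we must assume that the pairs (window, token) are pairwise distinct across $t$, since repeated $k$-grams would produce identical $R_t$. This is the same deduplication issue raised in \autoref{rem:deduplication}. We therefore state the result for a deduplicated set of scored tokens, or assume no repeated tuples, and take $T$ to be the number of such distinct tuples.

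\textbf{Step 3 (each summand is $\mathrm{Exp}(1)$).} For $U\sim\mathcal{U}_{[0,1]}$ and $s\ge 0$,
\[
\Prob(-\ln(1-U)>s)=\Prob(1-U<e^{-s})=e^{-s},
\]
so $-\ln(1-R_t)\sim\mathrm{Exp}(1)=\Gamma(1,1)$.

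\textbf{Step 4 (sum and tail).} A sum of $T$ i.i.d.\ $\Gamma(1,1)$ variables is $\Gamma(T,1)$. This follows from the moment generating function $(1-\lambda)^{-T}$, or by induction via convolution. The survival function of $\Gamma(T,1)$ is
\[
\frac{1}{\Gamma(T)}\int_s^\infty u^{T-1}e^{-u}\,du=\frac{\Gamma(T,s)}{\Gamma(T)},
\]
which gives the stated $p$-value.

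The main obstacle is Step 2. It requires rigorously justifying independence, namely the handling of repeated context/token tuples and the idealization of the PRF as truly random. Once that is settled, the distributional calculation is routine.
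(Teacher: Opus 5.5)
Your proposal is correct and is the standard argument: uniform PRF values under $\Hnull$, $-\ln(1-U)\sim\mathrm{Exp}(1)$, a sum of $T$ i.i.d.\ exponentials is $\Gamma(T,1)$, and its survival function is $\Gamma(T,s)/\Gamma(T)$. The paper never writes out a proof of this proposition. Its Gumbel-max appendix proves only the sampling, Beta and expected-score results and defers this one to Aaronson and Fernandez et al.; it does, however, use exactly your ingredients elsewhere, namely ``$R_t\sim\mathcal{U}[0,1]$ for unwatermarked tokens'' and ``independent exponentials with mean $1$.''

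Your added hypothesis that the (window, token) tuples be pairwise distinct is the right one. It is genuinely needed for the statement as literally written, and the paper imposes it only operationally, through the deduplication step in its experimental setup.
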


This provides an \emph{exact} false positive rate: given any desired significance level $\alpha$, we can compute a detection threshold $\tau$ such that the probability of wrongly flagging unwatermarked text as watermarked is exactly $\alpha$. 

\begin{proposition}[Expected score under $\Halt$]
\label{prop:expected_score}
Under $\Halt$ (text is watermarked),
$\displaystyle \mathbb{E}(S_T) \geq T +  \left( \frac{\pi^2}{6} -1 \right) H_T$,
where $H_T = - \sum_{t=1}^T p_t\ln(p_t)$ is the entropy of the completion.
\end{proposition}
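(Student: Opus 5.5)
The proof reduces to a per-step bound, which I then sum over $t$. I read $H_T$ as the sum over $t$ of the Shannon entropies of the next-token distributions $\vec{p}^{(t)}$; by linearity of expectation it suffices to show that at each step $\mathbb{E}[-\ln(1-R_t)] \geq 1 + (\frac{\pi^2}{6}-1)\, H(\vec{p}^{(t)})$. Fix a step and drop the superscript.

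\textbf{Conditioning on the selected token.} By \autoref{prop:sampling}, $\Prob(V^\star=v)=p_v$. By \autoref{cor:beta}, conditionally on $V^\star=v$, $R_{V^\star}\sim\text{Beta}(1/p_v,1)$. Hence
\[
\mathbb{E}[-\ln(1-R_t)] = \sum_v p_v\, \phi(1/p_v), \qquad \phi(a) := \mathbb{E}_{R\sim \text{Beta}(a,1)}[-\ln(1-R)].
\]
For $\text{Beta}(a,1)$ the density is $a r^{a-1}$. Substituting $u=1-r$ and using $-\mathbb{E}[\ln B]=\psi(\alpha+\beta)-\psi(\beta)$ for $B\sim\text{Beta}(\alpha,\beta)$ (here $1-R\sim\text{Beta}(1,a)$) gives
\[
\phi(a)=\psi(a+1)-\psi(1)=\psi(a+1)+\gamma,
\]
which is the generalized harmonic number $H_a$.

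\textbf{Reducing to a one-variable inequality.} Since $\sum_v p_v=1$ and $H(\vec{p})=\sum_v p_v\ln(1/p_v)$, the per-step claim follows from the pointwise inequality
\[
f(x):=\psi(1+x)+\gamma-1-c\ln x\;\geq\;0 \quad\text{for all } x\geq 1,\qquad c:=\tfrac{\pi^2}{6}-1,
\]
applied with $x=1/p_v\geq 1$ and averaged with weights $p_v$. At the endpoint, $f(1)=\psi(2)+\gamma-1=0$ and $f'(1)=\psi'(2)-c=0$, using $\psi'(2)=\frac{\pi^2}{6}-1$. So $x=1$ is exactly the equality case, which explains the constant $c$.

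\textbf{The main step: showing $f\ge 0$ on $[1,\infty)$.} Since $f(1)=0$, it suffices to show $f'(x)\geq 0$, that is, $g(x):=x\,\psi'(1+x)\geq c=g(1)$ for $x\geq 1$. This is where I expect the real work. Crude asymptotic bounds $\psi'(x)>1/x+1/(2x^2)+\dots$ lose too much near $x=1$: truncated, they give roughly $0.633<c\approx 0.645$.

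My first approach is the series $\psi'(1+x)=\sum_{k\geq1}(x+k)^{-2}$, which gives
\[
g(x)=\sum_{k\ge1}\frac{x}{(x+k)^2},\qquad g'(x)=\sum_{k\ge1}\frac{k-x}{(x+k)^3}.
\]
To show $g'\geq 0$, I would compare this sum with the integral $\int_0^\infty \frac{t-x}{(x+t)^3}\,dt = 0$. The summand is concave-then-convex in $k$, so the comparison needs care, possibly via Euler--Maclaurin with an explicit remainder.

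As a fallback, I would use the integral representation $g(x)=x\int_0^\infty \frac{t e^{-(1+x)t}}{1-e^{-t}}\,dt$, substitute $s=xt$, and show monotonicity in $x$ directly. If neither route gives monotonicity, a weaker but sufficient route is:
\begin{itemize}
\item verify $g\geq c$ numerically-rigorously on a compact interval $[1,x_0]$, using monotone bounds on $\psi'$ and interval arithmetic;
\item use the asymptotic bound $g(x)\geq 1-\frac{1}{2x}$, which already exceeds $c$ for $x\geq 2$, on $[x_0,\infty)$.
\end{itemize}
This establishes $f'\geq0$ and hence the proposition.
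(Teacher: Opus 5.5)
Your proposal is correct, and it takes a different and sharper route than the paper. The paper uses no digamma function and no monotonicity argument. It conditions on the emitted token via the Beta corollary, so $R_t=e^{-p_tE}$ with $p_t$ the probability of the \emph{realized} token. It then writes $\mathcal{H}_{1/p_t}-1=\sum_{n\ge1}\bigl(\frac{1}{n+1}-\frac{1}{n+1/p_t}\bigr)$ and bounds each term from below by $\frac{-p_t\ln p_t}{(n+1)^2}$, using $1/p_t-1\ge-\ln p_t$ and $\frac{n+1}{n+1/p_t}\ge p_t$. Summing $\sum_{n\ge1}(n+1)^{-2}=\frac{\pi^2}{6}-1$ produces the constant.

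That argument is completely elementary, but it only gives the per-token inequality $\mathcal{H}_x\ge 1+c\,\frac{\ln x}{x}$, with your $c=\frac{\pi^2}{6}-1$. Your inequality, $\mathcal{H}_x\ge 1+c\ln x$ on $[1,\infty)$, is much stronger for large $x$. Correspondingly, the paper's $H_T$ is the sum of $-p_t\ln p_t$ over realized tokens, not a sum of Shannon entropies. Averaging the paper's bound over $V^\star\sim\vec p$ would only give $1+c\sum_v p_v^2\ln(1/p_v)$.

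Your inequality is tangent at $x=1$, which is where $c$ comes from. It yields the bound with the genuine entropy $H(\vec p^{(t)})$. It also implies the paper's conditional version, since $\ln x\ge\frac{\ln x}{x}$ for $x\ge1$. The cost is the monotonicity of $g(x)=x\psi'(1+x)$.

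That step closes at once by your integral-representation route, so you can drop the Euler--Maclaurin and interval-arithmetic fallbacks. Write $g(x)=x\int_0^\infty\frac{t\,e^{-xt}}{e^{t}-1}\,dt=\int_0^\infty\frac{s\,e^{-s}}{x(e^{s/x}-1)}\,ds$. For fixed $s>0$, $x(e^{s/x}-1)=s+\sum_{n\ge2}\frac{s^n}{n!\,x^{n-1}}$ is decreasing in $x$. Hence $g$ is increasing and $g(x)\ge g(1)=c$ for $x\ge1$.
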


\emph{This bound reveals that detectability scales with the \emph{entropy} of the generated text.} When the LLM is uncertain (high entropy), many tokens have non-negligible probability, giving the watermark more room to influence the selection and producing a stronger signal. Conversely, when the model is very confident (low entropy), the top token dominates regardless of the PRF values, and the watermark signal is weak. This entropy dependence motivates the entropy-weighted detection of \autoref{subsec:entropy_detection}.

\section{Method: TextSeal}
\label{sec:method}

\begin{figure*}[b!]
\centering
\figTextSealFull
\caption{TextSeal overview. \textbf{Left (Embedding):} At each step, one of two keys is randomly selected (probability $\alpha$ for $k^{(2)}$, $1{-}\alpha$ for $k^{(1)}$), and the token is chosen via Gumbel-Max using the selected key's PRF (\autoref{subsec:dual_key}). \textbf{Right (Detection):} Scores are computed under both keys and fused per-token, weighted by entropy (\autoref{subsec:entropy_detection}), then a geometric cover search localizes watermarked regions (\autoref{subsec:localization}).}
\label{fig:textseal-full}
\end{figure*}

TextSeal addresses three key limitations of the standard Gumbel-max watermark: its deterministic outputs, its suboptimal detection in mixed-entropy text, and the lack of localized detection capability.
We describe each improvement below, and present an overview in \autoref{fig:textseal-full}.

\subsection{Dual-Key Routing for Diversity and Speculative Decoding}
\label{subsec:dual_key}

Gumbel-Max is deterministic: for a given context and secret key, the output token is fixed, so regenerating the same prompt always produces identical text, limiting user experience and triggering repetitive loops~\citep{holtzman2019curious}.
TextSeal addresses this by maintaining two secret keys $k^{(1)}$ and $k^{(2)}$ that restore diversity while preserving both detectability and single-token non-distortion (\autoref{def:single_token_distortion_free}).

\paragraph{Embedding.}
At each generation step $t$, one key is selected at random: $k^{(1)}$ with probability $1-\alpha$, or $k^{(2)}$ with probability $\alpha$. The token is produced via Gumbel-Max using the selected key's PRF:
\begin{equation}
\label{eq:dual_key_embedding}
    x_t = \arg\max_v \; R_v^{(k), \; 1/p_v^{(t)}}, \quad k \in \{k^{(1)}, k^{(2)}\}
\end{equation}
The routing probability $\alpha \in [0, 0.5]$ controls the diversity-detectability trade-off: $\alpha=0$ is the original single-key scheme with maximum detectability but no diversity, while $\alpha=0.5 $ routes evenly between the two keys.
Dual-key routing also doubles the tolerance to repeated $n$-grams before single-sequence non-distortion is compromised (\autoref{rem:deduplication}): when a context window appears for the first time it is watermarked with one key; if it recurs, the other key is used; only on a third occurrence must the sampler fall back to unwatermarked sampling.

\paragraph{Detection.}
The detector does not know which key generated each token. To capture signal from both potential paths, we compute scores under both keys and aggregate them as a weighted sum:
\begin{equation}
\label{eq:early_fusion}
    s_i = (1-\alpha) \cdot s_i^{(1)} + \alpha \cdot s_i^{(2)}, \quad \text{where } s_i^{(j)} = -\ln(1 - R_i^{(j)})
\end{equation}
We call this strategy ``early fusion'', in contrast with methods that would compute two p-values and aggregate them later.
Under $\Hnull$, $s_i$ is a weighted combination of independent exponentials with mean $1$ and variance $\theta_R = \alpha^2 + (1-\alpha)^2$. 
The final p-value is computed using the unified framework in \autoref{subsec:entropy_detection}.
We show that this early-fusion approach is better than Fisher or Bonferroni aggregations in~\autoref{subsec:early_vs_late_proof}, and  support it empirically in~\autoref{subsec:diversity}.

\paragraph{Compatibility with speculative decoding.}
In speculative decoding~\citep{leviathan2023fast}, a draft model $P_D$ proposes tokens accepted by a target model $P_T$. With dual-key watermarking, the draft uses $k^{(1)}$ and rejected correction tokens use $k^{(2)}$. The draft acceptance rate naturally determines the routing ratio $\alpha$. Since this rate varies by domain and model pair, $\alpha$ can be calibrated at detection time or set to $0.5$ as a robust default that makes detection method invariant to the true routing ratio. 
This extends naturally to Multi-Token Prediction (MTP)~\citep{gloeckle2024better}, where all $K$ auxiliary heads share $k^{(1)}$ and fall back to $k^{(2)}$, preventing the fracturing of statistical power across many keys.

\subsection{Entropy-Weighted Detection}
\label{subsec:entropy_detection}

\paragraph{Entropy Weighting.}
When the next-token distribution has low entropy, the top token already has probability close to $1$, so the choice is weakly influenced by the PRF value $R_v$ and carries little watermark signal. 
TextSeal therefore weights each token's detection score by its local entropy $H_i$, so that high-entropy positions contribute more to the final statistic. 
We estimate $H_i$ with a single forward pass of an auxiliary model, \eg a smaller or quantized model from the same family as the generator.

Formally, we assign each token-level score $s_i$ an entropy weight $w_i^{\text{ent}}$ and compute
\begin{equation}
\label{eq:s_combined}
    S_{\text{combined}} = \sum_{i=1}^n w_i^{\text{ent}} \cdot s_i, \qquad \text{where }
    w_i^{\text{ent}} = 0.1 + 0.9 \cdot \frac{H_i - H_{\min}}{H_{\max} - H_{\min}}.
\end{equation}
where the entropy is normalized within the sequence so the weights span a broad dynamic range regardless of the absolute entropy scale. 
This attenuates low-entropy tokens instead of letting them dilute the score, while preserving the strongest evidence from uncertain positions. Unlike prior entropy-filtering approaches~\citep{lee2023wrote} that threshold the entropies, our scheme is continuous: every token still contributes, but with strength matched to its expected usefulness. 
Since the null statistic is a weighted sum of independent exponentials, the moment-matched Gamma approximation below provides calibrated $p$-values that explicitly account for these entropy weights.

\paragraph{Moment-Matched Gamma Approximation.}
$S_{\text{combined}}$ is a weighted sum of independent, non-identical exponentials, which follows a hypoexponential whose CDF, while closed-form, is numerically unstable when rates are similar and costly to evaluate for large $n$.\footnote{The hypoexponential CDF involves a sum of $n$ exponentials with coefficients $\prod_{j \neq i} \lambda_j / (\lambda_j - \lambda_i)$, which become unstable when rates are close. 
In contrast, the Gamma CDF reduces to the well-optimized incomplete gamma function.}
A Gaussian approximation fails to capture the heavy-tailed scores, so we use moment matching instead.
Under $\Hnull$, each term has mean $w_i^{\text{ent}}$ and variance $(w_i^{\text{ent}})^2 \theta_R$.
We fit $S_{\text{combined}} \sim \mathrm{Gamma}(k_{\text{new}}, \theta_{\text{new}})$ by matching the first two moments:
\begin{equation}
\label{eq:gamma_moments}
    \theta_{\text{new}} = \theta_R \frac{\sum (w_i^{\text{ent}})^2}{\sum w_i^{\text{ent}}}, \quad \quad k_{\text{new}} = \frac{(\sum w_i^{\text{ent}})^2}{\theta_R \sum (w_i^{\text{ent}})^2}
\end{equation}
The resulting $p$-value is:
\begin{equation}
\label{eq:p_value_gamma_combined}
p\text{-value}(S_{\text{combined}}) = 1 - F_{\Gamma}\!\left(S_{\text{combined}}; k_{\text{new}}, \theta_{\text{new}} \right)
\end{equation}
We show in \autoref{sec:fpr-check} that this approximation is well-calibrated under $\Hnull$, and in \autoref{subsec:diversity} that it significantly outperforms unweighted detection.
This framework handles dual-key routing ($\alpha$) and entropy gating ($w_i^{\text{ent}}$) in a single frequentist test.

\subsection{Multi-Region Localization and Adaptive Ensemble}
\label{subsec:localization}

When a document contains multiple scattered watermarked regions (e.g., distinct AI-generated paragraphs pasted into a human-written essay), evaluating a global score suffers from two critical flaws. 
First, the unwatermarked background tokens severely dilute the statistical signal. 
Second, it fails to identify the specific provenance of individual segments, which is critical for practical attribution.

\paragraph{\textbf{Geometric Cover Search \& Greedy Extraction.}}
To solve this, our goal is to extract a set of disjoint watermarked intervals $\{[a_1, b_1], \dots, [a_y, b_y]\}$. A naive search over all $\mathcal{O}(n^2)$ possible start and end pairs is computationally prohibitive and incurs a massive multiple-testing penalty. Instead, we employ a geometric cover search, reducing the space to dyadic window lengths $L \in \{L_0, 2L_0, 4L_0, \ldots, 2^{\lfloor \log_2 n \rfloor}\}$, sliding each window across the text at half-length offsets. This yields a strictly bounded number of candidate windows, $M \approx 4n / L_{\min}$.

The extraction proceeds in two stages. First, we rank all $M$ windows by their raw score sum (computed in $\mathcal{O}(1)$ per window via prefix sums). Then, for the top candidates only, we compute the rigorous entropy-weighted Gamma $p$-value. The greedy extraction selects the window with the lowest $p$-value, flags it as watermarked, masks its tokens, and repeats on the residual text, aggregating intervals as long as their combined significance overcomes the multiple-testing tax.
This localized extraction is governed by the minimum zone length $L_{\min}$ (default 50) and the maximum number of zones $Y_{\max}$ (default 5). Full mathematical details are provided in Appendix~\ref{app:localization_math}.

\paragraph{\textbf{Adaptive Ensemble Detection.}}
Discovering $y$ regions among $M$ candidates incurs a combinatorial multiple-testing tax. To adapt to any editing behavior, our ensemble selects the most significant among three strategies, applying a flat Bonferroni correction ($k=3$): (1) \textbf{Global} full-text test (no search penalty), (2) \textbf{Single-Best} window (penalized by $\log_{10} M$), and (3) \textbf{Multi-Region} aggregation over $y$ zones (penalized by $\log_{10} \binom{M}{y} + \log_{10} Y_{\max}$). The final significance score is:
\begin{equation}
\log_{10} p_{\text{final}} = \min\!\big(\log_{10} p_{\text{global}}, \log_{10} p_{\text{single}}, \log_{10} p_{\text{multi}}\big) + \log_{10} 3.
\end{equation}

\paragraph{\textbf{The Dilution Rescue Effect.}}
For largely unedited text, the ensemble gracefully defaults to the global test, paying a negligible worst-case penalty of $\log_{10}(3)$. Consider $w$ watermarked tokens with expected per-token score $\mu > 1$ (as given by \autoref{prop:expected_score}), split into $y$ chunks within a document of length $n$. Under extreme dilution ($n \gg w$), the global test's significance drops as its signal-to-noise ratio scales by $\mathcal{O}(w^2 (\mu-1)^2 / n)$. The multi-region strategy isolates the pure signal ($\mathcal{O}(w(\mu-1)^2)$) but pays a combinatorial tax scaling as $\mathcal{O}(y \log_{10} n)$. Localization rescues detection when the isolated signal outpaces this logarithmic tax: $w(\mu-1)^2 \gtrsim y \log_{10} n$. For instance, $w=800$ tokens ($\mu=1.2$) in $y=5$ chunks easily overcome the $5 \log_{10} n$ tax, allowing confident detection even within $n=100,000$ tokens---a scenario where the global signal is destroyed.

\paragraph{\textbf{High-Resolution Boundary Annotation (mIoU).}}
While the greedy ensemble rigorously bounds the False Positive Rate, the harsh combinatorial tax forces it to prematurely discard small fragments, making it suboptimal for exact boundary estimation (mean Intersection over Union, or mIoU). To achieve high-resolution localization, we decouple \emph{detection} from \emph{annotation}. If the ensemble definitively rejects the null hypothesis $\mathcal{H}_0$, we drop the search taxes and apply a localized density smoother. Tokens satisfying a normalized weighted moving average $\bar{S}_i > \tau$ are locally annotated as watermarked, allowing the recovery of fine-grained, sentence-level provenance (see Appendix~\ref{app:localization_math} for exact formulation).

\section{Main Experiments}
\label{sec:experiments}

\subsection{Experimental Setup}
\label{subsec:exp_setup}

\paragraph{Models \& Datasets}
Unless stated otherwise, we use Qwen~3.5-27B~\citep{qwen3.5} for generation, with $T=0.8$, top-$p=0.9$, and reasoning disabled, and we use entropy-weighted detection (\autoref{subsec:entropy_detection}) with the more lightweight Qwen~3.5-0.8B model.
We evaluate on 1k prompts from the ELI5 dataset~\citep{eli5} (with 5 different seeds), truncating answers to 400 tokens. 

We compare \emph{TextSeal} (default mixing parameter $\alpha = 0.1$ from~\autoref{subsec:dual_key}) to \emph{Gumbel-Max}~\citep{aaronson2023watermarking} and \emph{SynthID-Text}~\citep{dathathri2024scalable} with depth 10.
\emph{SynthID-Text} embeds a watermark via multi-layered tournament sampling with binary random functions, and proposes two detection methods: (i)~a frequentist Z-test over the mean tournament score, and (ii)~a Bayesian detector that estimates the posterior $\Prob(\text{watermarked} \mid \text{scores})$ via a logistic regression or MLP trained on a representative dataset.
We use the frequentist Z-test, because the Bayesian detector provides no controlled false-positive rate, does not generalize across domains (its posteriors depend on the training distribution), and is incompatible with localized multi-window testing (full discussion in Appendix~\ref{app:synthid_details}).
We fix the watermark context window size to $k=3$ for all methods, meaning the pseudo-random function depends on the three preceding tokens.
At detection time, we deduplicate (context window, token) tuples, because the PRF is deterministic and repeated tuples would yield identical scores, violating the independence assumption of the statistical test~\citep{fernandez2023three}.

\subsection{Detectability-Diversity Trade-off}
\label{sec:exp1}

A practical watermark must embed a robust signal without changing the output distribution. 
For~\autoref{fig:diversity-detectability}, we vary for \emph{TextSeal} the mixing parameter $\alpha$ from~\autoref{subsec:dual_key} from 0 (deterministic) to 0.5 (blue and green curves). 
For \emph{SynthID}, we vary the depth from 2 to 20.
\emph{TextSeal} consistently dominates the detectability--diversity trade-off.
Furthermore, using the 27B model for entropy detection (``TextSeal high'') boosts detectability by 1--2 orders of magnitude at higher detection cost.

\subsection{Performance on Benchmarks}
\label{subsec:benchmarks}

\begin{table}[t!]
\caption{Accuracy across multiple benchmarks with and without \emph{TextSeal} (SQA\textsuperscript{*}~=~SimpleQA).
No significant performance drop is observed across benchmarks, confirming that TextSeal preserves the capabilities of the underlying model.
}\label{tab:benchmark_results}
\centering
\small
\resizebox{\textwidth}{!}{
\setlength{\tabcolsep}{3pt}
    \begin{tabular}{llccccccccccccccccc}
    \toprule
    \multicolumn{2}{c}{} & \multicolumn{4}{c}{Math} & \multicolumn{3}{c}{Code} & \multicolumn{4}{c}{Knowledge} & \multicolumn{5}{c}{Common Sense} &  \\
    \cmidrule(lr){3-6} \cmidrule(lr){7-9} \cmidrule(lr){10-13} \cmidrule(lr){14-18}
    Reasoning temp. & WM & AIME & MATH & GSM8K & Avg & HE & MBPP & Avg & MMLU & GPQA & SQA\textsuperscript{*} & Avg & HS & WG & ARC-E & ARC-C & Avg & Avg \\
    \midrule
    \addlinespace[2pt]
    0.6 & \cmark & 41.1 & 79.8 & 96.0 & 72.3 & 93.3 & 49.2 & 71.2 & 51.5 & 50.0 & 16.0 & 39.2 & 94.8 & 93.5 & 93.7 & 88.5 & 92.6 & \textbf{70.6} \\
    & \xmark & 40.1 & 79.8 & 95.4 & 71.7 & 97.0 & 50.2 & 73.6 & 49.2 & 50.5 & 15.8 & 38.5 & 94.7 & 93.2 & 93.4 & 88.3 & 92.4 & \textbf{70.6} \\
    \midrule
    \addlinespace[2pt]
    1.0 & \cmark & 37.1 & 77.9 & 96.1 & 70.4 & 94.5 & 48.5 & 71.5 & 48.9 & 45.5 & 13.7 & 36.0 & 94.6 & 93.8 & 92.8 & 86.0 & 91.8 & \textbf{69.1} \\
    & \xmark & 35.8 & 78.4 & 96.1 & 70.1 & 98.2 & 49.3 & 73.7 & 46.4 & 42.9 & 15.5 & 34.9 & 94.6 & 93.6 & 93.8 & 86.6 & 92.2 & \textbf{69.3} \\
    \bottomrule
    \end{tabular}
}
\end{table}

We evaluate how TextSeal's watermarking impacts performance across a suite of 12 benchmarks spanning math, code, general knowledge, and common sense domains. 
We use Qwen~3.5-27B with $T=0.6$ (mild watermarking) and $T=1.0$ (stronger watermarking\footnote{The temperature controls the strength of the watermark since an increased temperature leads to higher entropy.}) and compare against vanilla generation without watermarking.
Each benchmark is evaluated with generation at top-$p=0.95$, reasoning temperatures $0.6$ or $1.0$ and a maximum reasoning budget of 3,000 tokens.

On average, TextSeal preserves the performance of the underlying model across benchmarks and temperature settings, with no significant differences.
However, we observe a slight performance drop on code benchmarks (Human-eval: HE and MBPP) of 1-2 points. 
Analyzing the outputs suggests that this drop comes from minor formatting omissions rather than incorrect reasoning or algorithmic failures. 
In particular, all watermarked generations from HE that fail with watermarking and not without fail because they give only the function definition while still using annotations such as \texttt{List[...]}, without adding the required \texttt{from typing import List} import.
We note that benchmark evaluation inherently involves noise from the stochastic generation.
To quantify this variance, we re-ran a subset of benchmarks with multiple random seeds and secret keys.
The observed differences between watermarked and non-watermarked conditions fall within the variance introduced by seed/key changes, confirming that watermarking does not systematically degrade or improve performance (see \autoref{app:variance_analysis}).

\subsection{Human Evaluation of Imperceptibility}
\label{subsec:human_eval}

\begin{table}[t!]
\centering
\small
\caption{
    Human preference evaluation (majority vote aggregation over 3 annotators per sample).
    Net Win Rate: $(n_{\text{WM}} - n_{\text{Base}}) / N$.
    $p$-value: two-sided binomial test on decisive samples against 50\%.
    No test reaches significance after Bonferroni correction ($\alpha/6 = 0.008$).
}
\label{tab:human_eval}
\begin{tabular}{lrrrccr}
\toprule
Language & WM Wins & Base Wins & Ties & WM Preference Rate & $p$-value & Net Win Rate \\
\midrule
English  & 124 & 146 & 1{,}730 & 45.9\% & 0.20 & $-1.1$\% \\
Arabic   & 201 & 181 & 618 & 52.6\% & 0.33 & $+2.0$\% \\
Chinese  & 84 & 74 & 842 & 53.2\% & 0.47 & $+1.0$\% \\
Hindi    & 91 & 89 & 820 & 50.6\% & 0.94 & $+0.2$\% \\
Japanese & 143 & 130 & 727 & 52.4\% & 0.47 & $+1.3$\% \\
\midrule
Overall  & 643 & 620 & 4{,}737 & 50.9\% & 0.54 & $+0.4$\% \\
\bottomrule
\end{tabular}
\end{table}

We assess whether the watermark introduces perceptible quality degradation through a human A/B preference study.
Following the methodology of \citet{dathathri2024scalable}, we generate paired responses to questions from ELI5~\citep{eli5} (2{,}000 English samples) and CaLMQA~\citep{arora2025calmqa} (1{,}000 each for Arabic, Chinese, Hindi, Japanese), totaling 6{,}000 question-answer pairs.

Each pair is evaluated by three annotators (via Appen) with qualifications requiring post-graduate education, native-level language fluency, and at least two years of experience.
Annotators select among four options: \emph{A is preferred}, \emph{B is preferred}, \emph{both equally good}, or \emph{both equally bad}, without knowing which output is watermarked.
We aggregate via majority vote, merging the two tie categories and defaulting split votes (one vote per category) to tie.

\paragraph{Results.}
\autoref{tab:human_eval} reports preference rates after majority vote aggregation.
We test whether the watermark win rate among decisive (non-tie) samples differs from 50\% using a two-sided binomial test.
No individual language reaches significance (all $p > 0.05$), and no test is significant after Bonferroni correction for the six comparisons ($\alpha/6 = 0.008$).
The majority of samples (79\%) result in ties, and inter-annotator agreement is high (88\% of samples have at least 2/3 consensus on the 4-class scale).
We also report the \emph{net win rate}, defined as $(n_{\text{WM}} - n_{\text{Base}})/N$, with $n_{\text{WM}}$ and $n_{\text{Base}}$ the counts of samples where the watermarked or baseline response is preferred, and $N$ the total number of samples.
The overall net win rate is $+0.38\%$, indicating a negligible difference.

To rigorously establish imperceptibility rather than failing to detect a difference, we apply the Two One-Sided Tests (TOST) procedure~\citep{schuirmann1987comparison} for equivalence testing.
We test $|P(\text{WM preferred}) - P(\text{Base preferred})| < \Delta$ over all $N$ samples (including ties), which provides greater statistical power than restricting to decisive samples alone.
With a smallest effect size of interest $\Delta = 5\%$, equivalence is established overall and for four of five individual languages ($p < 0.05$); Arabic marginally fails ($p = 0.06$) due to its wider confidence interval.
The overall 90\% CI is $[-0.6\%,\, +1.4\%] \subset [-5\%,\, +5\%]$.
Full breakdowns and the equivalence testing methodology are provided in Appendix~\ref{app:human_eval_details}.

\subsection{Localization in Mixed Documents}\label{sec:exp_localization}                                

In practice, watermarked text often forms only a fraction of a document (e.g., AI-generated paragraphs within a human-written report). A global detector scoring the entire text faces two primary challenges: \emph{dilution}, where unwatermarked tokens degrade the signal-to-noise ratio, and \emph{fragmentation}, where watermarked content is scattered across non-contiguous regions.
We evaluate TextSeal's adaptive ensemble (Section~\ref{subsec:localization}) against global detection by embedding 400-token watermarked answers ($T=1.0$, top-$p=0.95$, chosen to increase the watermark signal) into unwatermarked Wikipedia texts. Under \textbf{dilution ($K{=}1$)}, we place a single contiguous 400-token block inside documents of increasing length, up to $12{,}000$ tokens (watermarked fraction: $3.3\%$). Under \textbf{fragmentation ($K{>}1$)}, we split the 400 tokens into $K \in \{1, 2, 3, 5\}$ equal fragments interleaved within a fixed $8{,}000$-token document.

\begin{figure}[b!]
  \centering
  \includegraphics[width=\linewidth]{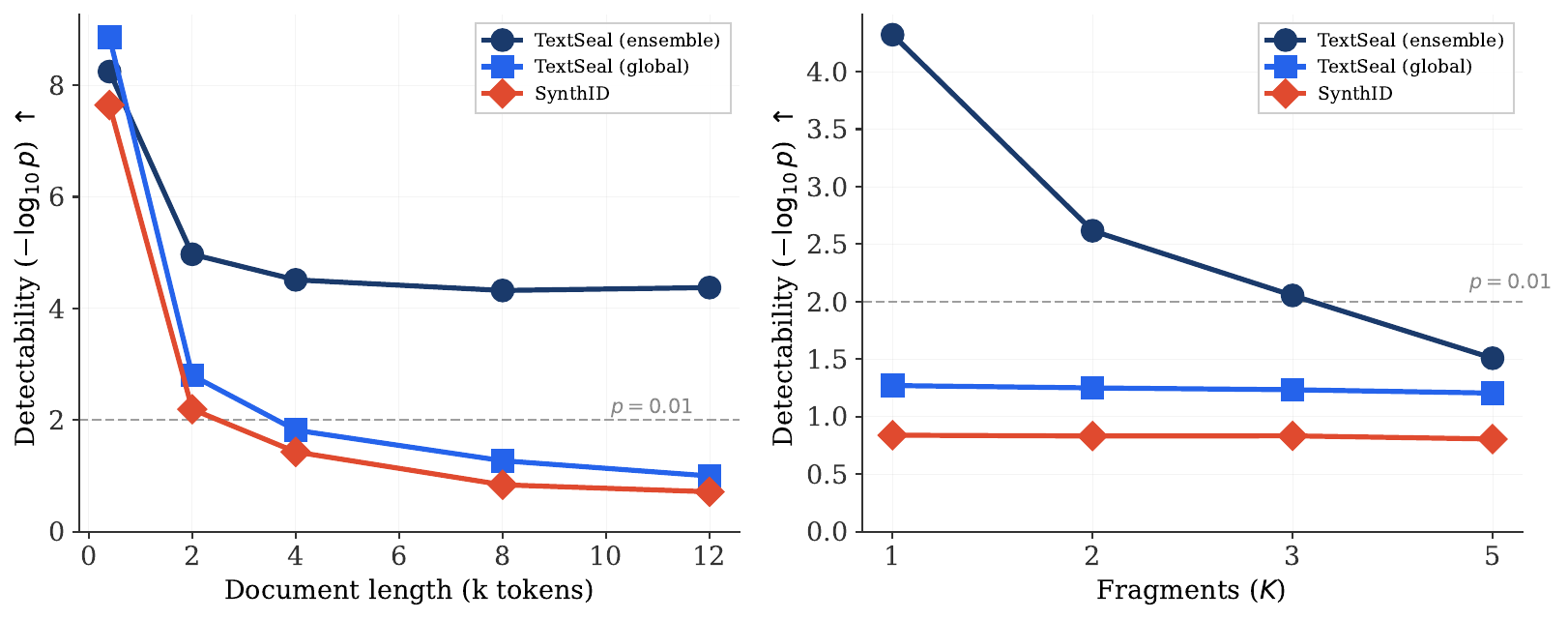}
  \caption{
      Localized detection in mixed documents containing 400 watermarked tokens. 
      \textbf{(Left)}~\emph{Dilution}: A single watermarked block ($K{=}1$) embedded in documents of increasing length. Global detection (light blue and red curves) degrades rapidly as the watermarked fraction shrinks, dropping below the $p{=}0.01$ significance threshold around $4$k tokens. The adaptive ensemble (dark blue curve) maintains strong detectability ($-\log_{10}p > 4$) even at $12$k tokens ($3.3\%$ watermarked).
      \textbf{(Right)}~\emph{Fragmentation}: Watermarked text split into $K$ fragments within an $8000$-token document. Global detectors are insensitive to fragmentation (flat curves at $-\log_{10}p \approx 1$), while the ensemble leverages localized search to extract the signal at $K{\leq}3$ fragments.
  }
  \label{fig:localization}
\end{figure}

\paragraph{Results.}
As shown in \autoref{fig:localization} (left), global detection suffers heavily from dilution, degrading at roughly $O(1/T)$ and failing to reach significance ($p{=}0.01$) beyond $T{=}4000$. TextSeal's adaptive ensemble, however, efficiently isolates the signal, maintaining strong detectability ($-\log_{10} p > 4$) even at $T{=}12{,}000$ (a $30{\times}$ dilution).
For fragmentation (\autoref{fig:localization}, right), global detectors exhibit flat performance, as they are blind to spatial arrangement. Conversely, the ensemble successfully detects the watermark for up to $K{=}3$ fragments. Performance only degrades at $K{=}5$, where individual fragments (${\sim}80$~tokens) become too small to overcome the statistical penalty of multiple-hypothesis testing. Overall, TextSeal's localized approach dramatically outperforms global baselines whenever watermarked content is reasonably concentrated within the document.

\section{Ablations and Analyses}
\label{sec:ablations}

\subsection{Diversity Strategies Comparison}
\label{subsec:diversity}

We compare four strategies for restoring diversity in Gumbel-max watermarking (full descriptions and proofs in Appendix~\ref{app:diversity_bounds}):
\textbf{(1)~Stochastic Mixing} mixes the PRF value with a Bernoulli coin (control: mixing rate $a$);
\textbf{(2)~Periodic Skip} disables watermarking at fixed intervals (control: skip rate $\alpha$);
\textbf{(3)~Entropy-Normalized Skip} skips watermarking with probability $\tau$ uniformly across entropy regimes, preserving distortion-freeness (control: target skip rate $\tau$);
\textbf{(4)~Dual-Key Routing} (\autoref{subsec:dual_key}) alternates between two secret keys (control: routing probability $\alpha \in [0, 0.5]$).

\begin{figure}[b!]
    \centering
    \includegraphics[width=0.95\textwidth]{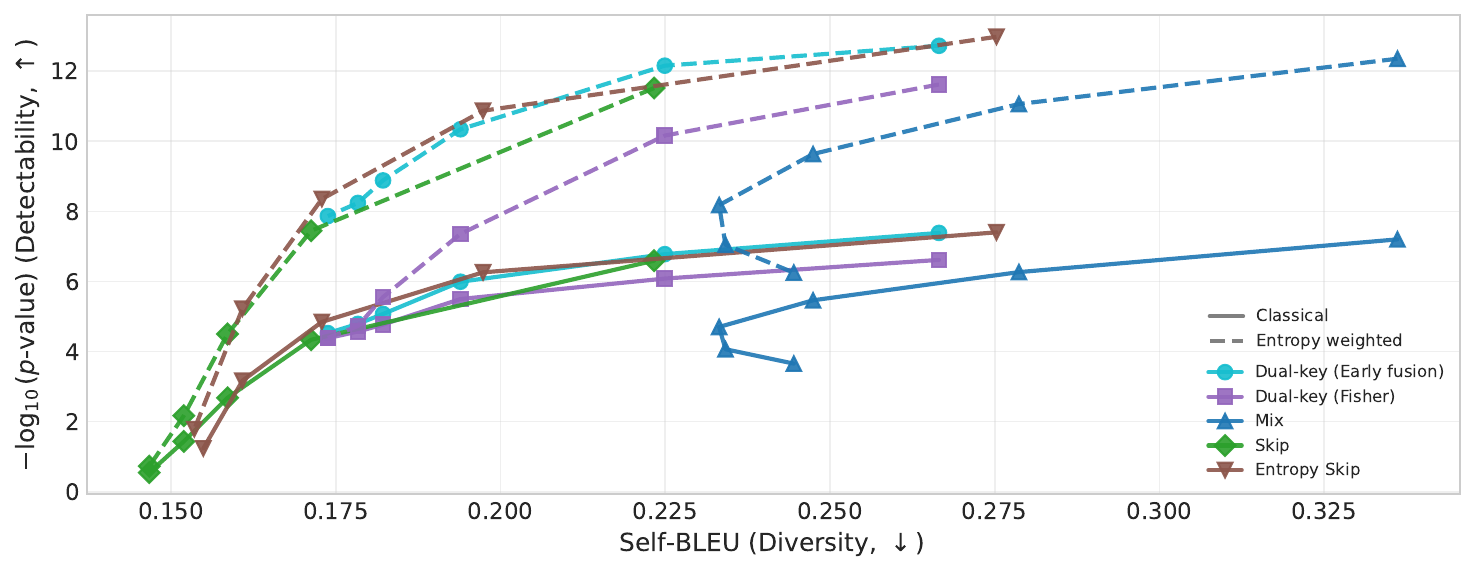}
    \caption{
    \textbf{Pareto frontier of diversity strategies.} 
    Self-BLEU is lower-is-better and median $-\log_{10}(p)$ is higher-is-better.
    Solid lines use the classical detector; dashed lines use entropy weighting.
    Early-fusion dual-key routing outperforms Fisher at matched diversity, and together with entropy skip defines the strongest frontier.
    }
    \label{fig:pareto_diversity}
\end{figure}

\paragraph{Results: Diversity vs.\ Detectability Trade-off.}
We evaluate Qwen~3.5-27B on 1k ELI5 prompts, with reasoning disabled, temperature $1.0$, top-$p=0.95$, maximum generation length 2,048, watermark context size $k=3$, and two generations per prompt to compute Self-BLEU. 
For detection, we report both the classical test and the entropy-weighted Gamma test.
For each method, we vary the control hyperparameter to trace the Pareto frontier between \emph{diversity} (measured by Self-BLEU, where lower indicates more diverse) and \emph{detectability} ($p$-value under $\Hnull$; lower means stronger watermark).

\autoref{fig:pareto_diversity} illustrates the Pareto frontiers for all five methods. 
Ideally, a method should push towards the top-left corner (low Self-BLEU, low $p$-value).
Several trends stand out. First, entropy-weighted detection consistently improves every method, often by several orders of magnitude in median $p$-value, without changing the generation diversity. Second, early-fusion dual-key routing clearly outperforms Fisher-style dual-key aggregation at comparable Self-BLEU, confirming that early fusion is the right detector for routed generation. Third, stochastic mixing is consistently dominated: it reaches similar or worse detectability only at much higher Self-BLEU, making it a poor trade-off in practice.

Among the strongest methods, entropy skip and early-fusion dual-key routing define the best Pareto frontier. Entropy skip is slightly stronger at the highest-detectability end, while early-fusion dual-key routing remains very close across the full sweep and has the practical advantage of mapping directly to speculative decoding and MTP-style deployments. We therefore select dual-key routing as the default diversity mechanism for TextSeal in all experiments.

\subsection{False Positive Rate Check}
\label{sec:fpr-check}

A reliable detector must strictly control its empirical False Positive Rate (FPR) at any nominal threshold $\tau$. 
We validate this on 1 million unwatermarked Wikipedia passages (256 tokens each), rather than ELI5 answers to have more texts and cover a wider distribution.
We plot in~\autoref{fig:fpr_calibration} the empirical FPR against $\tau$; perfect calibration aligns with the diagonal, while curves \emph{above} it indicate safe, conservative behavior. 
Under standard unweighted dual-key detection presented in~\autoref{subsec:dual_key} (\autoref{fig:fpr_calibration}, left), all methods tightly track the diagonal down to $\tau \approx 10^{-4}$. 
Under lightweight (0.8B) entropy-weighted detection as described in~\autoref{subsec:entropy_detection} (\autoref{fig:fpr_calibration}, right), 
TextSeal remains strictly well-calibrated. 

\begin{figure}[b!]
    \centering
    \includegraphics[width=1\linewidth]{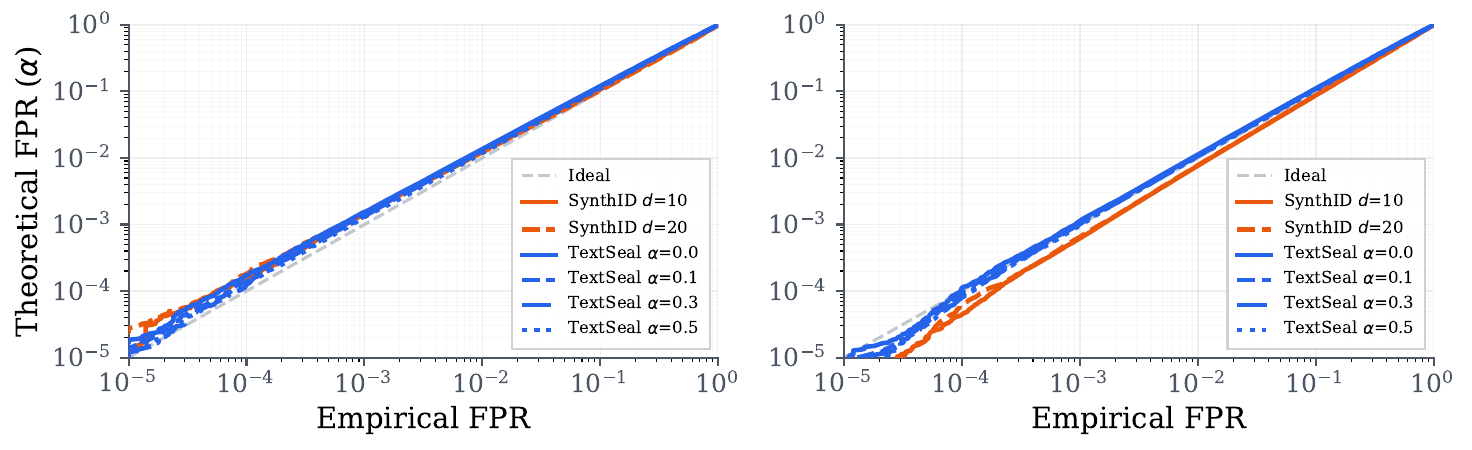}
    \caption{
        Theoretical FPR ($\tau$) vs.\ empirical FPR under standard detection (left) and entropy-weighted linear detection (right), on 1M unwatermarked Wikipedia texts (256~tokens each)
        The dashed diagonal indicates perfect calibration.
        All curves lie above the diagonal (conservative): the empirical FPR never exceeds the nominal level.
        Line styles distinguish parameter settings within each method.
    }
    \label{fig:fpr_calibration}
\end{figure}

\subsection{Generalization: Multilingual Question Answering}
\label{subsec:multilingual_qa}

The experiments above use a single model (Qwen 3.5-27B) on English text.
To test whether TextSeal generalizes across models, languages, and scripts, we evaluate on a multilingual question-answering task using a different model: GPT-OSS-20B, OpenAI's open-weights 20B-parameter reasoning model, with reasoning enabled, on two datasets:
ELI5 (English, 2,000 questions) and CalmQA (Arabic, Chinese, Hindi, Japanese; 1,000 questions each), totaling 6,000 paired samples. 
Each question is answered with and without watermarking (top-$p=0.95$, temperature$=0.7$). 
Full experimental details are in \autoref{app:multilingual_setup}.

\paragraph{Quality Comparison.}
\autoref{tab:multilingual_quality} compares generation quality between watermarked and non-watermarked outputs.
Reasoning lengths show a small increase with watermarking ($\sim$16\% more reasoning tokens, especially in other languages than English), likely due to sampling variance rather than a systematic effect.
Refusal rates are low under both conditions ($<1\%$). 
Script consistency is $>98\%$ for all languages except Japanese (90\%), with a small increase of 1\% for WM.
We use McNemar's tests~\citep{mcnemar1947note} to confirm that there is no statistically significant difference between conditions for either refusal rates ($p=0.41$) or script consistency ($p=0.21$); see \autoref{app:multilingual_setup} for details.
TextSeal achieves 63.3\% TPR at 0.1\% FPR overall; per-language detection results are in App.~\ref{app:multilingual_setup}.

\begin{table}[t!]
\centering
\caption{
    Quality comparison between watermarked (WM) and non-watermarked (Non-WM) answers on 6,000 multilingual QA pairs. 
    Reasoning/Answer tokens: average per response. 
    Refusal/Script: percentage of responses with refusal or wrong language script. 
    Results show no meaningful quality difference between conditions.
}
\label{tab:multilingual_quality}
\small
\setlength{\tabcolsep}{4pt}
\begin{tabular}{l cc cc cc cc}
\toprule
 & \multicolumn{2}{c}{Reasoning tokens} & \multicolumn{2}{c}{Answer tokens} & \multicolumn{2}{c}{Refusal \%} & \multicolumn{2}{c}{Wrong Script \%} \\
\cmidrule(lr){2-3} \cmidrule(lr){4-5} \cmidrule(lr){6-7} \cmidrule(lr){8-9}
Language & WM & Non-WM & WM & Non-WM & WM & Non-WM & WM & Non-WM \\
\midrule
English   & 168 & 149 & 124 & 121 & 0.8 & 0.7 & 0.0 & 0.0 \\
Arabic    & 296 & 232 & 161 & 155 & 0.6 & 0.6 & 0.6 & 0.6 \\
Chinese   & 224 & 203 & 144 & 146 & 0.6 & 0.3 & 0.7 & 0.6 \\
Hindi     & 293 & 229 & 166 & 161 & 1.1 & 1.1 & 1.1 & 1.1 \\
Japanese  & 294 & 280 & 181 & 181 & 0.6 & 0.2 & 9.0 & 7.8 \\
\midrule
Overall & 240 & 207 & 150 & 147 & 0.7 & 0.6 & 1.9 & 1.7 \\
\bottomrule
\end{tabular}
\end{table}

\subsection{Real-World Considerations: Embedding and Detection Efficiency}\label{sec:exp3}

We evaluate TextSeal's computational efficiency during both generation and detection to ensure it is lightweight enough for large-scale deployment.

\begin{table}[t!]
\centering
\caption{Per-token sampling overhead of TextSeal and SynthID watermarking on a single H200 GPU.
Each method is measured on the \emph{same} logits, isolating the sampling cost.
TextSeal uses dual-key Gumbel-Max ($\alpha{=}0.1$, $n$-gram${=}3$); SynthID uses tournament depth $d{=}10$.
Median over 30 prompts of ELI-5 $\times$ 400 tokens.}
\label{tab:generation_overhead}
\smallskip
\small
\begin{tabular}{lrrrrrrr}
\toprule
& & \multicolumn{2}{c}{\textbf{No Watermark}} & \multicolumn{2}{c}{\textbf{TextSeal}} & \multicolumn{2}{c}{\textbf{SynthID}} \\
\cmidrule(lr){3-4} \cmidrule(lr){5-6} \cmidrule(lr){7-8}
\textbf{Model} & \textbf{Fwd} & \textbf{Sample} & \textbf{tok/s} & \textbf{Sample} & \textbf{Overhead} & \textbf{Sample} & \textbf{Overhead} \\
& (ms) & (ms) & & (ms) & & (ms) & \\
\midrule
Qwen~3.5-0.8B  & 21.4 & 0.37 & 45.9 & 0.43 & $0.3\%$    & 0.61 & $1.1\%$ \\
Qwen~3.5-2B    & 21.5 & 0.36 & 45.8 & 0.43 & $0.3\%$    & 0.60 & $1.1\%$ \\
Qwen~3.5-4B    & 30.3 & 0.38 & 32.6 & 0.44 & $0.2\%$    & 0.62 & $0.8\%$ \\
Qwen~3.5-9B    & 31.3 & 0.38 & 31.5 & 0.45 & $0.2\%$    & 0.62 & $0.8\%$ \\
Qwen~3.5-27B   & 61.9 & 0.39 & 16.1 & 0.46 & $0.1\%$    & 0.63 & $0.4\%$ \\
\bottomrule
\end{tabular}
\end{table}

\paragraph{Generation Overhead.}
Table~\ref{tab:generation_overhead} details the sampling overhead during autoregressive decoding. 
TextSeal evaluates a fused dual-key pseudorandom function (PRF) restricted strictly to the top-$p$ survivor tokens (${\sim}200$ tokens), avoiding full-vocabulary hashes. This adds only ${\sim}0.07$\,ms per token (${\le}0.3\%$ overhead).
In contrast, SynthID's iterative tournament sampling requires $d$ sequential rounds of top-$p$ reweightings and a multinomial sampling step, costing ${\sim}0.6$\,ms per token ($0.4$--$1.1\%$ overhead). 
Crucially, both methods operate entirely on the logits, requiring no model parameter changes or KV-cache modifications, ensuring immediate compatibility with standard serving infrastructure.

\begin{figure}[b!]
  \centering
\includegraphics[width=0.8\linewidth]{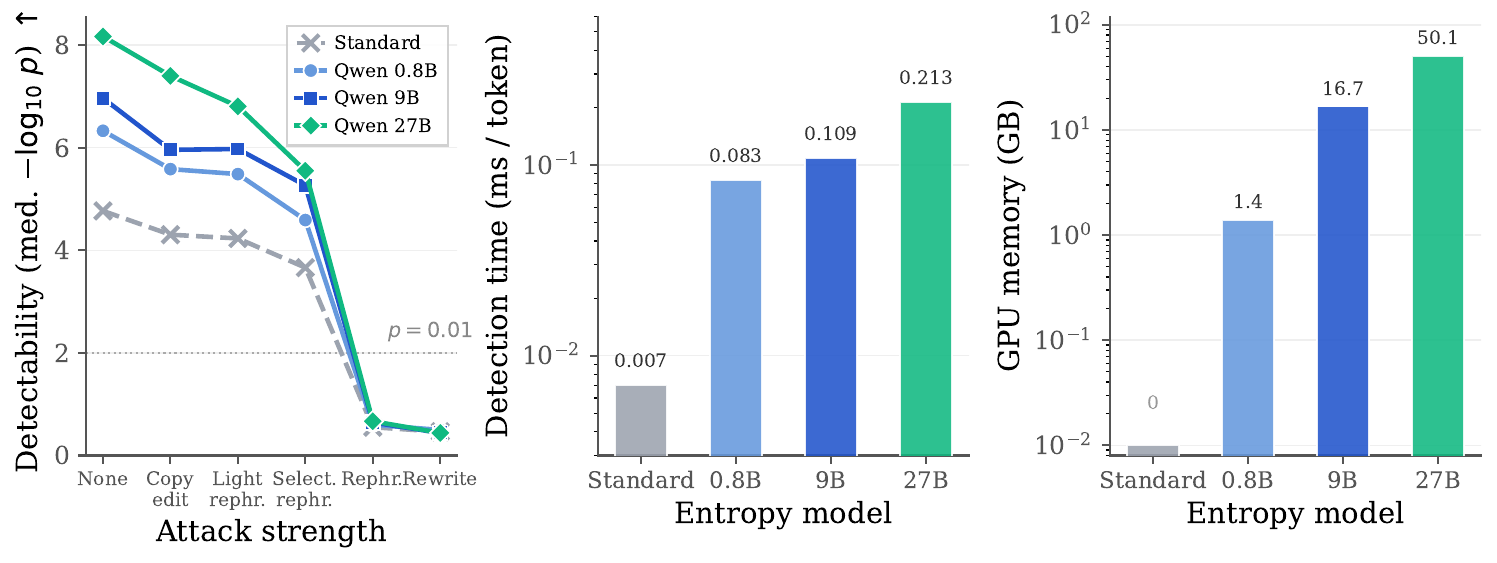}
  \caption{
      Entropy-aware detection performance and computational costs. \textbf{(Left)} Detectability under varying attack strengths. The highly efficient 4-bit 0.8B model boosts base detectability by ${\sim}1.3$ orders of magnitude, capturing much of the theoretical maximum boost ($+3.4$) provided by the 27B generation model. \textbf{(Middle)} Detection time per token (log scale). \textbf{(Right)} Peak GPU memory allocation (log scale). The 4-bit 0.8B model offers an excellent trade-off, recovering most of the watermark signal while requiring $35\times$ less memory.
  }
  \label{fig:entropy_detection}
\end{figure}

\paragraph{Detection Efficiency and Proxy Scaling.}
For detection, we evaluate the optimal proxy model size for entropy weighting (Section~\ref{subsec:entropy_detection}) across varying attack strengths (\autoref{fig:entropy_detection}). 
Standard unweighted detection is highly efficient ($0.007$\,ms/token, $0$\,GB VRAM overhead) but yields a baseline median $-\log_{10}p$ of $4.8$. 
Entropy weighting with the full 27B model significantly boosts this score to $8.2$, but incurs massive overhead ($50.1$\,GB VRAM, $0.213$\,ms/token). 
However, the 4-bit quantized 0.8B model emerges as the optimal practical choice: it achieves near-parity detectability ($6.2$) and scales identically against robust attacks, while requiring only $1.4$\,GB VRAM and $0.115$\,ms/token.

\paragraph{MTP Speculative Decoding.}\label{sec:exp5}Multi-token prediction (MTP) speculative decoding accelerates inference via lightweight draft heads that propose multiple tokens in parallel~\citep{gloeckle2024better}.
TextSeal natively supports this by assigning key~$A$ to draft-accepted tokens and key~$B$ to target-resampled tokens. Consequently, the mixing parameter $\alpha$ dynamically matches the empirical acceptance rate.
We evaluate Qwen~3.5 (2B, 9B, 27B) generating 400-token ELI5 answers under three conditions: standard MTP, MTP with TextSeal, and autoregressive TextSeal. 
As shown in Figure~\ref{fig:mtp_speculative}, MTP draft acceptance rates remain identical ($29$--$46\%$) with and without TextSeal, confirming the dual-key approach is perfectly distortion-free and preserves all speculative efficiency gains. 
While MTP TextSeal's detection signal is slightly lower than standard TextSeal due to key mixing dilution ($\alpha < 1$), entropy weighting easily recovers strong significance well above the $p{=}0.01$ threshold. 
Perplexity remains identical across all conditions and model sizes, confirming that the dual-key watermark introduces no quality degradation.

\begin{figure}[b!]
    \centering
    \includegraphics[width=0.8\linewidth]{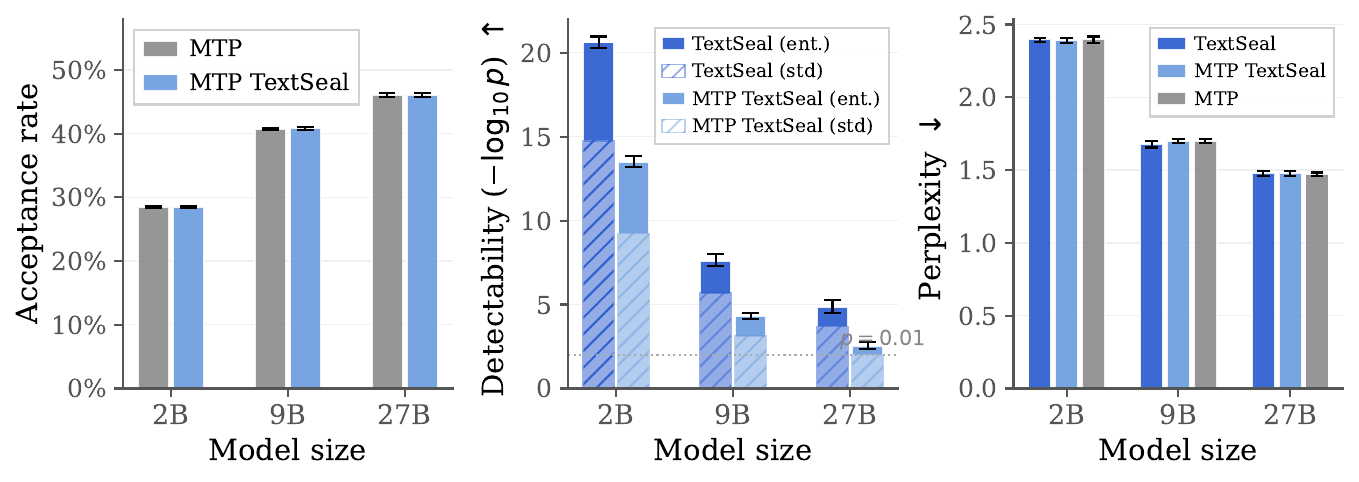}
    \caption{%
        MTP speculative decoding with TextSeal watermarking across Qwen~3.5 model sizes (2B, 9B, 27B) at temperature~$0.8$, top-$p$~$0.9$.
        \emph{Left:}~Draft acceptance rate is unchanged by dual-key watermarking, confirming zero overhead.
        \emph{Center:}~Both TextSeal and MTP TextSeal are well above the $p{=}0.01$ detection threshold; solid bars show entropy-weighted detection, hatched bars show standard detection. The modest gap between TextSeal and MTP TextSeal is explained by the key-mixing parameter $\alpha < 1$.
        \emph{Right:}~Perplexity is identical across all conditions, confirming that watermarking is distortion-free.
    }
    \label{fig:mtp_speculative}
\end{figure}

\section{Watermark Radioactivity: Detecting Distillation via Learnability}
\label{sec:learnability}

A watermark is \emph{radioactive}~\citep{sander2024watermarking} if, when a model is trained on watermarked data, it inherits a detectable token bias.
This enables a powerful application beyond text provenance: detecting whether a competitor has distilled your model's outputs into their own.

\paragraph{Setup.}
We distill DeepSeek-R1-Distill-Qwen-14B~\citep{guo2025deepseek} (teacher) into Qwen2.5-3B~\citep{qwen25} (student) on 5{,}000 curated problems from OpenR1-Math-220k~\citep{openr1}.
The teacher generates reasoning traces under four watermark schemes: Gumbel-Max~\citep{aaronson2023watermarking}, TextSeal ($\alpha{=}0.1$), SynthID depth 10~\citep{dathathri2024scalable}, and an unwatermarked control (all with watermark windows of size 3).
Following \citet{muennighoff2025s1simpletesttimescaling}, we retain traces only if they close their \texttt{</think>} block, contain a \verb|\boxed{}| answer (when required), match the reference solution under \texttt{math\_verify}, and have no 100-character span recurring $\geq 3$ times.
We also remove problems that the base student already solves correctly, so the distillation set only includes traces that teach the student something new.
We then apply LoRA fine-tuning on the remaining traces.

\paragraph{Detection methodology.}
To test whether the watermark transferred, we use the open-model radioactivity test~\citep{sander2024watermarking}.
We feed each training trace into the student using \emph{teacher forcing} (providing the ground-truth prefix at each position) and record the student's top-1 prediction.
If the student internalized the watermark bias during training, its predictions should be skewed toward high-PRF tokens.
We score each prediction with the watermark PRF and aggregate into a $p$-value.
To get a statistically valid test, we deduplicate at two levels.
\emph{Within each trace,} each watermark context window $\mathbf{w}_t$ is scored at most once: if the same $k$-gram appears more than once in a trace, we score the student's prediction only at the first occurrence. This is needed to avoid spurious signal: a high-PRF token already inside the input context can be copied by the student through attention rather than retrieved from internalized watermark bias.
\emph{Across traces,} we further deduplicate \emph{(context window, predicted token)} tuples globally so that repeated tuples are counted only once. The PRF is deterministic in $(v, \mathbf{w}, \sk)$, so duplicated tuples produce identical scores and would violate independence in the statistical test.
After deduplication, this yields ${\sim}1.4$--$2.2$M unique scored tokens per method (full setup in \autoref{app:learnability_details}).

\begin{figure}[b!]
    \centering
    \includegraphics[width=\linewidth]{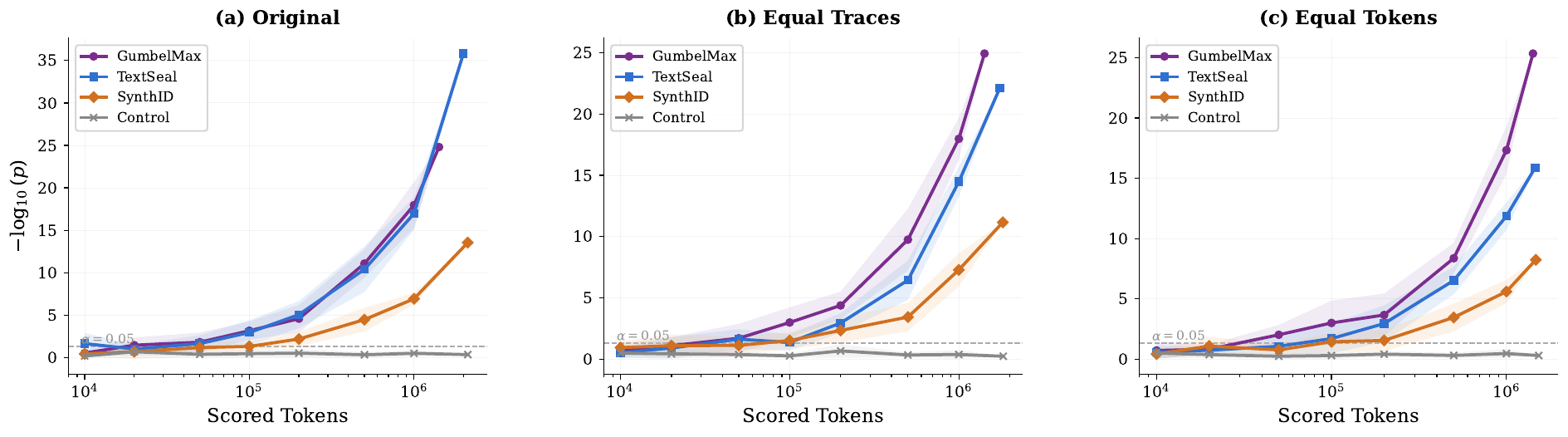}
    \caption{
        \textbf{Watermark radioactivity through distillation.}
        Detection power ($-\log_{10}(p)$) vs.\ number of unique scored tokens under three conditions: original traces, equal-trace control ($1{,}991$ each), and equal-token control (${\sim}15.1$M chars each).
        TextSeal achieves the strongest signal under the original setup thanks to retaining more traces, while Gumbel-Max dominates under controlled conditions, confirming its stronger per-token signal.
    }
    \label{fig:learnability}
\end{figure}

\begin{table}[t!]
\centering
\small
\caption{
    \textbf{Teacher trace quality and detectability.}
    The teacher generates 5{,}000 traces per method; pass rate is the fraction retained by the four-stage quality filter.
    Teacher $-\log_{10}(p)$ reports the mean watermark detection power across individual teacher traces.
    Accuracy is measured on GSM8K (1{,}319 problems, greedy decoding); the baseline is the pre-training Qwen2.5-3B.
    \textsuperscript{\dag}TextSeal uses entropy-weighted scoring.
}
\label{tab:learnability_impact}
\begin{tabular}{lccccc}
\toprule
\textbf{Method} & \textbf{Retained} & \textbf{Pass} & \textbf{Teacher} & \textbf{GSM8K} & \textbf{$\Delta$ vs} \\
 & \textbf{Traces} & \textbf{Rate} & $-\log_{10}(p)$ & \textbf{Acc} & \textbf{Base} \\
\midrule
Base Model (Qwen2.5-3B) & --- & --- & --- & 64.5\% & --- \\
Gumbel-Max   & 1{,}991 & 39.8\% & 14.89 & 78.8\% & +14.3 \\
TextSeal     & 2{,}352 & 47.0\% & 33.15\textsuperscript{\dag} & 79.9\% & +15.4 \\
SynthID      & 2{,}408 & 48.2\% & 14.39 & 75.2\% & +10.7 \\
Control      & 2{,}400 & 48.0\% & 0.39 & 75.5\% & +11.0 \\
\bottomrule
\end{tabular}
\end{table}

\paragraph{Results.}
\autoref{fig:learnability} shows that all three watermarks reliably transfer through distillation, with detection power far exceeding the significance threshold.
Under the original setup (each method uses all its retained traces), TextSeal achieves the strongest signal thanks to higher data volume.
Once data volume is equalized (controlled conditions in \autoref{fig:learnability}b,c), Gumbel-Max dominates, confirming a stronger per-token signal via deterministic argmax; TextSeal achieves comparable overall detectability by retaining more training data.
All distilled students substantially improve over the base model (+10--15\% on GSM8K), and distilling on watermarked traces does not lead to significant changes compared to the unwatermarked control.

\paragraph{Controlled comparisons.}
To rule out training data volume as a confound, we repeat the experiment under two controlled conditions:
(i)~\emph{equal traces}, where each method uses exactly $1{,}991$ traces (the Gumbel-Max minimum, randomly subsampled for the other methods); and
(ii)~\emph{equal tokens}, where each method is allocated ${\sim}15.1$M characters.
Under equal traces, TextSeal achieves the highest student accuracy ($81.0\%$), followed by SynthID and Control ($78.8\%$ each) and Gumbel-Max ($77.7\%$).
Under equal tokens, the spread narrows ($79.7\%$/$78.6\%$/$79.6\%$/$77.6\%$ for TextSeal/Gumbel-Max/SynthID/Control).
Detection remains strong under both controls, validating that the conclusions of \autoref{fig:learnability} are not artifacts of unequal training data volume.

\paragraph{Entropy weighting ablation.}
For TextSeal we use $\sqrt{\hat{H}}$ entropy-aware scoring by default (\autoref{subsec:entropy_detection}).
\autoref{fig:learnability_entropy} compares eight weighting functions in the same teacher-forcing setup, spanning normalized-entropy transforms (Sqrt, Log, Linear, Tanh of $\hat{H}_i$) and raw entropy power functions ($H_i^\beta$ for $\beta \in \{0.5, 1.0, 1.5\}$).
The concave $\sqrt{\hat{H}}$ weighting achieves the strongest detection ($p = 3.7 \times 10^{-110}$), improving over the uniform baseline ($p = 2.1 \times 10^{-84}$) by more than $25$ orders of magnitude.
Concave functions outperform linear and superlinear alternatives because they moderately upweight high-entropy positions---where the watermark has more room to influence token selection (\autoref{prop:expected_score})---without over-amplifying noisy extreme-entropy tokens.

\begin{figure}[t!]
    \centering
    \includegraphics[width=0.6\linewidth]{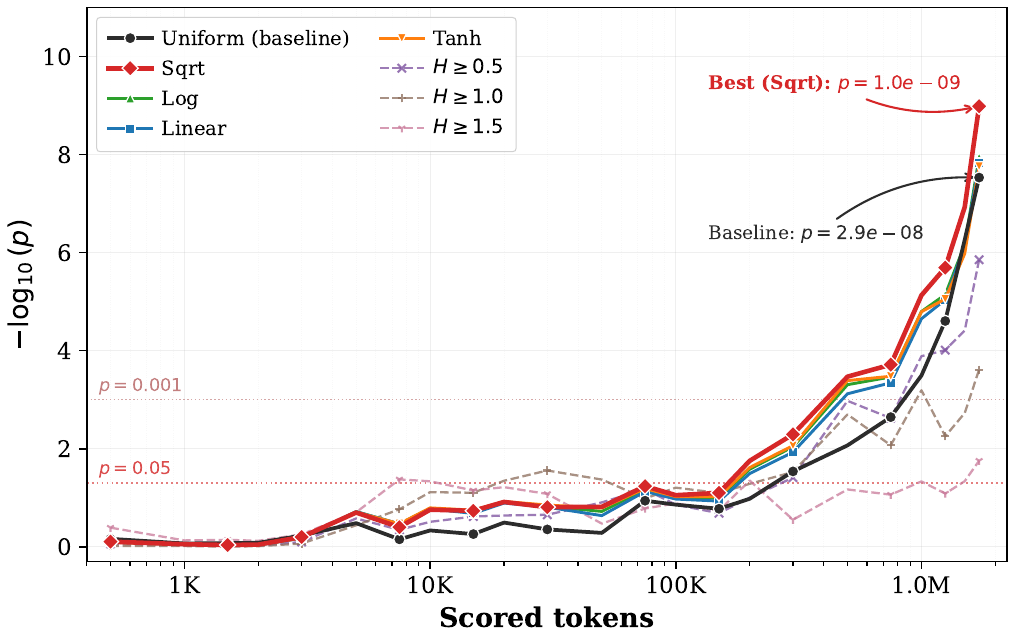}
    \caption{
        \textbf{Entropy-aware scoring for watermark learnability detection.}
        Detection power ($-\log_{10}(p)$) vs.\ number of unique scored tokens in the teacher-forcing radioactivity test for TextSeal ($\alpha{=}0.1$), comparing different entropy weighting functions $w_i^{\text{ent}} = f(H_i)$ against a uniform (unweighted) baseline.
        The concave $\sqrt{\hat{H}}$ weighting achieves the strongest signal ($p = 3.7 \times 10^{-110}$), improving over the uniform baseline ($p = 2.1 \times 10^{-84}$) by more than $25$ orders of magnitude.
    }
    \label{fig:learnability_entropy}
\end{figure}

The full benchmark accuracy numbers and further details are given in~\autoref{app:learnability_details}.

\section{Conclusion}

We introduced TextSeal, a distortion-free watermark for LLMs that achieves state-of-the-art detectability through dual-key generation, entropy-weighted detection, and localized multi-region search.
TextSeal strictly dominates SynthID on the diversity-detectability frontier, preserves model performance across 12 benchmarks, supports speculative decoding and MTP, and transfers through distillation for radioactive tracing.

\paragraph{Limitations.}
Like all distortion-free sampling watermarks, TextSeal trades diversity, not quality, for detectability.
While this trade-off is invisible to users who observe a single generation, it may affect workflows that rely on diverse outputs (best-of-$N$ reranking, creative brainstorming).
In practice, modern reasoning models trained with RL already exhibit collapsed entropy, limiting the marginal diversity loss; quantifying this across model families remains open.

\clearpage
\bibliographystyle{ieeenat_fullname}
\bibliography{references}

\clearpage
\beginappendix
\section{More Technical Details on the Methods}
\label{app:method_details}

\subsection{Hash Function Implementation}
\label{app:hash}

The PRF takes as input the candidate token $x$, a context window $\mathbf{w} = (w_1, \ldots, w_k)$ of $k$ token IDs, and the secret key $\sk$ (all of them are integers), and outputs a random integer in $[0, M)$.

We compute the hash as follows:
\begin{align}
h'(x, \mathbf{w}, \sk) & = \left( p_2 \cdot x + \sum_{i=1}^{k} w_i \cdot q_i + p_3 \cdot \sk \right) \cdot p_4, \\
h (x, \mathbf{w}, \sk) & = \text{XORShift}(h'(x, \mathbf{w}, \sk)) \mod M,
\end{align}
where $q_1, \ldots, q_k$ are distinct large primes (to ensure that different orderings of the same tokens produce different values), and $p_2, p_3, p_4$ are additional primes. The first result $h'$ undergoes XOR-shift for better bit dispersion: $h = (h' \cdot p_{\text{mix}}) \oplus ( (h' \cdot p_{\text{mix}}) \gg s)$, where $p_{\text{mix}}$ is a mixing prime and $s$ is a shift constant.

Finally, we normalize to obtain the uniform pseudo-random value:
$$ u = \frac{h (x, \mathbf{w}, \sk)}{M} \in [0, 1] $$

\subsection{Details on SynthID-Text Evaluation}
\label{app:synthid_details}

In our experiments, we evaluate \emph{SynthID-Text} \citep{dathathri2024scalable} as the state-of-the-art generation-time, distortion free and non deterministic watermark.
While traditional methods (such as Gumbel-Max or Soft Red List) apply a single, global shift to the logit distribution, SynthID-Text embeds its signal through a multi-layered \emph{Tournament sampling} mechanism. 

\paragraph{Tournament Generation.} 
At each generation step $t$, the method seeds a pseudo-random function using the preceding $k$ tokens (the context window).
Using this seed, the vocabulary $\mathcal{V}$ is pseudo-randomly partitioned into a tournament structure with $m$ layers. 
At each layer $l \in \{1, \dots, m\}$, a pseudo-random $g$-value $g_{t,l}$ is computed. Instead of a single binary split, SynthID-Text iteratively reshapes the target LLM's probability distribution across these $m$ layers.
Tokens that consistently win their tournament matches (i.e., those assigned high $g$-values across multiple layers) see their sampling probabilities exponentially increased.
By distributing the watermark across multiple layers, SynthID-Text preserves text quality while embedding a robust signal.
In our implementation, we follow the authors' specification for a binary random function (Bernoulli $g$-value distribution) to construct this tournament.

\paragraph{Why we avoid SynthID's Bayesian detector.}
The original SynthID-Text framework proposes a Bayesian neural network (logistic regression or MLP) trained on a representative dataset to estimate posterior probabilities $P(w|g)$.
We avoid this approach for several reasons.

\emph{(i)~No false-positive-rate guarantee.}
A Bayesian posterior score has no frequentist calibration: there is no principled way to set a decision threshold that guarantees, for example, at most one false accusation in $10^4$ documents.
This is essential for any legal or regulatory use of watermark detection, where a false positive can constitute a wrongful accusation of AI generation.

\emph{(ii)~Distribution dependence.}
The trained classifier learns the joint distribution of token scores and watermark presence from its training corpus.
Deploying on a different model, domain, language, or decoding strategy invalidates these learned posteriors; in practice, we observed that the Bayesian detector degrades sharply on out-of-domain text, requiring retraining for every new deployment setting.

\emph{(iii)~Incompatibility with multiple-testing correction.}
Localized detection (\autoref{subsec:localization}) requires evaluating thousands of candidate windows and applying Bonferroni correction to control the family-wise error rate.
This demands a well-calibrated null distribution for each window, which a learned classifier cannot provide.
The Bayesian scores are not $p$-values and cannot be combined or corrected in a statistically valid manner.

\emph{(iv)~Opacity and reproducibility.}
A learned classifier is a black box whose decision boundary cannot be formally audited.
For provenance claims that may carry legal weight, a closed-form statistical test with an analytically derived null distribution is far more defensible.
Moreover, the Bayesian detector is not open-sourced, and despite following the specification in the original supplementary material, we were unable to reproduce comparable results, making fair comparison infeasible.

\paragraph{Our frequentist alternative.}
To ensure a fair, threshold-independent comparison, we implement a mathematically rigorous frequentist detection pipeline.
At detection time, for a given token $x_t$ and its context, we reconstruct the PRF-seeded tournament and extract the sequence of $m$ layer-wise $g$-values $g_{t,1}, \dots, g_{t,m}$. 
Because earlier layers in the tournament contribute more watermarking evidence than later layers, we compute a \emph{Weighted Mean Score} for the token:
\begin{equation}
s_t = \sum_{l=1}^{m} \alpha_l g_{t,l}
\end{equation}
where $\alpha_1 \ge \dots \ge \alpha_m \ge 0$ are linearly decaying weights. Over a sequence of $N$ valid tokens, we sum the scores to obtain a test statistic $S = \sum_{t=1}^N s_t$. 
Under the null hypothesis $\mathcal{H}_0$ (unwatermarked text), the $g$-values follow the unwatermarked uniform or Bernoulli distribution.
We analytically compute the mean $\mu_0$ and variance $\sigma_0^2$ of the weighted sum under $\mathcal{H}_0$.
We then compute a final Z-score for the sequence:
\begin{equation}
Z_{\text{SynthID}} = \frac{S - N\mu_0}{\sigma_0 \sqrt{N}}
\end{equation}
The significance is given by the standard normal survival function $p = 1 - \Phi(Z_{\text{SynthID}})$.

\subsection{Other Watermark Schemes}\label{app:method_overview}

We describe below the other watermarking schemes referenced in this work.

\paragraph{Green-list/Red-list.}
\citet{kirchenbauer2023watermark} modify the logit vector based on the watermark context window and secret key $\sk$.
A token $v$ is classified as ``green'' if $\mathrm{PRF}(v, \mathbf{w}, \sk) < \gamma$ (typically $\gamma{=}0.5$), and its logit is incremented by $\delta$: $\tilde{\ell}_v = \ell_v + \delta$ for green tokens, $\tilde{\ell}_v = \ell_v$ otherwise.
Detection counts green tokens and performs a binomial test. This method is \emph{not} distortion-free: the additive bias alters every generation.

\paragraph{MorphMark.}
\citet{wang2025morphmark} adaptively adjust watermark strength based on context.
Let $P_G = \sum_{v \in \text{GreenList}} p_v$ be the total probability mass on green tokens.
If $P_G \leq p_0$ (a threshold), no watermark is applied; otherwise, probabilities are rescaled with an adaptive boost factor $r = \min(\kappa P_G, 1)$.
This reduces distortion compared to vanilla green-red but is still \emph{not} distortion-free.

\paragraph{DiPMark.}
\citet{wu2023dipmark} introduce a distortion-free variant of green-red watermarks using a pseudorandom permutation $\pi$ (seeded by context and $\sk$) to reorder tokens before applying a bias.
The bias preserves the original distribution in expectation over the randomness of $\pi$.

\paragraph{WaterMax.}
\citet{giboulot2024watermax} generate several candidate chunks from the original LLM distribution and select outputs with the highest watermark score.
This is distortion-free by construction but requires multiple generations per query, making it impractical for production.

\subsection{Radioactivity Test Protocol}\label{app:radioactivity_protocol}

We detail the radioactivity test methodology from~\citet{sander2024watermarking, sander2025detecting}.

\paragraph{Teacher-forcing setup.}
We feed the watermarked training traces into the suspect (student) model using teacher forcing: at each position $t$, the model receives the ground-truth prefix $x_{<t}$ from the watermarked trace and produces a prediction.
Let $\hat{x}_t = \arg\max_{v \in \V} P_\theta(v \mid x_{<t})$ denote the student's top-1 prediction at step $t$.
The key insight is that teacher forcing isolates the model's learned token preferences from confounding factors like sampling noise, requiring only a single forward pass over existing traces rather than expensive autoregressive generation.

\paragraph{Test statistic.}
We score each prediction using the watermark's PRF: $R_t = \text{PRF}(\hat{x}_t, \mathbf{w}_t, \sk)$, where $\mathbf{w}_t$ is the context window of teacher tokens preceding position $t$.
If the student internalized the watermark bias during training, its top-1 predictions will be systematically skewed toward high-PRF tokens, producing a significant test statistic.

\paragraph{Deduplication.}
We deduplicate at two levels, each for a different reason:
(i)~\emph{within each trace}, each context window $\mathbf{w}_t$ is scored only once. This is necessary because the teacher's watermarked tokens appear in the student's input context during teacher forcing: if an n-gram that was biased toward high-PRF values appears multiple times, the student might simply copy it from context rather than predicting it from internalized preferences, creating a false signal~\citep{sander2024watermarking};
(ii)~\emph{across traces}, all (context window, predicted token) pairs are pooled and deduplicated, because the PRF is deterministic and shared (context, token) tuples across different training examples would yield identical scores, violating independence~\citep{fernandez2023three}.
After deduplication, under $\Hnull$ (the student is unaware of $\sk$), the scores are independent and follow their null distribution, enabling exact $p$-value computation.

\subsection{Formal Definitions of Non-Distortion}
\label{app:distortion_definitions}

Following \citet{dathathri2024scalable}, we provide the formal definitions of the two levels of non-distortion used throughout the paper.
Let $\Delta_\V$ denote the probability simplex over the vocabulary $\V$, let $\mathcal{R}$ be the space of random seeds, and let $\mathcal{K}$ be the space of secret keys.

\begin{definition}[Single-token non-distortion~{\citep[Def.~16]{dathathri2024scalable}}]
\label{def:single_token_distortion_free}
A sampling algorithm $S : \Delta_\V \times \mathcal{R} \to \V$ is \emph{single-token non-distortionary} if for any probability distribution $\vec{p} \in \Delta_\V$ and token $x \in \V$:
$$\E_{r \sim \mathrm{Unif}(\mathcal{R})} \left[ \Prob(S(\vec{p}, r) = x) \right] = p_x.$$
\end{definition}

Marginalizing over the random seed, the sampling algorithm produces each token with exactly its original LLM probability.
This is a property of the sampling algorithm alone (e.g., Gumbel-max or two-sample Tournament sampling), independent of how the seed is generated across time steps.

\begin{definition}[Single-sequence non-distortion~{\citep[Def.~20 with $K{=}1$]{dathathri2024scalable}}]
\label{def:single_sequence_distortion_free}
A watermarking scheme $P_{\mathrm{wm}}$ is \emph{single-sequence non-distortionary} if, for any prompt $\vec{x}$ and response $\vec{y} \in \V^*$:
$$\E_{k \sim \mathrm{Unif}(\mathcal{K})} \left[ P_{\mathrm{wm}}(\vec{y} \mid \vec{x}, k) \right] = p_{\mathrm{LM}}(\vec{y} \mid \vec{x}).$$
\end{definition}

This is strictly stronger than single-token non-distortion: it requires the \emph{joint} distribution over a full response to match the original LLM, not just the per-token marginals.
A single-token non-distortionary sampler paired with a sliding-window seed generator violates this property whenever the same context window repeats (producing deterministic rather than stochastic outputs).
Repeated context masking~\citep{dathathri2024scalable, kuditipudi2023robust} restores single-sequence non-distortion by falling back to unwatermarked sampling on repeated contexts (\autoref{rem:deduplication}).

\section{Gumbel-max proofs}\label{app:gumbel_proofs}

The following results were presented by \citet{aaronson2023watermarking} and formalized by \citet{fernandez2023three}. 
Some elements of these proofs are used later, so we restate them here.
An overview of the Gumbel-max generation scheme is presented in \autoref{fig:gumbel-max}.

\begin{figure*}[b!]
\centering
\figGumbelMaxTikz
\caption{Standard Gumbel-Max watermarking (see \autoref{sec:background}). The LLM uses all previous tokens to predict probabilities, while the PRF uses only the last $k$ tokens (watermark context $\mathbf{w}$) to generate pseudo-random values $R_v$ for each candidate token $v$. The token maximizing $R_v^{1/p_v}$ is selected.}
\label{fig:gumbel-max}
\end{figure*}

\begin{proposition}[Sampling probability, restated from \autoref{prop:sampling}]
Consider a discrete distribution $\vec{p}=(p_1,\ldots,p_V)$
and $V=|\V |$ random variables $\vec{R} = (R_1,\ldots,R_V)$ s.t. $R_v\overset{iid}{\sim}\mathcal{U}_{[0,1]}$.
Let $V^\star = \arg \max_v R_v^{1/p_v}$.
Then: $\Prob(V^\star=v) = p_v$.
\end{proposition}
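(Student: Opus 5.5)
We reduce the event $\{V^\star=v\}$ to a one-dimensional integral by conditioning on $R_v$. The map $x\mapsto x^{1/p_u}$ is increasing on $[0,1]$, so define $X_u := R_u^{1/p_u}$. These are independent, and for $x\in[0,1]$
$$\Prob(X_u\le x)=\Prob(R_u\le x^{p_u})=x^{p_u}.$$
Each $X_u$ has a continuous distribution, so ties occur with probability zero and the $\arg\max$ is almost surely unique. It therefore suffices to compute $\Prob(X_v > X_u \ \forall u\neq v)$. Tokens with $p_u=0$ can be dropped from the start, since for them $X_u=0$ almost surely when $R_u<1$.

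The key step is to condition on $X_v=x$ and use independence. The density of $X_v$ is $p_v x^{p_v-1}$ on $[0,1]$. Given $X_v=x$, the probability that every other $X_u$ falls below $x$ is
$$\prod_{u\neq v} x^{p_u}=x^{1-p_v},$$
using $\sum_u p_u=1$. Hence
$$\Prob(V^\star=v)=\int_0^1 p_v x^{p_v-1}\,x^{1-p_v}\,dx=\int_0^1 p_v\,dx=p_v.$$

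The only mildly delicate points are the almost-sure uniqueness of the argmax and the handling of zero-probability tokens; neither is a real obstacle. The same conditional computation also yields Corollary~\ref{cor:beta}. Given $V^\star=v$, the conditional density of $X_v$ is $p_v/p_v=1$, so $X_v\sim\mathcal{U}_{[0,1]}$. Consequently $R_v=X_v^{p_v}$ satisfies
$$\Prob(R_v\le r)=r^{1/p_v},$$
which is the $\text{Beta}(1/p_v,1)$ distribution.

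As a sanity check, one can compare this with the Gumbel-max formulation. Taking $G_u=-\log(-\log R_u)$ gives i.i.d.\ standard Gumbel variables, and the classical result $\arg\max_u(G_u+\log p_u)\sim\vec p$ recovers the same conclusion.
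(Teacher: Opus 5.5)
Your proof is correct. It reaches the result by a more self-contained route than the paper's.

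The paper passes to $Z_v = -\frac{1}{p_v}\ln R_v \sim \mathcal{E}(p_v)$ and rewrites $V^\star$ as $\arg\min_v Z_v$. It then cites the standard exponential-race facts: $\min_v Z_v \sim \mathcal{E}(1)$ and $\Prob(\arg\min = v) = p_v/\sum_j p_j$. For \autoref{cor:beta} it cites one more fact, that the winner is independent of the winning time.

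Your variables are $X_u = R_u^{1/p_u} = e^{-Z_u}$. Your conditioning integral is therefore exactly a proof of those cited facts, written multiplicatively. The identity $\prod_{u\neq v} x^{p_u} = x^{1-p_v}$ is the statement that the exponential rates sum to $1$.

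What your route buys is that a single computation gives $\Prob(V^\star = v,\ \max_u X_u \le y) = p_v\, y$ for $y \in [0,1]$. This factorizes, so it yields at once:
\begin{itemize}
\item the sampling probability;
\item the independence of the winner from the maximum;
\item the uniform law of the maximum, which gives the Beta conditional law.
\end{itemize}
No external lemma is needed. You also treat ties and tokens with $p_u = 0$ explicitly; for those tokens the paper's $Z_u$ is undefined.

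The paper's route is shorter. It also sets up the representation $R_t = \exp(-p_t E)$ with $E \sim \mathcal{E}(1)$, which it reuses directly in the proof of \autoref{prop:expected_score}.
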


\begin{proof}[Proof of \autoref{prop:sampling}]
For any $v \in \V$, $R_v\overset{iid}{\sim}\mathcal{U}_{[0,1]}$ so, $- \ln(R_v)$ follows an \href{https://en.wikipedia.org/wiki/Exponential_distribution}{exponential distribution} $\mathcal{E}(1)$.
Let $Z_v := -\frac{1}{p_v} \ln(R_v)$. By construction, $Z_v\sim\mathcal{E}(p_v)$, with density $f_{Z_v}(z) = p_v e^{-p_v.z}$.
We now have:
\begin{equation}
V^\star = \arg \max_v R_v^{\frac{1}{p_v}} = \arg \min_v Z_v.
\end{equation}
A well known result about exponential laws is that:
\begin{eqnarray}
\underline{Z}  &=& \min_v Z_v \sim \mathcal{E}\left(\sum_v p_v\right)=\mathcal{E}\left(1\right),\\ 
\Prob(V^\star=v) &=& \frac{p_v}{\sum_j p_j}  = p_v.
\end{eqnarray}
This shows that for a given secret vector $\vec{r}$, the watermarking chooses a word which may be unlikely (low probability $p_{V^\star}$). 
Yet, on expectation over the secret keys, i.e., over r.v. $\vec{R} = (R_1, \ldots, R_V)$, the distribution of the chosen token follows the distribution given by the LLM.
\end{proof}

\begin{corollary}[Restated from \autoref{cor:beta}]
Conditionally on $V^\star = v$, $R_{V^\star} \sim \text{Beta}(1/p_v, 1)$.
\end{corollary}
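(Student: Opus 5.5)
We reuse the exponential reparametrization from the proof of \autoref{prop:sampling}: set $Z_j := -\frac{1}{p_j}\ln R_j$, so that $Z_j \sim \mathcal{E}(p_j)$ independently, $V^\star = \arg\min_j Z_j$, and $R_{V^\star} = e^{-p_{V^\star} \underline{Z}}$ with $\underline{Z} = \min_j Z_j$. The claim therefore reduces to identifying the conditional law of $\underline{Z}$ given $V^\star = v$ and then pushing it through the map $z \mapsto e^{-p_v z}$.

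The key step is the classical fact that, for independent exponentials, the argmin and the minimum are independent. To see this, we compute the joint quantity
\[
\Prob(V^\star = v, \underline{Z} > z) = \int_z^\infty p_v e^{-p_v s} \prod_{j\neq v} e^{-p_j s}\, ds = p_v\, e^{-z}.
\]
Here we use $\sum_j p_j = 1$ and the fact that ties have probability zero. Dividing by $\Prob(V^\star = v) = p_v$ from \autoref{prop:sampling} gives $\Prob(\underline{Z} > z \mid V^\star = v) = e^{-z}$. So, conditionally on $V^\star=v$, $\underline{Z} \sim \mathcal{E}(1)$.

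It remains to transform. For $r \in [0,1]$,
\[
\Prob(R_{V^\star} \le r \mid V^\star = v) = \Prob\!\left(\underline{Z} \ge -\tfrac{1}{p_v}\ln r \,\middle|\, V^\star=v\right) = r^{1/p_v}.
\]
This is exactly the CDF of $\text{Beta}(1/p_v, 1)$, whose density is $\frac{1}{p_v} r^{1/p_v - 1}$.

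The only delicate point is the joint computation: conditioning on the argmin must be handled through the joint event rather than by treating $Z_v$ alone. Once $\underline{Z}$ is shown to be conditionally $\mathcal{E}(1)$, the remaining steps are routine. We would also note the degenerate case $p_v = 0$, where the conditioning event is null and the claim is vacuous.
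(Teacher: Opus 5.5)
Your proposal is correct and follows the paper's proof: the same reparametrization $Z_j = -\frac{1}{p_j}\ln R_j$, the fact that $\underline{Z} \mid V^\star = v \sim \mathcal{E}(1)$, and the push-forward through $z \mapsto e^{-p_v z}$. The only differences are that you prove the independence of winner and winning time via the joint computation $\Prob(V^\star = v, \underline{Z} > z) = p_v e^{-z}$ instead of citing it as a standard property of competing exponentials, and you finish with the CDF $r^{1/p_v}$ rather than the density.
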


\begin{proof}[Proof of \autoref{cor:beta}]
From the proof above, $\underline{Z} = \min_v Z_v \sim \mathcal{E}(1)$ and $V^\star = \arg\min_v Z_v$. A standard property of competing exponentials is that the identity of the winner is independent of the winning time: $V^\star \perp \underline{Z}$. Conditioning on $V^\star = v$, we therefore still have $\underline{Z} \sim \mathcal{E}(1)$, and:
\begin{equation}
\underline{Z}  = Z_v = -\frac{1}{p_v} \ln(R_v) \sim \mathcal{E}(1),
\end{equation}
which gives $R_v = e^{-p_v E}$ with $E \sim \mathcal{E}(1)$, with p.d.f. $f_{R_v}(r) = \frac{r^{1/p_v - 1}}{p_v}$.
Therefore, $R_v \mid V^\star = v \sim \text{Beta}(1/p_v, 1)$.
\end{proof}

\begin{proposition}[Expected score under $\Halt$, restated from \autoref{prop:expected_score}]
Under $\Halt$ (text is watermarked),
$\displaystyle \mathbb{E}(S_T) \geq T +  \left( \frac{\pi^2}{6} -1 \right) H_T$,
where $H_T = - \sum_{t=1}^T p_t\ln(p_t)$ is the entropy of the completion.
\end{proposition}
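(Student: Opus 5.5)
By linearity of expectation, $\E(S_T)=\sum_{t=1}^T \E[-\ln(1-R_t)]$, so I will work one step at a time with the next-token distribution $\vec{p}^{(t)}$ fixed. Under $\Halt$ I treat the PRF values as i.i.d.\ uniform, as in \autoref{prop:sampling}. I read the entropy term as $H_T=\sum_t H(\vec{p}^{(t)})$ with $H(\vec{p})=-\sum_v p_v\ln p_v$. It then suffices to prove the per-step bound
$$\E\big[-\ln(1-R_{V^\star})\big] \;\ge\; 1 + \Big(\tfrac{\pi^2}{6}-1\Big) H(\vec{p}).$$

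\textbf{Exact per-token expectation.} Condition on the selected token. By \autoref{cor:beta}, given $V^\star=v$ we have $R_{V^\star}\sim\mathrm{Beta}(1/p_v,1)$, and by \autoref{prop:sampling}, $\Prob(V^\star=v)=p_v$. For $R\sim\mathrm{Beta}(a,1)$ with $a=1/p_v$, write $1-R^{\,}$ via the identity $-\ln(1-r)=\sum_{k\ge1} r^k/k$ and use $\E[R^k]=a/(a+k)$. This gives
$$\E[-\ln(1-R)]=\sum_{k\ge1}\frac{a}{k(a+k)}=\psi(a+1)+\gamma,$$
where $\psi$ is the digamma function. Hence
$$\E\big[-\ln(1-R_{V^\star})\big]=\sum_v p_v\,\big(\psi(1+1/p_v)+\gamma\big).$$
As a sanity check, for $p_v=1$ this equals $\psi(2)+\gamma=1$, which matches the deterministic case.

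\textbf{Reduction to a one-variable inequality.} It suffices to show, for every $p\in(0,1]$,
$$\psi(1+1/p)+\gamma \;\ge\; 1+\Big(\tfrac{\pi^2}{6}-1\Big)(-\ln p).$$
Multiplying by $p_v$ and summing over $v$ then yields the per-step bound. Setting $x=1/p\ge1$, define $g(x)=\psi(1+x)+\gamma-1-c\ln x$ with $c=\pi^2/6-1$, and aim to show $g\ge 0$ on $[1,\infty)$. We have $g(1)=0$. Moreover $g'(x)=\psi'(1+x)-c/x$, and at $x=1$ this is $\psi'(2)-c=(\pi^2/6-1)-c=0$. So the inequality is tight to first order at $x=1$, which explains the constant $\pi^2/6-1$. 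At infinity, $\psi(1+x)\sim\ln x$ and $c<1$, so $g\to\infty$.

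\textbf{Main obstacle.} The hard step is showing $g\ge0$ on the whole interval, not just near $x=1$ and as $x\to\infty$. My plan is to show $g'\ge0$ on $[1,\infty)$, i.e.\ $x\,\psi'(1+x)\ge c$. Since $g'(1)=0$, it is enough that $h(x)=x\,\psi'(1+x)$ is nondecreasing. For that I would use the series $\psi'(1+x)=\sum_{k\ge1}(x+k)^{-2}$, which gives $h(x)=\sum_{k\ge1} x/(x+k)^2$. Differentiating termwise, each term has derivative $(k-x)/(x+k)^3$, which is not of one sign, so monotonicity is not immediate from the series.

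I would first try the integral representation
$$h(x)=\int_0^\infty \frac{x\,t\,e^{-xt}}{e^{t}-1}\,dt,$$
combined with the substitution $s=xt$. This gives
$$h(x)=\int_0^\infty \frac{s\,e^{-s}}{x\,(e^{s/x}-1)}\,ds,$$
and $x(e^{s/x}-1)$ is decreasing in $x$, so the integrand increases in $x$. This yields $h(x)\ge h(1)=c$ and hence $g'\ge0$. If this step fails, my fallback is the standard bounds $\psi'(1+x)\ge 1/(x+1/2)$, which give $h\ge x/(x+1/2)\ge c$ for $x\ge2$, with $[1,2]$ checked by hand.
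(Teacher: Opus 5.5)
\textbf{Verdict.} Your argument is correct, and it proves a stronger statement than the paper does, by a different route.

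\textbf{Common starting point.} You and the paper reach the same exact per-token value $\mathcal{H}_{1/p}=\psi(1+1/p)+\gamma$. The paper gets it from $R=e^{-pE}$ and integration by parts; you get it from $-\ln(1-r)=\sum_k r^k/k$ and Beta moments.

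\textbf{The paper's route.} It bounds $\mathcal{H}_{1/p}-1=\sum_{n\ge1}\bigl(\frac{1}{n+1}-\frac{1}{n+1/p}\bigr)$ term by term. It uses only $1/p-1\ge-\ln p$ and $\frac{n+1}{n+1/p}\ge p$, which yields $\mathcal{H}_{1/p}\ge 1+(\frac{\pi^2}{6}-1)\,p\ln(1/p)$. Here $p=p_t$ is the probability of the token actually selected at step $t$. The expectation is conditional on the selected tokens, via \autoref{cor:beta}. So the paper's $H_T=-\sum_t p_t\ln p_t$ is not a sum of Shannon entropies, as you read it.

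\textbf{Your route.} You prove $\mathcal{H}_{1/p}\ge 1+(\frac{\pi^2}{6}-1)\ln(1/p)$. Its correction term is larger by a factor $1/p$, and the bound is tight to first order at $p=1$. Your monotonicity argument for $x\psi'(1+x)$, via $s=xt$ and the fact that $(e^u-1)/u$ is increasing, is sound. It fully resolves the step you flagged as the main obstacle.

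\textbf{What each approach buys.}
\begin{itemize}
\item Since $\ln(1/p)\ge p\ln(1/p)$, your inequality immediately implies the paper's conditional reading as well.
\item Averaging it over $V^\star\sim\vec p^{(t)}$ gives the genuine Shannon-entropy bound $\E[s_t]\ge 1+(\frac{\pi^2}{6}-1)H(\vec p^{(t)})$, which your reading needs. The paper's termwise bound cannot deliver this: averaging it only gives $1+(\frac{\pi^2}{6}-1)\sum_v p_v^2\ln(1/p_v)$, and $\sum_v p_v^2\ln(1/p_v)\le H(\vec p)$.
\item The price is a non-termwise, special-function argument. As you observed, the summands' derivatives $(k-x)/(x+k)^3$ change sign. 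The paper's proof, by contrast, is entirely elementary.
\end{itemize}

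\textbf{Minor correction to your fallback.} The inequality you state there is reversed. Convexity gives $\psi'(1+x)\le 1/(x+1/2)$; for example, $\psi'(2)=\frac{\pi^2}{6}-1\approx0.645<2/3$. A valid lower bound is $\psi'(1+x)\ge 1/(1+x)$, which still gives $x\psi'(1+x)\ge c$ for $x\ge2$. Since your main argument already succeeds, the fallback is not needed.
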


\begin{proof}[Proof of \autoref{prop:expected_score}]
From the corollary above, $R_t=\exp(-p_t E)$ with $E\sim \mathcal{E}(1)$, so:
\begin{align*}
    \mathbb{E}(S) &= - \mathbb{E} \left[ \sum_{t=1}^T \ln (1-\exp(-p_t E)) \right] \\
    &= - \sum_{t=1}^T \int_0^\infty \ln (1-e^{-p_t x}) e^{-x} dx \\
    &= - \sum_{t=1}^T \int_0^1 \frac{1}{p_t} r^{1/p_t-1} (-\ln ( 1 - r)) dr  \\ 
    & \text{ \qquad (by change of variable $x = -1/p_t \ln (r) $ )} 
\end{align*}
Then, using integration by parts with $u = 1 - r^{1/p_t}$ and $v = \ln(1-r)$, the integral becomes:
\begin{align*}
    -\int_0^1 \frac{1}{p_t} r^{1/p_t-1} \ln ( 1 - r) dr &= \int_0^1 \frac{1-r^{1/p_t}}{1-r} dr = \mathcal{H}_{1/p_t}
\end{align*}
where $\mathcal{H}_{z}$ is the $z$-th \href{https://en.wikipedia.org/wiki/Harmonic_number}{harmonic number}
also defined as $\mathcal{H}_{z} = \sum_{n=1}^\infty \frac{1}{n} - \frac{1}{n+z}$.
Therefore, we have:
\begin{align*}
    -\int_0^1 \frac{1}{p_t} r^{1/p_t-1} \ln ( 1 - r) dr &= 
        \sum_{n=1}^\infty \frac{1}{n} - \frac{1}{n+1/p_t} \\
    &= 1 + \sum_{n=1}^\infty \frac{1}{n+1} - \frac{1}{n+1/p_t}.
\end{align*}
Now, $\forall n\in \mathbb{N^\star}$, we have:
\begin{align*}
   (n+1)^2 \left(\frac{1}{n+1} - \frac{1}{n+1/p_t}\right) &= \frac{(n+1)(n+1/p_t) - (n+1)^2}{n + 1/p_t} \\
    &=  \frac{1+n}{1/p_t + n} \left( 1/p_t -1\right) \\
    &\geq -  \frac{1+n}{1/p_t + n} \ln(p_t) \\
    &\geq -  \, p_t \ln(p_t).
\end{align*}
Therefore, by summing over all $t\in [1,T]$,
\begin{align*}
    \mathbb{E}(S) &\geq T +  \left(\sum_{n=1}^\infty \frac{1}{(n+1)^2}\right)\left(\sum_{t=1}^T- p_t\ln(p_t) \right) \\
    &=T +  \left( \frac{\pi^2}{6} -1 \right) H_T.
\end{align*}
\end{proof}

\section{Proofs on Diversity Schemes for Gumbel Max}
\label{app:diversity_bounds}

\begin{figure*}[b!]
\centering
\figDiversityMechanisms
\caption{Overview of diversity mechanisms for Gumbel-Max watermarking. Each column shows how the token $x_t$ is generated. All methods except Adaptive Skip preserve the distortion-free property. The key distinction lies in \emph{where} randomness is injected: in the PRF value (Mixing), in the decision to watermark (Skip variants, Warmup), or in the key selection (Dual-Key Routing).}
\label{fig:diversity-mechanisms}
\end{figure*}

We derive bounds on the expected detection score $\E[S_T]$ under $\Halt$ for each diversity strategy described in \autoref{subsec:diversity} and illustrated in \autoref{fig:diversity-mechanisms}.
All bounds decompose as the standard Gumbel bound (\autoref{prop:expected_score}) plus a correction term capturing the cost of the diversity mechanism.

Recall that under the standard Gumbel scheme, $R_t \sim \text{Beta}(1/p_t, 1)$ and the expected per-token score is $\E[s_t] = \mathcal{H}_{1/p_t}$, leading to $\E[S_T] \geq T + (\frac{\pi^2}{6} - 1) H_T$. In each case below, some tokens are either unwatermarked or have a modified distribution of $R_t$. For unwatermarked tokens, $R_t \sim \mathcal{U}[0,1]$ and $\E[s_t] = 1$.

\subsection{Dual-Key Routing}

Dual-key routing (\autoref{subsec:dual_key}) maintains two secret keys $k^{(1)}$ and $k^{(2)}$. At each generation step, key $k^{(1)}$ is selected with probability $1-\alpha$ and $k^{(2)}$ with probability $\alpha$. The token is produced via Gumbel-Max using the selected key. Detection aggregates scores from both keys: $s_i = (1-\alpha) \cdot s_i^{(1)} + \alpha \cdot s_i^{(2)}$.

\begin{proposition}[Bound on expected score under dual-key routing, single-key detection]
\label{prop:twokey_score}
Under dual-key routing with parameter $\alpha \in [0,1]$ (key $k^{(1)}$ selected with probability $1-\alpha$, key $k^{(2)}$ with probability $\alpha$), detection using a single key $k^{(1)}$ yields:
\begin{equation}
    \E[S_T^{(1)}] \geq T + (1-\alpha)\left( \frac{\pi^2}{6} -1 \right) H_T
\end{equation}
\end{proposition}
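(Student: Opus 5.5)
The proof conditions, token by token, on which key was routed at step $t$, and then applies linearity of expectation. Let $B_t \in \{1,2\}$ be the routing variable, with $\Prob(B_t=1)=1-\alpha$, drawn independently of the PRF values. We model the PRF outputs under $k^{(1)}$ and $k^{(2)}$ as independent families of i.i.d.\ $\mathcal{U}_{[0,1]}$ variables, as in \autoref{prop:sampling}. Write $s_t^{(1)} = -\ln(1-R_t^{(1)})$ for the per-token score computed with $k^{(1)}$. By the tower rule,
\begin{equation*}
\E[s_t^{(1)}] = (1-\alpha)\,\E[s_t^{(1)} \mid B_t=1] + \alpha\,\E[s_t^{(1)} \mid B_t=2],
\end{equation*}
and summing over $t$ gives $\E[S_T^{(1)}]$.

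\textbf{Case $B_t=1$.} The token was produced by Gumbel-max with the same key used at detection, so we are exactly in the standard setting. By \autoref{cor:beta}, conditionally on the selected token $v$ we have $R_t^{(1)} \sim \text{Beta}(1/p_v,1)$. The computation in the proof of \autoref{prop:expected_score} then gives $\E[s_t^{(1)} \mid B_t=1] = \mathcal{H}_{1/p_t}$. The same proof also yields the per-token bound $\mathcal{H}_{1/p_t} \geq 1 + (\frac{\pi^2}{6}-1)(-p_t\ln p_t)$: its final inequality is established term by term before summing over $t$.

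\textbf{Case $B_t=2$.} The token was selected by an argmax over the PRF values under $k^{(2)}$. These are independent of the $k^{(1)}$ values, so the selected token $x_t$ is independent of the vector $\vec{R}^{(1)}$. Hence $R_t^{(1)} = R^{(1)}_{x_t}$ is still $\mathcal{U}_{[0,1]}$: condition on $x_t$, then read off a coordinate of an independent uniform vector. So $s_t^{(1)} \sim \mathcal{E}(1)$ and $\E[s_t^{(1)} \mid B_t=2] = 1$.

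Combining the two cases,
\begin{equation*}
\E[s_t^{(1)}] = (1-\alpha)\mathcal{H}_{1/p_t} + \alpha \geq 1 + (1-\alpha)\left(\tfrac{\pi^2}{6}-1\right)(-p_t\ln p_t).
\end{equation*}
Summing over $t$ gives the claimed bound.

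\textbf{Main obstacle.} The step that needs the most care is the meaning of $p_t$ and $H_T$. In the original proposition, $p_t$ is the probability of the token that was actually selected, which is random. The cleanest route is to bound conditionally on the history and on the selected token. For the $B_t=2$ branch this requires that, conditioned on the chosen token and the routing variable, $R^{(1)}_{x_t}$ is still uniform. That holds because the selection depends only on $\vec{R}^{(2)}$ and $\vec{p}^{(t)}$. The bound therefore holds pointwise in $p_t$, with whatever convention \autoref{prop:expected_score} uses for $H_T$. A further subtlety is that the routing must be independent across steps and of the keys, and the contexts must be non-repeating, so that the PRF values behave as fresh uniforms at each step. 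We adopt the same idealization as in \autoref{prop:expected_score}.
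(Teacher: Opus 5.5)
Your proposal is correct and matches the paper's proof: you condition on which key was routed, use $\E[s_t^{(1)}]=\mathcal{H}_{1/p_t}$ on the $k^{(1)}$ branch and $\E[s_t^{(1)}]=1$ on the $k^{(2)}$ branch, and then apply the harmonic-number bound from \autoref{prop:expected_score}. The differences are cosmetic: you apply that bound per token before summing rather than to $\sum_t \mathcal{H}_{1/p_t}$, and you spell out more explicitly why $R^{(1)}_{x_t}$ stays uniform when $k^{(2)}$ is used.
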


\begin{proof}
At each step $t$, key $k^{(1)}$ is selected with probability $1-\alpha$ and key $k^{(2)}$ with probability $\alpha$. For detection using key $k^{(1)}$:
\begin{itemize}
    \item With probability $1-\alpha$: the PRF value $R_t^{(1)}$ is the one used for generation, so $R_t^{(1)} \sim \text{Beta}(1/p_t, 1)$ and $\E[s_t^{(1)}] = \mathcal{H}_{1/p_t}$.
    \item With probability $\alpha$: the token was generated using key $k^{(2)}$, so $R_t^{(1)}$ is independent of the generation process. It is effectively uniform and $\E[s_t^{(1)}] = 1$.
\end{itemize}
Summing over $T$ tokens:
$$ \E[S_T^{(1)}] = (1-\alpha) \sum_{t=1}^T \mathcal{H}_{1/p_t} + \alpha T $$
Applying the standard bound (\autoref{prop:expected_score}) to $\sum_t \mathcal{H}_{1/p_t} \geq T + (\frac{\pi^2}{6}-1) H_T$:
$$ \E[S_T^{(1)}] \geq (1-\alpha)\left[T + \left(\frac{\pi^2}{6}-1\right) H_T\right] + \alpha T = T + (1-\alpha)\left(\frac{\pi^2}{6}-1\right) H_T $$
\end{proof}

This bound matches the random skip bound (\autoref{prop:periodic_score}) with $\alpha$ playing the role of the skip rate: from the perspective of a single-key detector, tokens generated with the other key look exactly like skipped tokens. The advantage of dual-key routing is that the aggregated score (\autoref{eq:early_fusion}) lets every token contribute signal from at least one key, as formalized below.

\begin{proposition}[Expected score under dual-key Early Fusion detection]
\label{prop:twokey_earlyfusion_score}
Under dual-key routing with parameter $\alpha \in [0,1]$, detecting with the aggregated score $s_i = (1-\alpha) \cdot s_i^{(1)} + \alpha \cdot s_i^{(2)}$ yields:
\begin{equation}
    \E[S_T] \geq T + \left(\alpha^2 + (1-\alpha)^2\right)\left( \frac{\pi^2}{6} -1 \right) H_T
\end{equation}
\end{proposition}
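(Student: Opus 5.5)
We follow the template of \autoref{prop:twokey_score}: compute the expected per-token aggregated score exactly by conditioning on which key was routed at step $t$, then sum over $t$ and invoke the harmonic-number bound from the proof of \autoref{prop:expected_score}. The only additional fact needed is that, for each step, the PRF vectors under $k^{(1)}$ and $k^{(2)}$ are independent and uniform (the idealized PRF assumption already used throughout). Consequently, the value under the key that was \emph{not} used for generation is independent of the selected token and is distributed as $\mathcal{U}[0,1]$.

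Fix a step $t$ with next-token probability $p_t$ for the emitted token. We split into two cases according to the routing coin.

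\emph{Key $k^{(1)}$ selected (probability $1-\alpha$).} By \autoref{cor:beta}, $R_t^{(1)}\sim\text{Beta}(1/p_t,1)$, so $\E[s_t^{(1)}]=\mathcal{H}_{1/p_t}$. Since $R_t^{(2)}$ is uniform and independent, $\E[s_t^{(2)}]=1$. The conditional mean of the aggregated score is therefore $(1-\alpha)\mathcal{H}_{1/p_t}+\alpha$.

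\emph{Key $k^{(2)}$ selected (probability $\alpha$).} Symmetrically, the conditional mean is $(1-\alpha)+\alpha\mathcal{H}_{1/p_t}$.

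Averaging over the coin gives
\begin{equation*}
\E[s_t] = \bigl((1-\alpha)^2+\alpha^2\bigr)\mathcal{H}_{1/p_t} + 2\alpha(1-\alpha).
\end{equation*}
We write $\theta_R=\alpha^2+(1-\alpha)^2$ and note that $2\alpha(1-\alpha)=1-\theta_R$.

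Summing over $t$ yields $\E[S_T]=\theta_R\sum_t\mathcal{H}_{1/p_t}+(1-\theta_R)T$. The proof of \autoref{prop:expected_score} established $\sum_t \mathcal{H}_{1/p_t}\ge T+(\frac{\pi^2}{6}-1)H_T$, and $\theta_R\ge 0$, so this inequality can be substituted directly. The $T$ terms combine as $\theta_R T+(1-\theta_R)T=T$, which gives exactly the claimed bound.

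The only delicate point is justifying that, conditional on the routing choice and on the emitted token, the unused key's PRF value remains uniform. This needs independence of the two keys' PRF outputs, and it also needs the token-selection step to depend only on the used key. Both hold by construction under the idealized-PRF model, exactly as in the single-key detection argument of \autoref{prop:twokey_score}. The rest of the argument is bookkeeping.
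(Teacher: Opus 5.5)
Your proposal is correct and matches the paper's proof essentially step for step: you condition on the routed key to obtain the per-token mean $\theta_R\,\mathcal{H}_{1/p_t}+(1-\theta_R)$, sum over $t$, and substitute the standard harmonic-number bound. Your explicit points about the unused key's PRF value being independent and uniform, and about needing $\theta_R\ge 0$ to substitute the inequality, are left implicit in the paper but do not change the argument.
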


\begin{proof}
At each step $t$, key $k^{(1)}$ is active with probability $1-\alpha$ and key $k^{(2)}$ with probability $\alpha$. The aggregated per-token score is $T_t = (1-\alpha) s_t^{(1)} + \alpha s_t^{(2)}$.
\begin{itemize}
    \item If $k^{(1)}$ was used (prob.\ $1-\alpha$): $s_t^{(1)}$ has the watermarked distribution ($\E[s_t^{(1)}] = \mathcal{H}_{1/p_t}$) and $s_t^{(2)}$ is uniform ($\E[s_t^{(2)}] = 1$), giving $\E[T_t] = (1-\alpha)\mathcal{H}_{1/p_t} + \alpha$.
    \item If $k^{(2)}$ was used (prob.\ $\alpha$): $s_t^{(1)}$ is uniform ($\E[s_t^{(1)}] = 1$) and $s_t^{(2)}$ has the watermarked distribution ($\E[s_t^{(2)}] = \mathcal{H}_{1/p_t}$), giving $\E[T_t] = (1-\alpha) + \alpha\mathcal{H}_{1/p_t}$.
\end{itemize}
Taking expectation over the key choice:
\begin{align*}
    \E[T_t] &= (1-\alpha)\bigl[(1-\alpha)\mathcal{H}_{1/p_t} + \alpha\bigr] + \alpha\bigl[(1-\alpha) + \alpha\mathcal{H}_{1/p_t}\bigr] \\
    &= \bigl[(1-\alpha)^2 + \alpha^2\bigr]\mathcal{H}_{1/p_t} + 2\alpha(1-\alpha) \\
    &= \theta_R\, \mathcal{H}_{1/p_t} + (1-\theta_R)
\end{align*}
where $\theta_R = \alpha^2 + (1-\alpha)^2$. Summing over $T$ tokens:
$$ \E[S_T] = \theta_R \sum_{t=1}^T \mathcal{H}_{1/p_t} + (1-\theta_R)\, T $$
Applying the standard bound $\sum_t \mathcal{H}_{1/p_t} \geq T + (\frac{\pi^2}{6}-1) H_T$:
$$ \E[S_T] \geq \theta_R\left[T + \left(\frac{\pi^2}{6}-1\right) H_T\right] + (1-\theta_R)\, T = T + \theta_R\left(\frac{\pi^2}{6}-1\right) H_T $$
\end{proof}

Note that $\theta_R = \alpha^2 + (1-\alpha)^2 \leq 1-\alpha$ for $\alpha \leq 0.5$, so the expected score under Early Fusion is actually lower than under single-key detection (\autoref{prop:twokey_score}). The power advantage of Early Fusion comes not from a higher expected score, but from the reduced null variance ($\theta_R$ per token instead of $1$), which yields a better Z-score as shown below.

\subsubsection{Power Analysis: Early vs.\ Late Fusion}
\label{subsec:early_vs_late_proof}

We analyze the statistical power of the Early Fusion test compared to a classical single-key baseline and alternative Late Fusion strategies using the Z-score (Signal-to-Noise Ratio) separation:
$$Z = \frac{\mathbb{E}[S|H_1] - \mathbb{E}[S|H_0]}{\sqrt{\mathrm{Var}(S|H_0)}}$$

Assume a standard Gumbel-Max test where an unwatermarked token yields an expected score of $1$ with a variance of $1$, and a successfully watermarked token yields an expected score $\mu_w > 1$.

\paragraph{Single-Key Baseline.}
For a traditional single-key test with $n$ tokens, the expected score sum under $H_1$ is $n\mu_w$, and under $H_0$ is $n$. The null variance is $n$.
$$Z_{\text{base}} = \frac{n\mu_w - n}{\sqrt{n}} = \sqrt{n}(\mu_w - 1)$$

\paragraph{Early Fusion: Unweighted ($w=0.5$).}
For the unweighted test, the expected score per token is $\mathbb{E}[\bar{s}_i] = \frac{\mu_w + 1}{2}$ regardless of which key generated it. The null variance is $\mathrm{Var}(\bar{s}_i) = \frac{1^2 + 1^2}{2^2} = 0.5$.
$$Z_{\text{early}} = \frac{n(\frac{\mu_w + 1}{2}) - n}{\sqrt{0.5n}} = \frac{n(\mu_w - 1)}{2\sqrt{0.5n}} = \frac{\sqrt{n}(\mu_w - 1)}{\sqrt{2}}$$
Thus, $Z_{\text{early}} = \frac{1}{\sqrt{2}} Z_{\text{base}} \approx 0.707 Z_{\text{base}}$. This proves that unweighted Early Fusion is perfectly \textbf{invariant to $\alpha$}, but requires exactly twice as many tokens ($2n$) as the single-key baseline to reach the same statistical confidence.

\paragraph{Early Fusion: Optimal Weighted ($w=\alpha$).}
If the routing probability $\alpha$ is known (e.g., via speculative decoding acceptance rates) and we use optimal weights $w=\alpha$, the expected token score under $H_1$ becomes $\mathbb{E}[s_i] = \alpha(\alpha \mu_w + 1 - \alpha) + (1-\alpha)((1-\alpha)\mu_w + \alpha)$. Simplifying this and calculating the Z-score yields:
$$Z_{\alpha} = \sqrt{n}(\mu_w - 1) \sqrt{\alpha^2 + (1-\alpha)^2}$$
When $\alpha = 0.5$ (maximum diversity), $Z_{\alpha} = Z_{\text{early}} \approx 0.707 Z_{\text{base}}$. When $\alpha = 0.1$ (typical for draft model acceptance in speculative decoding), $Z_{\alpha} = \sqrt{0.1^2 + 0.9^2} Z_{\text{base}} \approx 0.905 Z_{\text{base}}$. This demonstrates that the weighted test recovers nearly 30\% of the statistical power lost to diversity when the generation rate is skewed.

\paragraph{Superiority over Late Fusion.}
We can now formally demonstrate why token-level aggregation outperforms independent per-key testing (late fusion). Late fusion evaluates each key's scores independently across the entire sequence ($S^{(1)} = \sum s_i^{(1)}$ and $S^{(2)} = \sum s_i^{(2)}$) and then combines their resulting p-values (e.g., via Fisher's method or by taking the minimum p-value).

Assuming without loss of generality that $\alpha \ge 0.5$, the expected signal for the dominant key over the null is $n\alpha(\mu_w - 1)$. The variance remains $n$. The statistical power of the combined Late Fusion test is ultimately bounded by the strongest independent signal it receives, which achieves at best:
$$Z_{\text{late}} \approx \frac{n\alpha(\mu_w - 1)}{\sqrt{n}} = \sqrt{n}\alpha(\mu_w - 1) = \alpha Z_{\text{base}}$$

To prove optimal Early Fusion natively dominates Late Fusion, we compare their Z-scores. We must show that $Z_{\alpha} > Z_{\text{late}}$, which simplifies to proving $\sqrt{\alpha^2 + (1-\alpha)^2} > \alpha$ for any $\alpha \in (0, 1)$:
$$\alpha^2 + (1-\alpha)^2 = \alpha^2 + (1 - 2\alpha + \alpha^2) = 2\alpha^2 - 2\alpha + 1$$
We test the inequality $2\alpha^2 - 2\alpha + 1 > \alpha^2$:
$$\alpha^2 - 2\alpha + 1 > 0 \implies (\alpha - 1)^2 > 0$$
Since $(\alpha - 1)^2$ is strictly positive for all $\alpha \in (0, 1)$, it follows that $Z_{\alpha} > Z_{\text{late}}$. Therefore, token-level aggregation strictly dominates independent per-key testing by preserving the complementary signal distributed across both keys ($k^{(1)}$ and $k^{(2)}$) at the token level, rather than systematically treating the minority key's tokens as noise during independent sequence-level evaluations.

\subsection{Stochastic Mixing}

Stochastic mixing introduces true randomness by mixing the deterministic PRF value $r_1$ with a Bernoulli coin. Given a parameter $a \in (0,1)$, the mixed value is $r = a \cdot r_1$ with probability $a$, or $r = a + (1-a) \cdot r_1$ with probability $1-a$. The mixed $r$ remains uniform (distortion-free), but detection uses only $r_1$.

\begin{proposition}[Bound on expected score under mixing]
\label{prop:mixing_score}
Under stochastic mixing with parameter $a \in (0, 1)$, detection is performed using $r_1$ (the deterministic PRF value). The expected score satisfies:
\begin{equation}
    \E[S_T] > T + \left( \frac{\pi^2}{6} -1 \right) H_T + \sum_{t=1}^T (1 - a^{1/p_t})\ln(1-a)
\end{equation}
\end{proposition}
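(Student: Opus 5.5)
The plan is to work token by token: express the detection value $r_1$ of the selected token as a deterministic function of the mixed value $r$ used for generation. Then I reuse the conditional law of $r$ from \autoref{cor:beta} and the harmonic-number computation in the proof of \autoref{prop:expected_score}.

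\textbf{Step 1: $r_1$ as a function of $r$.} The two branches of the mixing map have disjoint images. The branch $r = a\, r_1$ lands in $[0,a)$, and the branch $r = a + (1-a) r_1$ lands in $[a,1]$. So the branch is recoverable from $r$ alone:
\[
r_1 = r/a \ \text{ if } r<a, \qquad r_1 = (r-a)/(1-a) \ \text{ if } r\geq a.
\]
Since $r$ is uniform and independent across candidate tokens, the argument of \autoref{prop:sampling} applies to the vector of mixed values. Hence, conditionally on the selected token having probability $p_t$, its mixed value satisfies $r \sim \mathrm{Beta}(1/p_t,1)$ by \autoref{cor:beta}, with density $f_t(r) = r^{1/p_t-1}/p_t$ on $[0,1]$.

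\textbf{Step 2: compare the per-token score to the unmixed one.} I split $\E[-\ln(1-r_1)]$ over the two ranges of $r$.

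On $[a,1]$ we have $1 - r_1 = (1-r)/(1-a)$, so the integrand is exactly $-\ln(1-r) + \ln(1-a)$.

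On $[0,a)$ we have $r_1 = r/a > r$ for $r>0$, so $-\ln(1-r/a) > -\ln(1-r)$. This strict inequality holds on a set of positive measure, which is what gives the strict inequality in the statement.

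Adding the two pieces gives
\[
\E[s_t] > \int_0^1 -\ln(1-r)\, f_t(r)\,dr + \ln(1-a)\,\Prob(r\geq a) = \mathcal{H}_{1/p_t} + (1-a^{1/p_t})\ln(1-a).
\]
Here the identity $\int_0^1 -\ln(1-r) f_t(r)\,dr = \mathcal{H}_{1/p_t}$ is the one computed in the proof of \autoref{prop:expected_score}, and $\Prob(r\geq a) = 1 - a^{1/p_t}$ follows directly from the Beta$(1/p_t,1)$ CDF.

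\textbf{Step 3: sum over tokens.} Summing over $t$ and applying $\sum_t \mathcal{H}_{1/p_t} \geq T + (\frac{\pi^2}{6}-1)H_T$, as established in the proof of \autoref{prop:expected_score}, yields the claimed bound.

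\textbf{Main obstacle.} The delicate step is Step 1: justifying that the score can be analyzed through the mixed value $r$ alone. This needs two facts. First, the coin is implicitly determined by $r$, so there is no extra randomness to average over once we condition on $r$. Second, the Beta law from \autoref{cor:beta} applies to $r$ rather than to $r_1$. I would state this explicitly by noting that $(r_1,\text{coin}) \mapsto r$ is a measure-preserving bijection from $\mathcal{U}[0,1]\times\mathrm{Bernoulli}(a)$ onto $\mathcal{U}[0,1]$, up to null sets. After that, the rest is a direct one-line comparison of integrands.
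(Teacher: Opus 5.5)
Your proof is correct and follows essentially the same route as the paper's. You invert the mixing map piecewise on the $\mathrm{Beta}(1/p_t,1)$-distributed selected value, obtain the exact $\ln(1-a)$ shift on $[a,1]$, bound the $[0,a)$ piece from below via $r/a \geq r$, and then apply the harmonic-number bound from \autoref{prop:expected_score}. Your positive-measure argument on $(0,a)$ is a small improvement: it delivers the strict inequality claimed in the statement, whereas the paper's proof only establishes $\geq$.
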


\begin{proof}
Let $R \sim \text{Beta}(1/p, 1)$ be the random variable selected during sampling. The score for a single token is $s = -\ln(1-r_1)$, where $r_1$ is recovered from $R$ as: $r_1 = R/a$ if $R \in [0, a]$, and $r_1 = (R-a)/(1-a)$ if $R \in [a, 1]$.

We decompose $\E[s]$ by interval:
$$ \E[s] = \underbrace{\int_0^{a} -\ln(1 - r/a) f_R(r) dr}_{I_1} + \underbrace{\int_{a}^1 -\ln\!\left(1 - \frac{r-a}{1-a}\right) f_R(r) dr}_{I_2} $$

For $I_2$: using $1 - \frac{r-a}{1-a} = \frac{1-r}{1-a}$:
$$ I_2 = \int_{a}^1 [-\ln(1-r) + \ln(1-a)] f_R(r) dr = \int_{a}^1 -\ln(1-r) f_R(r) dr + (1 - a^{1/p})\ln(1-a) $$
since $\Prob(R > a) = 1 - a^{1/p}$.

For $I_1$: since $r/a \geq r$ for $r \in [0, a]$, we have $-\ln(1-r/a) \geq -\ln(1-r)$, giving $I_1 \geq \int_0^{a} -\ln(1-r) f_R(r) dr$.

Summing yields $\E[s] \geq \E[s_{\text{std}}] + (1 - a^{1/p})\ln(1-a)$ where $\E[s_{\text{std}}] = \mathcal{H}_{1/p}$. Applying the standard bound and summing over $T$ tokens gives the result.
\end{proof}

\begin{proposition}[Distortion-freeness of mixing]
\label{prop:mixing_distortion_free}
The mixed variable $r$ follows $\mathcal{U}[0, 1]$, so the sampled token follows the model distribution $\vec{p}$.
\end{proposition}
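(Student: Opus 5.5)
We show directly that the CDF of $r$ is that of a uniform variable. The mixing coin $B$ takes value $1$ with probability $a$ and $0$ with probability $1-a$. The modelling assumption is that $B$ is drawn with true randomness, independently of $r_1 \sim \mathcal{U}[0,1]$. Then
\[
r = a r_1 \ \text{if } B=1, \qquad r = a + (1-a) r_1 \ \text{if } B=0 .
\]

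For $x \in [0,1]$ we condition on $B$ and split into two cases according to whether $x \le a$ or $x > a$.

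\emph{Case $x \le a$.} On the branch $B=0$ we have $r \ge a$, so this branch contributes nothing (up to the null event $r = a$). On the branch $B=1$, independence gives $\Prob(a r_1 \le x) = x/a$. Hence
\[
\Prob(r \le x) = a \cdot \frac{x}{a} = x .
\]

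\emph{Case $x > a$.} The branch $B=1$ contributes $a$, since $a r_1 \le a < x$ always. On the branch $B=0$,
\[
\Prob\bigl(a + (1-a) r_1 \le x\bigr) = \frac{x-a}{1-a}.
\]
Hence
\[
\Prob(r \le x) = a + (1-a)\cdot\frac{x-a}{1-a} = x .
\]

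In both cases $\Prob(r\le x)=x$, so $r \sim \mathcal{U}[0,1]$. Equivalently, $r$ is the $(a,1-a)$ mixture of $\mathcal{U}[0,a]$ and $\mathcal{U}[a,1]$, whose density is $1$ on $[0,1]$.

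To transfer this to the sampled token, apply the construction coordinatewise. For each $v$, the mixed value $r_v$ is built from $r_{1,v}$ together with its own independent coin (or, as in \autoref{fig:diversity-mechanisms}, from an independent fresh uniform). The $r_v$ are then i.i.d.\ $\mathcal{U}[0,1]$, and \autoref{prop:sampling} gives $\Prob(\arg\max_v r_v^{1/p_v} = v) = p_v$. In the inverse-transform view, where a single quantile $u$ is mixed, uniformity of $u$ alone already yields $\Prob(x_t=v)=p_v$.

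The only delicate point is this independence requirement: the coins must be independent of the PRF values and across tokens, so that the vector $(r_v)_v$ is i.i.d.\ uniform. We state it explicitly as part of the scheme. Everything else is the short CDF computation above.
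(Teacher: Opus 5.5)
Your proof is correct and is essentially the paper's: the same two-case CDF computation showing $\Prob(r \le x) = x$ for $x \le a$ (only the $a r_1$ branch contributes) and for $x > a$ (giving $a + (1-a)\cdot\frac{x-a}{1-a}$). Your requirement that the coins be independent across tokens, so that the mixed vector is i.i.d.\ uniform and \autoref{prop:sampling} applies, makes explicit a step the paper leaves implicit when it passes from uniformity of $r$ to the token distribution.
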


\begin{proof}
Let $F_R(x) = \Prob(r \leq x)$. For $x \leq a$: $r \leq x$ requires $r_0=0$, giving $\Prob(r \leq x) = a \cdot \Prob(r_1 \leq x/a) = a \cdot x/a = x$. For $x > a$: $\Prob(r \leq x) = a + (1-a) \cdot \frac{x-a}{1-a} = x$. Since $F_R(x) = x$, we have $r \sim \mathcal{U}[0,1]$.
\end{proof}

\paragraph{Behavior of the penalty.}
The penalty $(1 - a^{1/p})\ln(1-a)$ is always non-positive (since $\ln(1-a) < 0$) and vanishes at both extremes: as $a \to 0$, $\ln(1-a) \to 0$; as $a \to 1$, $(1-(1-\epsilon)^{1/p})\ln(\epsilon) \sim \frac{\epsilon}{p}\ln(\epsilon) \to 0$.
This is expected since in these extremes, all tokens take the same route, which makes it similar to vanilla Gumbel-max.

\subsection{Random Skip}

Random skip disables the watermark independently at each token with probability $\alpha$, reverting to standard sampling from $\vec{p}$. This blindly injects randomness to break deterministic loops, uniformly attenuating the detection signal.

\begin{proposition}[Bound on expected score under periodic skip]
\label{prop:periodic_score}
Under periodic skip with rate $\alpha \in [0, 1]$ (each token is independently skipped with probability $\alpha$), the expected score satisfies:
\begin{equation}
    \E[S_T] \geq T + (1-\alpha)\left( \frac{\pi^2}{6} -1 \right) H_T
\end{equation}
\end{proposition}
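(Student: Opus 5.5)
The argument follows the same token-by-token conditioning used for \autoref{prop:twokey_score}. At each step $t$ I will introduce an independent Bernoulli indicator $B_t$ with $\Prob(B_t=1)=\alpha$ that records whether the watermark is skipped. I will then compute the conditional expectation of the per-token score $s_t = -\ln(1-R_t)$ in each branch and average over $B_t$. Throughout, $p_t$ denotes the probability of the token selected at step $t$, as in \autoref{prop:expected_score}.

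\textbf{Case 1: $B_t=0$ (watermark applied).} The token is chosen by the Gumbel-max rule, so \autoref{cor:beta} gives $R_t\sim\text{Beta}(1/p_t,1)$. The computation in the proof of \autoref{prop:expected_score} then yields $\E[s_t\mid B_t=0]=\mathcal{H}_{1/p_t}$.

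\textbf{Case 2: $B_t=1$ (skipped).} The token is drawn from $\vec{p}$ using true randomness that is independent of the PRF. Hence $R_t$, which the detector recomputes from the key and the context, is uniform on $[0,1]$ and independent of the selected token. This gives $\E[s_t\mid B_t=1]=1$.

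Combining the two cases, $\E[s_t]=(1-\alpha)\mathcal{H}_{1/p_t}+\alpha$. Summing over $t$ gives $\E[S_T]=(1-\alpha)\sum_t\mathcal{H}_{1/p_t}+\alpha T$. I then apply the per-token inequality $\mathcal{H}_{1/p_t}\geq 1+(\frac{\pi^2}{6}-1)(-p_t\ln p_t)$ established in the proof of \autoref{prop:expected_score}, summed as $\sum_t\mathcal{H}_{1/p_t}\ge T+(\frac{\pi^2}{6}-1)H_T$. Since $1-\alpha\ge0$, the inequality is preserved after multiplying by $(1-\alpha)$ and adding $\alpha T$, which gives $T+(1-\alpha)(\frac{\pi^2}{6}-1)H_T$.

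\textbf{Main obstacle.} The delicate step is justifying that $R_t$ is uniform on skipped steps. I will argue that the PRF value at a fresh context is pseudo-uniform and that the sampled token is independent of it, because the sampling noise is independent of the key. A related subtlety is that $p_t$ and $H_T$ are random, since they depend on the generated prefix. To handle this, I will condition on the prefix at each step and use the tower property, treating $H_T$ as the realized entropy exactly as \autoref{prop:expected_score} does.
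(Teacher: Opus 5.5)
Your proposal is correct and follows the paper's proof essentially step for step. You condition on the independent skip coin and use $\E[s_t]=\mathcal{H}_{1/p_t}$ on watermarked steps and $\E[s_t]=1$ on skipped steps, which gives $(1-\alpha)\sum_t\mathcal{H}_{1/p_t}+\alpha T$, and then you apply the per-token bound from the proof of \autoref{prop:expected_score}. Your extra remarks only make explicit what the paper leaves implicit: $R_t$ is uniform on skipped steps because the coin and the sampling noise are independent of $\vec{R}$, which is exactly what fails for adaptive skip (\autoref{rem:skip_subtlety}), and you handle the random prefix by conditioning and the tower property.
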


\begin{proof}
At each step $t$, with probability $1-\alpha$ the watermark is active and $\E[s_t] = \mathcal{H}_{1/p_t}$; with probability $\alpha$ the watermark is skipped and $\E[s_t] = 1$. Summing:
$$ \E[S_T] = (1-\alpha) \sum_{t=1}^T \mathcal{H}_{1/p_t} + \alpha T $$
Applying the standard bound (\autoref{prop:expected_score}) to $\sum_t \mathcal{H}_{1/p_t} \geq T + (\frac{\pi^2}{6}-1) H_T$:
$$ \E[S_T] \geq (1-\alpha)\left[T + \left(\frac{\pi^2}{6}-1\right) H_T\right] + \alpha T = T + (1-\alpha)\left(\frac{\pi^2}{6}-1\right) H_T $$
\end{proof}

The entropy-dependent signal is uniformly attenuated by a factor $(1-\alpha)$, regardless of the token entropy. This is wasteful compared to adaptive strategies that selectively skip only low-signal tokens.

\subsection{Adaptive Skip}

Adaptive skip disables the watermark selectively when the model is highly confident. At each step, the token is produced via Gumbel-Max, but if the winning PRF value $R_{V^\star}$ falls below a threshold $\tau$, the watermark is discarded and the token is resampled from $\vec{p}$. Low $R_{V^\star}$ indicates the token won due to high probability mass rather than a favorable PRF draw, so skipping it sacrifices little detection signal.

\begin{proposition}[Adaptive skip is not distortion-free]
\label{prop:adaptive_distortion}
Under adaptive skip with threshold $\tau \in (0,1)$, the output distribution is:
\begin{equation}
    \Prob(\text{output} = v) = p_v \left(1 - \tau^{1/p_v} + \sum_{w \in \V} p_w \, \tau^{1/p_w}\right)
\end{equation}
which differs from $p_v$ unless $\vec{p}$ is uniform.
\end{proposition}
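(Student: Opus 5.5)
The output is a mixture of two events, split according to whether the winning PRF value falls below the threshold. We compute each part with the tools from \autoref{prop:sampling} and \autoref{cor:beta}. Let $V^\star = \arg\max_v R_v^{1/p_v}$. Under adaptive skip, the output equals $v$ in exactly two cases:
\begin{itemize}
\item $V^\star = v$ and $R_{V^\star} \geq \tau$, so the watermarked token is kept;
\item $R_{V^\star} < \tau$, so the token is discarded, and the fresh sample from $\vec{p}$, drawn independently of $\vec{R}$, happens to be $v$.
\end{itemize}
Hence
\begin{equation*}
\Prob(\text{output}=v) = \Prob(V^\star = v,\, R_{V^\star}\geq \tau) + p_v\,\Prob(R_{V^\star} < \tau).
\end{equation*}

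\textbf{The kept branch.} Factor the first term as $\Prob(V^\star = v)\,\Prob(R_v \geq \tau \mid V^\star = v)$. By \autoref{prop:sampling}, the first factor is $p_v$. By \autoref{cor:beta}, conditionally on $V^\star=v$ we have $R_v \sim \text{Beta}(1/p_v,1)$, whose CDF is $r^{1/p_v}$. So the conditional probability is $1-\tau^{1/p_v}$, and the first term equals $p_v(1-\tau^{1/p_v})$.

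\textbf{The skip branch.} Apply the law of total probability over the winner:
\begin{equation*}
\Prob(R_{V^\star}<\tau) = \sum_{w\in\V} \Prob(V^\star=w)\,\Prob(R_w<\tau\mid V^\star=w) = \sum_{w} p_w \tau^{1/p_w}.
\end{equation*}
Adding the two branches and factoring out $p_v$ gives the stated formula
\begin{equation*}
\Prob(\text{output}=v) = p_v\Bigl(1-\tau^{1/p_v} + \sum_{w} p_w\tau^{1/p_w}\Bigr).
\end{equation*}

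\textbf{The non-uniform claim.} This is the only step that needs care. Define $c = \sum_w p_w \tau^{1/p_w}$ and $g(p) = \tau^{1/p}$. The output equals $p_v$ for all $v$ in the support exactly when $g(p_v) = c$ for every $v$ with $p_v>0$.

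For $\tau\in(0,1)$ we have $\ln\tau < 0$, and $1/p$ is strictly decreasing in $p$. So $g(p) = \exp\bigl((\ln\tau)/p\bigr)$ is strictly increasing in $p$, hence injective on $(0,1]$.

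Suppose the distribution has two tokens with different positive probabilities $p_v \neq p_{v'}$. Then $g(p_v)\neq g(p_{v'})$, so at least one of them differs from $c$, and the output distribution differs from $\vec{p}$ on that token.

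Conversely, if all positive $p_v$ are equal, then every $g(p_v)=c$ and no distortion occurs. This is the case where $\vec{p}$ is uniform on its support. Zero-probability tokens are fine either way: both sides are $0$.

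So "uniform" in the statement should be read as "uniform on its support". The only other subtlety is the independence of the resample from $\vec{R}$, which the scheme provides by construction.
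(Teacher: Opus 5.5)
Your proof is correct and follows the paper's argument: the same kept/skipped decomposition via \autoref{prop:sampling} and \autoref{cor:beta}, then a comparison of $\tau^{1/p_v}$ against the constant $\sum_{w} p_w \tau^{1/p_w}$. Your strict-monotonicity argument for $p \mapsto \tau^{1/p}$ and your ``uniform on its support'' reading sharpen a step the paper states loosely: the paper asserts equality iff $p_v = 1/|\V|$ for all tokens, which overlooks zero-probability tokens (for example, a deterministic $\vec{p}$ is not distorted).
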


\begin{proof}
Let $V^\star$ be the initial token selected by the Gumbel-max trick, where $\Prob(V^\star = v) = p_v$. 
By \autoref{cor:beta}, the conditional distribution of the pseudo-random value $R_v$ is $\text{Beta}(1/p_v, 1)$. 
The watermark is skipped if $R_{V^\star} < \tau$.

The marginal probability of outputting a specific token $v$ decomposes into two disjoint events: keeping the initially selected $v$, or skipping and resampling $X_t = v$ from the original distribution $\vec{p}$:
\begin{align*}
\Prob(\text{output} = v) &= \Prob(V^\star = v, R_v \geq \tau) + \Prob(\text{skip}) \Prob(X_t = v) \\
&= \Prob(V^\star = v) \Prob(R_v \geq \tau \mid V^\star = v) + \left( \sum_{w \in \V} \Prob(V^\star = w) \Prob(R_w < \tau \mid V^\star = w) \right) \Prob(X_t = v) \\
&= p_v \cdot \Prob(R_v \geq \tau \mid V^\star = v) + \left( \sum_{w \in \V} p_w \Prob(R_w < \tau \mid V^\star = w) \right) p_v \\
&= p_v \left( 1 - \tau^{1/p_v} \right) + p_v \sum_{w \in \V} p_w \tau^{1/p_w} \\
&= p_v \left( 1 - \tau^{1/p_v} + \sum_{w \in \V} p_w \tau^{1/p_w} \right)
\end{align*}
For the mechanism to be distortion-free, we require $\Prob(\text{output} = v) = p_v$ for all $v \in \V$. This implies:
$$ \tau^{1/p_v} = \sum_{w \in \V} p_w \tau^{1/p_w} $$
The right-hand side is a constant across all tokens, whereas the left-hand side strictly depends on $p_v$. This equality holds if and only if all tokens have the exact same probability $p_v = 1/|\V|$. 
\end{proof}

\begin{remark}
The distortion shifts mass from high-confidence tokens (large $p_v$, frequently skipped since $\tau^{1/p_v}$ is large) toward low-confidence tokens (small $p_v$, rarely skipped). 
For example, with $p_1 = 0.9$, $p_2 = 0.1$, and $\tau = 0.5$: the output probabilities become $(0.858, 0.142)$ instead of $(0.9, 0.1)$. 
In practice, $\tau$ is small (e.g., $\tau = 0.1$), so the distortion is mild.
\end{remark}

\begin{proposition}[Bound on expected score under adaptive skip]
\label{prop:adaptive_score}
Under adaptive skip with threshold $\tau \in [0, 1]$ (the watermark is disabled when $R_{V^\star}^{(t)} < \tau$), the expected score satisfies:
\begin{equation}
    \E[S_T] \geq T + \left( \frac{\pi^2}{6} -1 \right) H_T + \ln(1-\tau) \sum_{t=1}^T \tau^{1/p_t}
\end{equation}
The correction term is always non-positive, vanishing as $\tau \to 0$.
\end{proposition}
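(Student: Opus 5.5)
The plan is to work token by token, in the same style as \autoref{prop:periodic_score} and \autoref{prop:mixing_score}. First I bound the expected per-token score $\E[s_t]$ by the standard Gumbel value $\mathcal{H}_{1/p_t}$ minus a correction. Then I sum over $t$ and apply the bound $\sum_t \mathcal{H}_{1/p_t} \geq T + (\frac{\pi^2}{6}-1)H_T$ from \autoref{prop:expected_score}.

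Fix a step $t$ and let $p = p_t$ be the probability of the Gumbel-selected token $V^\star$, following the convention of the earlier proofs. By \autoref{cor:beta}, conditionally on $V^\star$ the winning value $R := R_{V^\star}$ has density $f(r) = \frac{1}{p} r^{1/p-1}$ on $[0,1]$, so $\Prob(R<\tau) = \tau^{1/p}$. I split $\E[s_t]$ on the skip event:
\begin{equation*}
\E[s_t] = \int_\tau^1 -\ln(1-r)\, f(r)\,dr \;+\; \Prob(R<\tau)\,\E[s_t \mid \text{skip}].
\end{equation*}
On the keep event the emitted token is $V^\star$ itself, so its score is $-\ln(1-R)$. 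On the skip event the token is resampled from $\vec{p}$. Its detection score $-\ln(1-R^{(t)}_{x_t})$ is nonnegative whatever its distribution, so I lower bound the second term by $0$.

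This is the step I expect to be the main obstacle. Conditionally on a skip, the PRF values of the remaining tokens are no longer exactly uniform, because we know $V^\star$ won with a small value. Treating the resampled token's score as $\mathcal{E}(1)$, with mean $1$, would need a justification. Using only $s_t \geq 0$ avoids this issue entirely and still yields the claimed bound. (If I had the uniform heuristic, it would give the extra nonnegative term $\tau^{1/p}$, which I am discarding anyway.)

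For the first term I write $\int_\tau^1 = \int_0^1 - \int_0^\tau$. The full integral equals $\mathcal{H}_{1/p}$ by the computation in the proof of \autoref{prop:expected_score}. Since $-\ln(1-r)$ is increasing in $r$, on $[0,\tau]$ it is at most $-\ln(1-\tau)$, so
\begin{equation*}
\int_0^\tau -\ln(1-r) f(r)\,dr \leq -\ln(1-\tau)\,\tau^{1/p}.
\end{equation*}
Hence $\E[s_t] \geq \mathcal{H}_{1/p_t} + \ln(1-\tau)\,\tau^{1/p_t}$. Summing over $t$ and applying the bound from \autoref{prop:expected_score} gives the statement.

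For the final remark, the correction term $\ln(1-\tau)\sum_t \tau^{1/p_t}$ is nonpositive because $\ln(1-\tau) \leq 0$. It tends to $0$ as $\tau \to 0$, since both $\ln(1-\tau)$ and each $\tau^{1/p_t}$ vanish.
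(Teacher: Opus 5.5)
Your proposal is correct and follows essentially the same route as the paper's proof: condition on $V^\star = v$, use the $\text{Beta}(1/p_v,1)$ law from \autoref{cor:beta} to split on the skip event, discard the skipped contribution using only $s_t \geq 0$, and bound $\int_0^\tau -\ln(1-r)\,f(r)\,dr \leq -\ln(1-\tau)\,\tau^{1/p_v}$ before summing and invoking \autoref{prop:expected_score}. Your refusal to treat the resampled token's score as having mean $1$ matches the paper's own caveat in \autoref{rem:skip_subtlety}.
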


\begin{proof}
We condition on the identity of the selected token. By \autoref{prop:sampling}, $\Prob(V^\star = v) = p_v$. By \autoref{cor:beta}, \emph{conditioned on} $V^\star = v$, the PRF value $R_v \sim \text{Beta}(1/p_v, 1)$ with density $f(r) = \frac{1}{p_v} r^{1/p_v - 1}$ and CDF $F(r) = r^{1/p_v}$. The skip condition $R_v < \tau$ therefore has conditional probability $\Prob(R_v < \tau \mid V^\star = v) = \tau^{1/p_v}$. This decreases with entropy: for confident tokens ($p_v \to 1$), $\tau^{1/p_v} \to \tau$ (frequent skipping); for unlikely tokens ($p_v \to 0$), $\tau^{1/p_v} \to 0$ (rare skipping).

We now bound $\E[s_t \mid V^\star = v]$. Decomposing over the skip decision:
\begin{align*}
\E[s_t \mid V^\star = v] &= \underbrace{\E[-\ln(1-R_v) \cdot \mathbf{1}_{R_v \geq \tau} \mid V^\star = v]}_{\text{not skipped: use watermarked token}} + \underbrace{\E[-\ln(1-R_{X_t}) \cdot \mathbf{1}_{R_v < \tau} \mid V^\star = v]}_{\text{skipped: resample } X_t \sim \vec{p}}
\end{align*}
The first term integrates the score over the non-skip region using the conditional density of $R_v$:
$$ \E[-\ln(1-R_v) \cdot \mathbf{1}_{R_v \geq \tau} \mid V^\star = v] = \int_\tau^1 \frac{-\ln(1-r)}{p_v} r^{1/p_v - 1} dr $$
For the second term, the replacement token $X_t \sim \vec{p}$ is drawn with independent randomness, but its PRF value $R_{X_t}$ comes from the \emph{same} realization $\vec{R}$, so we cannot claim its expected score is $1$ (see \autoref{rem:skip_subtlety}). Since $-\ln(1-R_{X_t}) \geq 0$, the second term is non-negative, so:
$$ \E[s_t \mid V^\star = v] \geq \int_\tau^1 \frac{-\ln(1-r)}{p_v} r^{1/p_v - 1} dr = \mathcal{H}_{1/p_v} - \int_0^\tau \frac{-\ln(1-r)}{p_v} r^{1/p_v - 1} dr $$
where we used $\int_\tau^1 = \int_0^1 - \int_0^\tau$ and $\int_0^1 \frac{-\ln(1-r)}{p_v} r^{1/p_v - 1} dr = \mathcal{H}_{1/p_v}$. 
Since $-\ln(1-r) \leq -\ln(1-\tau)$ for $r \in [0, \tau]$:
\begin{align*}
\int_0^\tau \frac{-\ln(1-r)}{p_v} r^{1/p_v - 1} dr &\leq \int_0^\tau \frac{-\ln(1-\tau)}{p_v} r^{1/p_v - 1} dr \\
&= \frac{-\ln(1-\tau)}{p_v} \int_0^\tau r^{1/p_v - 1} dr \\
&= \frac{-\ln(1-\tau)}{p_v} \left[ p_v \cdot r^{1/p_v} \right]_0^\tau \\
&= \frac{-\ln(1-\tau)}{p_v} \left( p_v \cdot \tau^{1/p_v} - 0 \right) \\
&= -\ln(1-\tau) \cdot \tau^{1/p_v}
\end{align*}
and therefore $\E[s_t \mid V^\star = v] \geq \mathcal{H}_{1/p_v} + \tau^{1/p_v} \ln(1-\tau)$. 
Since this holds for every $v$, it holds for the realized token probability $p_t = p_{V^\star}$. Summing over $T$ steps and applying the standard bound (\autoref{prop:expected_score}) to $\sum_t \mathcal{H}_{1/p_t} \geq T + (\frac{\pi^2}{6}-1) H_T$ gives the result.
\end{proof}

\begin{remark}
The penalty $\ln(1-\tau) \sum_t \tau^{1/p_t}$ is always non-positive (since $\ln(1-\tau) < 0$), confirming that skipping can only reduce the detection signal. For small $\tau$ (e.g., $\tau = 0.1$), the penalty is negligible: $\tau^{1/p_t}$ is small for all but deterministic tokens ($p_t \approx 1$), and those tokens carry no watermark signal anyway ($\mathcal{H}_1 = 1$, equal to the null baseline). The bound is conservative because we dropped the skip contribution entirely; in practice, skipped tokens still contribute positively to the score.
\end{remark}

\begin{remark}[Skip contribution]
\label{rem:skip_subtlety}
A tempting (but incorrect) approach is to claim that skipped tokens contribute expected score $1$, arguing that the replacement token $X_t \sim \vec{p}$ is drawn independently and therefore its PRF value $R_{X_t}$ is uniform. This would yield the decomposition:
$$ \E[s_t] = \int_\tau^1 \frac{-\ln(1-r)}{p_t} r^{1/p_t-1} dr + \tau^{1/p_t} \cdot 1 $$
leading to a correction $\tau^{1/p_t}(1 + \ln(1-\tau))$ that is \emph{positive} for $\tau < 1 - 1/e$---implying that skipping \emph{improves} detection, which is impossible.

The error is that while $X_t$ is drawn independently of $\vec{R}$, the PRF value $R_{X_t} = \vec{R}[X_t]$ shares the \emph{same} realization $\vec{R}$. Since the skip event $\{R_{V^\star} < \tau\}$ constrains $\vec{R}$ (the winning PRF value is low), the conditional expectation $\E[-\ln(1-R_{X_t}) \mid R_{V^\star} < \tau] \neq 1$.
A simple counterexample: for a deterministic token ($p_t = 1$), there is only one possible token, so skipping changes nothing and $\E[s_t] = \mathcal{H}_1 = 1$. Yet the incorrect formula gives $1 + \tau(1+\ln(1-\tau)) > 1$.
\end{remark}

\subsection{Entropy-Normalized Adaptive Skip}
\label{subsec:entropy_normalized_skip}

This variant of adaptive skip replaces the fixed threshold $\tau$ with an entropy-dependent threshold $\tau^{p_{V^\star}}$, which ensures every token is skipped with exactly the same probability $\tau$ regardless of its confidence level. This restores the distortion-free property lost by standard adaptive skip.

For a target skip rate $\tau \in (0, 1)$, the watermark is now disabled and the token is resampled from $\vec{p}$ if:
$$ R_{V^\star} < \tau^{p_{V^\star}} $$

\begin{proposition}[Distortion-freeness of entropy-normalized skip]
\label{prop:normalized_distortion_free}
The entropy-normalized adaptive skip mechanism is distortion-free, \ie: $\Prob(\text{output} = v) = p_v$ for all $v \in \V$.
\end{proposition}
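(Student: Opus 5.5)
The plan is to follow the decomposition from the proof of \autoref{prop:adaptive_distortion}, replacing the fixed threshold $\tau$ by the token-dependent threshold $\tau^{p_{V^\star}}$. The identity should then collapse to $p_v$ because the conditional skip probability no longer depends on the selected token.

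\emph{Decomposition.} Let $V^\star = \arg\max_w R_w^{1/p_w}$ be the Gumbel-max winner. By \autoref{prop:sampling}, $\Prob(V^\star = v) = p_v$. Let $X_t \sim \vec{p}$ be the fallback token, drawn with randomness independent of $\vec{R}$. The output equals $v$ in exactly two disjoint cases:
\begin{itemize}
\item $V^\star = v$ and the watermark is kept, i.e.\ $R_v \geq \tau^{p_v}$;
\item a skip occurs and $X_t = v$.
\end{itemize}
Hence
\[
\Prob(\text{output}=v) = \Prob(V^\star=v)\,\Prob\!\left(R_v \geq \tau^{p_v} \mid V^\star=v\right) + \Prob(\text{skip})\,\Prob(X_t = v).
\]

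\emph{Conditional skip probability.} By \autoref{cor:beta}, conditionally on $V^\star = v$ we have $R_v \sim \text{Beta}(1/p_v,1)$, whose CDF is $F(r) = r^{1/p_v}$. Therefore
\[
\Prob\!\left(R_v < \tau^{p_v} \mid V^\star=v\right) = \left(\tau^{p_v}\right)^{1/p_v} = \tau .
\]
The normalization was designed exactly so that this quantity is independent of $v$. Two consequences follow:
\begin{itemize}
\item the first term of the decomposition equals $p_v(1-\tau)$;
\item $\Prob(\text{skip}) = \sum_w p_w \tau = \tau$.
\end{itemize}
Since $X_t$ is independent of $\vec{R}$, it is also independent of the skip event, so $\Prob(X_t=v \mid \text{skip}) = p_v$. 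Substituting gives
\[
\Prob(\text{output}=v) = p_v(1-\tau) + \tau p_v = p_v .
\]

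\emph{Main obstacle.} The only delicate step is justifying the factorization $\Prob(\text{skip},\, X_t=v) = \Prob(\text{skip})\,p_v$. This does not contradict \autoref{rem:skip_subtlety}. That remark concerns the PRF value $R_{X_t}$, which depends on the same realization $\vec{R}$. Here we only need the identity of $X_t$, and that identity is generated by fresh randomness independent of $\vec{R}$, hence independent of the skip event. I will state this independence assumption explicitly. I will also note two edge cases:
\begin{itemize}
\item for $p_v = 1$ the threshold is $\tau$ itself, and the computation is unchanged;
\item tokens with $p_v = 0$ are never selected, so they are excluded from the argmax and contribute nothing.
\end{itemize}
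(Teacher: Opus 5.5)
Your proof is correct and follows essentially the same route as the paper. In both, the $\text{Beta}(1/p_v,1)$ conditional law from \autoref{cor:beta} gives a conditional skip probability of exactly $\tau$ for every token, hence $\Prob(\text{skip})=\tau$, and splitting into kept and resampled paths (with the identity of $X_t$ independent of the skip event) yields $p_v(1-\tau)+\tau p_v=p_v$. Your explicit distinction between the identity of $X_t$ (independent of $\vec{R}$) and its PRF value $R_{X_t}$ (the concern of \autoref{rem:skip_subtlety}) is a useful clarification that the paper leaves implicit.
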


\begin{proof}
We first evaluate the conditional probability of a skip occurring given that token $v$ was initially selected. 
By \autoref{cor:beta}, $R_v \mid V^\star = v \sim \text{Beta}(1/p_v, 1)$, which has the cumulative distribution function $F(r) = r^{1/p_v}$. Therefore, the conditional skip probability is:
$$ \Prob(\text{skip} \mid V^\star = v) = \Prob(R_v < \tau^{p_v} \mid V^\star = v) = \left( \tau^{p_v} \right)^{1/p_v} = \tau $$

Because this conditional probability is exactly $\tau$ for \emph{every} token in the vocabulary, the unconditional probability of a skip is also exactly $\tau$. 
Indeed, by the law of total probability:
$$ \Prob(\text{skip}) = \sum_{w \in \V} \Prob(\text{skip} \mid V^\star = w) \Prob(V^\star = w) = \sum_{w \in \V} \tau \cdot p_w = \tau \sum_{w \in \V} p_w = \tau $$

The total marginal probability of outputting token $v$ can then be found by partitioning over the two mutually exclusive generation paths (whether a skip occurs or not):
\begin{align*}
\Prob(\text{output} = v) &= \Prob(\text{output} = v \cap \text{not skipped}) + \Prob(\text{output} = v \cap \text{skip}) \\
&= \Prob(V^\star = v \cap R_{V^\star} \geq \tau^{p_{V^\star}}) + \Prob(\text{skip}) \cdot \Prob(X_t = v) \\
&= \Prob(V^\star = v) \cdot \Prob(R_v \geq \tau^{p_v} \mid V^\star = v) + \Prob(\text{skip}) \cdot \Prob(X_t = v) 
\end{align*}

We know the unconditional probability of a skip is $\tau$, so the conditional probability of not skipping is $1 - \tau$. 
Furthermore, the replacement token $X_t$ is sampled from the original distribution independently of the skip event, so $\Prob(X_t = v) = p_v$. 
Substituting these values yields:
\begin{align*}
\Prob(\text{output} = v) &= p_v \cdot (1 - \tau) + \tau \cdot p_v \\
&= p_v - p_v \tau + p_v \tau \\
&= p_v
\end{align*}
Thus, the marginal distribution is perfectly preserved, making the entropy-normalized mechanism distortion-free.
\end{proof}

\begin{proposition}[Bound on expected score under entropy-normalized skip]
\label{prop:normalized_score}
Under the entropy-normalized adaptive skip with target skip rate $\tau \in (0, 1)$, the expected score satisfies:
\begin{equation}
    \E[S_T] \geq T + \left( \frac{\pi^2}{6} -1 \right) H_T + \tau \sum_{t=1}^T \ln(1-\tau^{p_t})
\end{equation}
\end{proposition}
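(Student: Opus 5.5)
I will follow the template of the proof of \autoref{prop:adaptive_score}, replacing the fixed threshold $\tau$ with the token-dependent threshold $\tau_v := \tau^{p_v}$.

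\textbf{Conditioning and the skip branch.} First I condition on the identity of the Gumbel winner $V^\star = v$. This happens with probability $p_v$ by \autoref{prop:sampling}, and conditionally $R_v \sim \text{Beta}(1/p_v,1)$, with density $\frac{1}{p_v} r^{1/p_v-1}$ and CDF $r^{1/p_v}$, by \autoref{cor:beta}. I then split the per-token score over the skip decision, exactly as before:
\[
\E[s_t \mid V^\star=v] = \E\bigl[-\ln(1-R_v)\mathbf{1}_{R_v\ge \tau^{p_v}} \mid V^\star=v\bigr] + \E\bigl[-\ln(1-R_{X_t})\mathbf{1}_{R_v<\tau^{p_v}} \mid V^\star=v\bigr].
\]
As \autoref{rem:skip_subtlety} warns, the resampled token's PRF value shares the realization $\vec{R}$, so I will not claim it has mean $1$. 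I simply drop the second term, using only that it is non-negative.

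\textbf{The non-skip branch.} I write the first term as $\mathcal{H}_{1/p_v} - \int_0^{\tau^{p_v}} \frac{-\ln(1-r)}{p_v} r^{1/p_v-1}\,dr$, using the full-integral identity established in the proof of \autoref{prop:expected_score}. On $[0,\tau^{p_v}]$ the integrand's logarithmic factor is at most $-\ln(1-\tau^{p_v})$. The remaining factor integrates to the CDF at the threshold, $(\tau^{p_v})^{1/p_v} = \tau$, which is the same computation that gave $\Prob(\text{skip}\mid V^\star=v)=\tau$ in \autoref{prop:normalized_distortion_free}. Hence
\[
\E[s_t\mid V^\star=v] \ge \mathcal{H}_{1/p_v} + \tau\ln(1-\tau^{p_v}).
\]

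\textbf{Summing.} This bound holds for every $v$, so it holds for the realized probability $p_t = p_{V^\star}$. Summing over $t$ and applying $\sum_t \mathcal{H}_{1/p_t} \ge T + (\frac{\pi^2}{6}-1)H_T$ from \autoref{prop:expected_score} gives the stated inequality.

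\textbf{Main obstacle.} The delicate point is legitimacy rather than computation. The bound is stated in terms of the realized $p_t$, which is itself random through $V^\star$, while the entropy term $H_T$ is defined with the same $p_t$. I will follow the paper's convention in \autoref{prop:adaptive_score}: bound the expectation conditional on each possible winner, then pass to $p_t$. I will also check that the threshold $\tau^{p_v}\in(0,1)$, so that $\ln(1-\tau^{p_v})$ is finite; this can fail only when $\tau\to1$. As a sanity check, for $p_v\to 1$ the correction becomes $\tau\ln(1-\tau)$, matching \autoref{prop:adaptive_score}.
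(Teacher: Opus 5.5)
Your proposal is correct and follows the paper's proof essentially step for step: condition on $V^\star=v$, drop the non-negative skip branch, write the kept branch as $\mathcal{H}_{1/p_v}$ minus the truncated integral, bound that integral by $-\ln(1-\tau^{p_v})\cdot(\tau^{p_v})^{1/p_v}=-\tau\ln(1-\tau^{p_v})$, and sum using the standard Gumbel bound. One small correction to a side remark: $\ln(1-\tau^{p_v})$ also blows up as $p_v\to 0$, not only as $\tau\to 1$, which is why the paper notes the bound becomes vacuous for very low-probability tokens; it is still finite for every fixed $p_v>0$.
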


\begin{proof}
We condition on the identity of the selected token $V^\star = v$. We decompose the expected score into the non-skipped and skipped cases:
\begin{align*}
\E[s_t \mid V^\star = v] &= \E[-\ln(1-R_v) \cdot \mathbf{1}_{R_v \geq \tau^{p_v}} \mid V^\star = v] + \E[-\ln(1-R_{X_t}) \cdot \mathbf{1}_{R_v < \tau^{p_v}} \mid V^\star = v]
\end{align*}
As discussed in \autoref{rem:skip_subtlety}, the replacement token $X_t$ relies on the same PRF realization $\vec{R}$, so its contribution is difficult to isolate but strictly non-negative. Dropping the second term provides a conservative lower bound:
$$ \E[s_t \mid V^\star = v] \geq \int_{\tau^{p_v}}^1 \frac{-\ln(1-r)}{p_v} r^{1/p_v - 1} dr = \mathcal{H}_{1/p_v} - \int_0^{\tau^{p_v}} \frac{-\ln(1-r)}{p_v} r^{1/p_v - 1} dr $$
Since the function $-\ln(1-r)$ is monotonically increasing, for $r \in [0, \tau^{p_v}]$, we have $-\ln(1-r) \leq -\ln(1-\tau^{p_v})$. We can bound the subtracted integral:
\begin{align*}
\int_0^{\tau^{p_v}} \frac{-\ln(1-r)}{p_v} r^{1/p_v - 1} dr &\leq -\ln(1-\tau^{p_v}) \int_0^{\tau^{p_v}} \frac{1}{p_v} r^{1/p_v - 1} dr \\
&= -\ln(1-\tau^{p_v}) \left[ r^{1/p_v} \right]_0^{\tau^{p_v}} \\
&= -\ln(1-\tau^{p_v}) \left( \tau^{p_v} \right)^{1/p_v} \\
&= -\tau \ln(1-\tau^{p_v})
\end{align*}
Substituting this back yields:
$$ \E[s_t \mid V^\star = v] \geq \mathcal{H}_{1/p_v} + \tau \ln(1-\tau^{p_v}) $$
Since this inequality holds for any chosen $v$, it holds for the realized token probability $p_t$. Summing over the sequence of $T$ tokens and applying the standard Gumbel bound (\autoref{prop:expected_score}) gives:
$$ \E[S_T] \geq \sum_{t=1}^T \mathcal{H}_{1/p_t} + \tau \sum_{t=1}^T \ln(1-\tau^{p_t}) \geq T + \left( \frac{\pi^2}{6} -1 \right) H_T + \tau \sum_{t=1}^T \ln(1-\tau^{p_t}) $$
The correction term is strictly non-positive because $\tau^{p_t} \in (0, 1)$, meaning $\ln(1-\tau^{p_t}) < 0$. This accurately reflects the expected loss in signal when skipping exactly $\tau$ fraction of the tokens.
\end{proof}

\begin{remark}[Skip behavior and score penalty]
Unlike standard adaptive skip where the skip probability $\tau^{1/p_v}$ depends on token confidence (skipping high-confidence tokens more often), the entropy-normalized threshold $\tau^{p_v}$ ensures a \emph{uniform} skip rate of exactly $\tau$ for all tokens regardless of their probability. 
However, the per-token score penalty $\tau \ln(1-\tau^{p_t})$ still varies with entropy:
\begin{itemize}
    \item For high-confidence tokens ($p_t \to 1$): the penalty approaches $\tau \ln(1-\tau)$, which is mild. These tokens contribute little watermark signal anyway ($\mathcal{H}_{1} = 1$, equal to the null baseline), so skipping them has minimal impact.
    \item For low-confidence tokens ($p_t \to 0$): the threshold $\tau^{p_t} \to 1$, making the penalty bound $\tau \ln(1-\tau^{p_t}) \to -\infty$, making the bound effectively useless.
    Such tokens occur rarely ($\Prob(V^\star = v) = p_v$), so their contribution to the total penalty is attenuated by their low occurrence frequency.
\end{itemize}
The mechanism thus achieves distortion-freeness while concentrating the detection penalty on tokens that either carry little signal (high-confidence) or appear rarely (low-confidence), preserving most of the watermark power from medium-entropy tokens.
\end{remark}

\section{Fast Localization and Statistical Penalties}
\label{app:localization_math}

In practical settings, the exact start and end indices of a watermarked insertion are unknown. Our objective is to determine a set of disjoint watermarked intervals $\{[a_1, b_1], \dots, [a_y, b_y]\}$. A naive exhaustive search over all possible intervals in a sequence of length $n$ requires evaluating $\binom{n}{2} \approx n^2/2$ windows. This $\mathcal{O}(n^2)$ search space not only introduces severe computational bottlenecks but also imposes an insurmountable statistical penalty via multiple-testing correction, as false positives become significantly more likely as the number of tested hypotheses grows. To optimize both computational and statistical efficiency, we utilize a geometric cover search space \citep{kirchenbauer2023reliability} combined with a fast two-stage extraction pipeline and rigorous Bonferroni correction.

\subsection{The Geometric Cover Search}
To avoid testing $\mathcal{O}(n^2)$ intervals, we constrain our search to a dyadic grid of windows. We define a set of window lengths $L \in \{L_0, 2L_0, 4L_0, \dots, 2^{\lfloor \log_2 n \rfloor}\}$, where $L_0 = 2^{\lceil \log_2 L_{\min} \rceil}$ is the smallest power of two at least as large as the minimum zone length $L_{\min}$. For each length $L$, we slide the window across the text with a stride of $L/2$. 

This geometric grid guarantees that any arbitrary watermarked region of length $L^* \ge L_{\min}$ will be at least $50\%$ covered by at least one grid window. The total number of candidate windows $M$ in this grid is strictly bounded:
\begin{equation}
M = \sum_{k = \lceil \log_2 L_{\min} \rceil}^{\lfloor \log_2 n \rfloor} \left\lfloor \frac{n - 2^k}{2^{k-1}} \right\rfloor + 1 \approx \frac{4n}{L_0}
\end{equation}
By restricting the search to $M \approx \mathcal{O}(n/L_{\min})$ windows, we reduce the hypothesis space by orders of magnitude, dramatically lowering the statistical tax required to claim significance.

\subsection{Fast Two-Stage Pipeline and Greedy Extraction}
Evaluating the rigorous, entropy-weighted Gamma distribution for all $M$ windows can be computationally heavy. To process arbitrarily large documents efficiently, we utilize a two-stage pipeline:
\begin{enumerate}
    \item \textbf{Fast Filtering:} We pre-calculate prefix sums of the unweighted raw scores $s_i$. The sum for any candidate interval in the grid can then be computed in $\mathcal{O}(1)$ time. We select the top candidates based on these raw sums.
    \item \textbf{Rigorous Scoring:} For the most promising candidates, we compute the exact entropy-weighted moment-matched Gamma $p$-value $p_{\text{raw}}$ (as defined in Section~\ref{subsec:entropy_detection}).
\end{enumerate}
To extract multiple zones, we proceed greedily. We find the window $I^*$ with the most significant $p_{\text{raw}}$. If its penalized significance (accounting for the search tax) is high enough, we flag it as watermarked, mask its tokens (setting their scores to zero), and repeat the search on the residual text. We aggregate disjoint intervals until the combined $p$-value fails to overcome the multiple-testing threshold, up to a maximum of $Y_{\max}$ zones.

\subsection{The Bonferroni Tax and False Positive Guarantee}
Evaluating $M$ intervals introduces the multiple comparisons problem. To maintain a strict family-wise error rate (FWER) $\epsilon$ under the null hypothesis $\mathcal{H}_0$, we apply a union bound. 

For a single-zone search, the Bonferroni correction factor is simply $M$. For a multi-zone search identifying $y$ disjoint regions, we must account for the number of ways to choose $y$ windows from the grid, $\binom{M}{y}$, as well as the optimization over $y \in [1, Y_{\max}]$. The corrected $p$-value in log-space is:
\begin{equation}
\ln p_{\text{corrected}} = \ln p_{\text{raw}} + \ln \binom{M}{y} + \ln Y_{\max}
\end{equation}
Under $\mathcal{H}_0$, the probability that the most significant window combination exceeds our threshold is strictly bounded:
\begin{equation}
\mathbb{P}\left( \bigcup_{i=1}^K \left\{ p_i \le \frac{\epsilon}{K} \right\} \;\middle|\; \mathcal{H}_0 \right) \le \sum_{i=1}^K \mathbb{P}\left( p_i \le \frac{\epsilon}{K} \;\middle|\; \mathcal{H}_0 \right) = K \cdot \frac{\epsilon}{K} = \epsilon
\end{equation}
where $K$ represents the total number of tested hypotheses in the search space. This guarantees that the probability of falsely accusing an entirely human-written text remains $\le \epsilon$, regardless of document length.

\subsection{Asymptotic Power Comparison: Global vs. Localized Detection}
\label{app:power_proof}

We define the crossover point where the localized multi-zone test yields a stronger rejection of $\mathcal{H}_0$ than the global test. Let $n$ be the document length, and $\rho \in (0, 1]$ be the fraction of tokens that are watermarked ($w = \rho n$). 

\paragraph{Setup and Approximation.}
Let the weighted token score $\tilde{s}_i$ have mean $\mu_0$ and variance $\sigma^2$ under $\mathcal{H}_0$. Under $\mathcal{H}_1$, the mean shifts to $\mu_w > \mu_0$. Let $\delta = (\mu_w - \mu_0)/\sigma$ be the per-token signal-to-noise ratio. Using a Gaussian tail approximation, the log $p$-value of a Z-score is $\ln p \approx -\frac{1}{2} Z^2$. We define $\Delta^2 = \delta^2/2$ as the expected log $p$-value accumulation rate per watermarked token.

\paragraph{Power of the Global Test.}
The global test evaluates all $n$ tokens. The expected Z-score is:
\begin{equation}
Z_{\text{global}} = \frac{\rho n \sigma \delta}{\sigma \sqrt{n}} = \rho \delta \sqrt{n} \implies \mathbb{E}[\ln p_{\text{global}}] \approx -\rho^2 n \Delta^2
\end{equation}
The signal strength scales quadratically with $\rho$; the $(1-\rho)n$ human tokens contribute no signal but inflate the variance, diluting the test.

\paragraph{Power of the Localized Test.}
Assuming a localized test correctly isolates the $\rho n$ watermarked tokens into $y$ zones, the variance is reduced to the watermarked subset $\rho n \sigma^2$. The expected raw log $p$-value is:
\begin{equation}
Z_{\text{local}} = \frac{\rho n \sigma \delta}{\sigma \sqrt{\rho n}} = \delta \sqrt{\rho n} \implies \mathbb{E}[\ln p_{\text{local, raw}}] \approx -\rho n \Delta^2
\end{equation}
Accounting for the combinatorial tax $\ln \binom{n}{2y} \approx 2y \ln n$, the penalized localized score is:
\begin{equation}
\mathbb{E}[\ln p_{\text{local, final}}] \approx -\rho n \Delta^2 + 2y \ln n
\end{equation}

\paragraph{The Crossover Point.}
The localized test dominates when $\mathbb{E}[\ln p_{\text{local, final}}] < \mathbb{E}[\ln p_{\text{global}}]$:
\begin{align}
-\rho n \Delta^2 + 2y \ln n &< -\rho^2 n \Delta^2 \\
n \Delta^2 (\rho - \rho^2) &> 2y \ln n \\
\rho(1-\rho) &> \frac{2y \ln n}{n \Delta^2}
\end{align}
This inequality demonstrates that localized detection is optimal when the signal is sufficiently concentrated (low $\rho$) such that the variance reduction from excluding human tokens outweighs the logarithmic search tax.

\section{Additional Experiments and Details}

\subsection{Benchmark Variance Analysis}
\label{app:variance_analysis}

Evaluating LLM performance on benchmarks with chain-of-thoughts involves variance due to the stochastic generation, and the final answer extraction heuristics (e.g., regex-based parsing for numbers or letters, or code execution for programming tasks).
To quantify this variance and assess whether watermarking introduces systematic degradation or improvement, we re-ran a subset of benchmarks with multiple random seeds for non-watermarked generation and multiple secret keys for watermarked generation.

\paragraph{Experimental Setup.}
We use Qwen~3.5-27B with reasoning enabled (reasoning temperature 0.6, top-$p = 0.95$, max 3,000 reasoning tokens).
We evaluate on five benchmarks: AIME (math), GSM8K (math), HumanEval (code), MBPP (code), and MMLU (multiple choice).
For non-watermarked generation, we use 5 different random seeds.
For watermarked generation, we use the same values as secret keys, with both $n$-gram deduplication enabled and disabled (see \autoref{rem:deduplication}): when enabled, watermark contexts that have already appeared in the generation fall back to vanilla sampling instead of watermarked sampling.

\paragraph{Results.}

\begin{table}[t!]
\centering
\small
\caption{Benchmark accuracy (\%) across 5 random seeds (non-watermarked) or 5 secret keys (watermarked). We report Mean $\pm$ Std to quantify generation variance. ``Dedup'' refers to $n$-gram deduplication at generation-time (see \autoref{rem:deduplication}). Differences between conditions fall within one standard deviation, indicating no systematic degradation from watermarking.}
\label{tab:variance_analysis}
\begin{tabular}{lccc}
\toprule
Benchmark & No Watermark & WM (no dedup) & WM (dedup) \\
\midrule
AIME & $41.0 \pm 1.2$ & $40.6 \pm 1.2$ & $40.8 \pm 0.9$ \\
GSM8K & $95.9 \pm 0.3$ & $95.6 \pm 0.3$ & $95.9 \pm 0.3$ \\
HumanEval & $97.1 \pm 0.9$ & $97.6 \pm 2.4$ & $97.8 \pm 0.7$ \\
MBPP & $50.2 \pm 0.6$ & $49.8 \pm 0.4$ & $49.7 \pm 0.6$ \\
MMLU & $87.6 \pm 0.5$ & $87.1 \pm 1.6$ & $87.8 \pm 0.6$ \\
\bottomrule
\end{tabular}
\end{table}

The standard deviation across seeds/keys ranges from 0.3\% to 2.4\%, depending on the benchmark and condition.
Code benchmarks exhibit variance due to the binary nature of test execution and sensitivity to minor formatting differences.
Crucially, the differences between watermarked and non-watermarked conditions fall within approximately one standard deviation, indicating no systematic performance degradation from watermarking.

\paragraph{Effect of $n$-Gram Deduplication.}
Enabling $n$-gram deduplication (falling back to vanilla sampling for repeated context windows) tends to produce lower variance, particularly visible on MMLU (0.6 vs 1.6 std) and HumanEval (0.7 vs 2.4 std).
This is consistent with the observation that repeated $n$-gram contexts in reasoning chains can lead to more deterministic (and potentially repetitive) generation patterns when not deduplicated.

\subsection{Multilingual QA}
\label{app:multilingual_setup}

This section provides the full experimental setup for the multilingual question-answering evaluation.
The same dataset and generation pipeline are used for both the watermark detection analysis below and the human preference evaluation in \autoref{app:human_eval_details}.

\paragraph{Experimental Configuration.}
We use GPT-OSS-20B with reasoning enabled (max 2,000 reasoning tokens) and watermarking applied to the reasoning trace.
Generation uses temperature $0.7$, top-$p = 0.95$, and a maximum length of 4,096 tokens.
The watermark employs Gumbel-Max with 3-gram context, dual-key early fusion ($\alpha = 0.1$), and a fixed secret key.

\paragraph{Datasets.}
We evaluate on 6,000 question-answer pairs across five languages:
English (2,000 samples from ELI5),
and Arabic, Chinese, Hindi, and Japanese (1,000 samples each from CaLMQA~\citep{arora2025calmqa}).

\paragraph{System Prompt.}
The following system prompt was used for all languages:
\begin{quote}
\small
\textit{``You are answering questions. Give a clear, concise explanation in plain language. Answer in the same language as the question. Keep your answer to 50--150 words. No bullet points, headers, or markdown formatting---just natural prose.''}
\end{quote}

\paragraph{Watermark Detection Results.}

\begin{table}[t!]
\centering
\small
\caption{Watermark detection performance on multilingual QA.}
\label{tab:multilingual_detection_pivot}
\begin{tabular}{l c c c c c c}
\toprule
Metric & English & Arabic & Chinese & Hindi & Japanese & Overall \\
\midrule
TPR@0.1\% & 53.6\% & 83.3\% & 79.5\% & 59.0\% & 51.0\% & 63.3\% \\
Median $\log_{10} p$ & $-3.15$ & $-5.23$ & $-4.62$ & $-3.51$ & $-3.04$ & $-3.72$ \\
\bottomrule
\end{tabular}
\end{table}

Arabic and Chinese show strongest detection, which is likely due to higher per-token entropy. 
Japanese shows lowest detection (51\%) due to the more constrained vocabulary and lower entropy in CJK scripts.

\paragraph{Statistical Tests for Differences.}
We apply McNemar's test~\citep{mcnemar1947note} to assess whether watermarking systematically affects script consistency or refusal rates.
For script consistency, we observe 52 discordant pairs where WM was wrong but Non-WM was correct, versus 39 where Non-WM was wrong but WM was correct; with continuity correction, this yields $\chi^2 = 1.58$ and $p = 0.21$.
For refusal rates, we find 21 pairs where WM refused but Non-WM answered, versus 15 where Non-WM refused but WM answered, giving $\chi^2 = 0.69$ and $p = 0.41$.
Both $p$-values are well above the significance threshold ($\alpha = 0.05$), indicating that watermarking does not seem to systematically increase script errors or refusals.

\subsection{Human Evaluation Details}\label{app:human_eval_details}

This section provides methodology and detailed results for the human evaluation study summarized in \autoref{subsec:human_eval}.
The experimental setup (model, datasets, generation parameters) is shared with the multilingual QA experiment described in \autoref{app:multilingual_setup}.

\paragraph{Preference Distribution.}
\autoref{tab:human_eval_full} shows the complete four-class preference breakdown before merging tie categories.
Annotators chose among: \emph{A is preferred}, \emph{B is preferred}, \emph{Both equally good}, and \emph{Both equally bad}.
We aggregate via majority vote (at least 2/3 annotators agree); samples with a three-way split (one vote per distinct category) are assigned to ``Tie.''
For the final analysis, ``Both Good,'' ``Both Bad,'' and splits are merged into a single Tie category.

\begin{table}[t!]
\centering
\small
\caption{
    \textbf{Full four-class preference breakdown} (majority vote, 3 annotators per sample).
    Split: items where no majority exists (three-way tie), counted as Tie in the final analysis.
}
\label{tab:human_eval_full}
\begin{tabular}{lrrrrrr}
\toprule
\textbf{Language} & \textbf{N} & \textbf{Prefer WM} & \textbf{Prefer Base} & \textbf{Both Good} & \textbf{Both Bad} & \textbf{Split} \\
\midrule
English  & 2{,}000 & 124 & 146 & 1{,}482 & 92 & 156 \\
Arabic   & 1{,}000 & 201 & 181 & 168 & 287 & 163 \\
Chinese  & 1{,}000 & 84  & 74  & 514 & 272 & 56 \\
Hindi    & 1{,}000 & 91  & 89  & 435 & 278 & 107 \\
Japanese & 1{,}000 & 143 & 130 & 275 & 228 & 224 \\
\midrule
\textbf{Overall} & 6{,}000 & 643 & 620 & 2{,}874 & 1{,}157 & 706 \\
\bottomrule
\end{tabular}
\end{table}

\paragraph{Net Win Rate.}
We define the \emph{net win rate} as
\begin{equation}
    \text{Net Win Rate} = \frac{n_{\text{WM}} - n_{\text{Base}}}{N},
\end{equation}
where $n_{\text{WM}}$ and $n_{\text{Base}}$ are the number of samples where the watermarked or baseline response was preferred (by majority vote), and $N$ is the total number of samples including ties.
The overall net win rate is $+0.38\%$, indicating a negligible difference.

\paragraph{Binomial Test.}
Among decisive (non-tie) samples, we test the null hypothesis $H_0: P(\text{WM preferred}) = 0.5$ using a two-sided exact binomial test.
No individual language reaches significance at $\alpha = 0.05$ (English: $p = 0.20$; Arabic: $p = 0.33$; Chinese: $p = 0.47$; Hindi: $p = 0.94$; Japanese: $p = 0.47$).
The overall pooled test yields $p = 0.54$, also non-significant.
The overall preference is nearly evenly split (50.9\% WM among decisive samples), indicating no quality degradation.

\paragraph{Equivalence Testing (TOST with Ties).}
To establish imperceptibility, rather than failing to detect a difference, we apply the Two One-Sided Tests (TOST) procedure~\citep{schuirmann1987comparison}.
We test:
\begin{equation}
    H_0{:}~|P(\text{WM preferred}) - P(\text{Base preferred})| \geq \Delta \quad \text{vs.} \quad H_1{:}~|P(\text{WM preferred}) - P(\text{Base preferred})| < \Delta
\end{equation}
where proportions are computed over \emph{all} $N$ samples (including ties in the denominator).
This formulation is more powerful than restricting to decisive samples, because ties represent direct evidence of imperceptibility (the annotator could not distinguish between outputs) and contribute to the sample size.

Let $\hat{d} = \hat{p}_{\text{WM}} - \hat{p}_{\text{Base}}$ with standard error $\text{SE} = \sqrt{(\hat{p}_{\text{WM}} + \hat{p}_{\text{Base}} - \hat{d}^2)/N}$.
The TOST procedure computes two one-sided $z$-tests:
$z_1 = (\hat{d} - \Delta)/\text{SE}$ and $z_2 = (\hat{d} + \Delta)/\text{SE}$,
and rejects $H_0$ when $\max(\Phi(z_1),\, 1 - \Phi(z_2)) < \alpha$.

\autoref{tab:tost_results} reports results for $\Delta = 5\%$.
Equivalence is established for four of five languages and overall ($p < 0.05$); Arabic marginally fails ($p = 0.062$) as its upper confidence bound ($5.2\%$) slightly exceeds the $\pm 5$ percentage-point margin.

\begin{table}[t!]
\centering
\small
\caption{
    \textbf{TOST equivalence test results} ($\Delta = 5\%$, $\alpha = 0.05$).
    Proportions computed over all $N$ samples. 90\% CI: Wald interval for the difference $P(\text{WM}) - P(\text{Base})$.
}
\label{tab:tost_results}
\begin{tabular}{lrcccl}
\toprule
\textbf{Language} & $N$ & $\hat{d}$ & 90\% CI & $p_{\text{TOST}}$ & Result \\
\midrule
English  & 2{,}000 & $-1.10\%$ & $[-2.5\%, +0.3\%]$ & $< 0.001$ & Equivalent \\
Arabic   & 1{,}000 & $+2.00\%$ & $[-1.2\%, +5.2\%]$ & $0.062$ & Marginal \\
Chinese  & 1{,}000 & $+1.00\%$ & $[-1.1\%, +3.1\%]$ & $< 0.001$ & Equivalent \\
Hindi    & 1{,}000 & $+0.20\%$ & $[-2.0\%, +2.4\%]$ & $< 0.001$ & Equivalent \\
Japanese & 1{,}000 & $+1.30\%$ & $[-1.4\%, +4.0\%]$ & $0.013$ & Equivalent \\
\midrule
\textbf{Overall} & 6{,}000 & $+0.38\%$ & $[-0.6\%, +1.4\%]$ & $< 0.001$ & Equivalent \\
\bottomrule
\end{tabular}
\end{table}

\paragraph{On Trinomial Tests.}
An alternative approach is the trinomial test for paired data with ties~\citep{bian2011trinomial, dathathri2024scalable}, which models the three-category distribution (WM, Base, Tie) directly.
We experimented with this approach but found that the chi-square statistic converges rapidly with the number of ties: once more than a handful of ties are present, the $p$-value stabilizes to the second decimal place and equals the standard binomial test on decisive samples.
Since 79\% of our samples are ties, the trinomial test provides no additional discriminative power, which motivates our use of the TOST procedure that explicitly leverages ties as evidence of imperceptibility.

\paragraph{Inter-Annotator Agreement.}
We measure agreement using two metrics: (i)~unanimous agreement rate (fraction of samples where all 3 annotators selected the same four-class option), and (ii)~mean pairwise agreement (average fraction of annotator pairs that agree on the four-class label).
\autoref{tab:human_eval_agreement} shows that agreement varies by language, with English and Chinese exhibiting the highest consistency.
The lower agreement rates for Arabic and Japanese may reflect the inherent subjectivity of quality judgments and cultural differences in evaluation norms.

\begin{table}[t!]
\centering
\small
\caption{
    \textbf{Inter-annotator agreement statistics by language} (four-class scale).
}
\label{tab:human_eval_agreement}
\begin{tabular}{lccc}
\toprule
\textbf{Language} & \textbf{Unanimous Rate} & \textbf{Majority ($\geq$2/3)} & \textbf{Pairwise Agreement} \\
\midrule
English  & 54.0\% & 92.2\% & 0.667 \\
Arabic   & 23.9\% & 83.7\% & 0.438 \\
Chinese  & 48.5\% & 94.4\% & 0.638 \\
Hindi    & 37.0\% & 89.3\% & 0.544 \\
Japanese & 17.0\% & 77.6\% & 0.372 \\
\midrule
\textbf{Overall} & 39.1\% & 88.2\% & 0.555 \\
\bottomrule
\end{tabular}
\end{table}

\subsection{Learnability Experimental Details}
\label{app:learnability_details}

\paragraph{Models \& Dataset.}
The teacher is DeepSeek-R1-Distill-Qwen-14B~\citep{guo2025deepseek}, an R1-style reasoning model that produces long chain-of-thought traces enclosed in \texttt{<think>} tags, and the student is Qwen2.5-3B~\citep{qwen25}.
We train on a subset of 5{,}000 problems drawn from OpenR1-Math-220k~\citep{openr1}, curated via a three-stage pipeline:
(i)~malformed or incomplete problems are removed;
(ii)~only problems that the student model fails to solve are retained, ensuring the training data teaches new capabilities;
(iii)~diversity sampling across 14 math categories with a 15\% cap per category prevents overrepresentation of any single topic.

\paragraph{Watermarked Trace Generation.}
We compare the three sampling-based methods introduced in \autoref{subsec:exp_setup}: \emph{Gumbel-Max}~\citep{aaronson2023watermarking}, \emph{TextSeal} (dual-key routing probability $\alpha = 0.1$), and \emph{SynthID}~\citep{dathathri2024scalable} (plus an unwatermarked control), all with watermark context window $k = 3$.
Secret keys are calibrated per method via a Kolmogorov--Smirnov test to ensure uniform PRF hashes on unwatermarked text as done in~\citet{fernandez2025good}.
The teacher generates 5{,}000 solutions using vLLM~\citep{kwon2023efficient} with flash-attention-2 on $4{\times}$H200 GPUs (tensor parallel), with $T = 1.0$, top-$p = 0.95$, and max 8{,}192 generated tokens.

\paragraph{Quality Filtering.}
Each teacher trace passes through four sequential filters (the first failure rejects the trace):
(i)~\emph{think closure}---the trace must contain a closing \texttt{</think>} tag;
(ii)~\emph{boxed presence}---the trace must include a \verb|\boxed{...}| final-answer pattern (skipped for multiple-choice datasets);
(iii)~\emph{repetition detection}---a sampled sliding-window check (window size 100 characters, ${\sim}200$ evenly spaced samples) rejects any trace in which a substring occurs ${\geq}3$ times (responses ${\leq}200$ characters auto-pass);
(iv)~\emph{answer verification}---the extracted answer is compared to the gold answer using the \texttt{math\_verify} library in a fail-open mode: if either side fails to parse, the trace is kept rather than rejected.

\paragraph{Student Fine-Tuning.}
The student is fine-tuned on the filtered traces using LoRA~\citep{hu2022lora} (rank 128, scaling factor 128, dropout 0.05) with learning rate $2 \times 10^{-5}$ and 3 epochs.
The loss is computed over the full teacher response (both the reasoning trace and the final answer) while the prompt tokens are masked out.

\paragraph{Watermark Detection.}
We evaluate watermark transfer using the \emph{open-model} radioactivity test of \citet{sander2024watermarking,sander2025detecting}.
The test operates in a \emph{teacher-forcing} setup: each training trace is fed into the student model, and the student's top-1 prediction $\hat{x}^{(t)} = \arg\max_{v \in \V} P_\theta(v \mid x_{<t})$ is recorded at every response position $t$.
Crucially, we score the student's \emph{predictions} rather than newly generated text: this isolates the watermark signal from confounding factors such as sampling noise and generation quality, while requiring only a single forward pass over the existing traces rather than expensive autoregressive generation.
If the student has internalized the watermark's token preferences during fine-tuning, its top-1 predictions will be systematically biased toward high-PRF tokens---even without access to the secret key.

We score each prediction using the watermark's PRF: $R_t = \text{PRF}(\hat{x}^{(t)},\, \mathbf{w}_t,\, \sk)$, where $\mathbf{w}_t = (x^{(t-k)}, \ldots, x^{(t-1)})$ is the trigram context window of \emph{teacher} tokens preceding position $t$ (as defined in \autoref{sec:background}).
Within each trace, each context window is scored only once; across traces, all (context, predicted token) pairs are pooled and deduplicated so that repeated tuples are counted only once, satisfying the independence assumption required by the statistical test~\citep{fernandez2023three}.
This yields ${\sim}1.4$--$2.2$M unique scored tokens per method.

For Gumbel-Max, a single pooled Gamma test produces the $p$-value: we compute $s_t = -\ln(1 - R_t)$ for each unique pair and sum over all $n$ unique scored tokens to obtain $S_n = \sum_{t=1}^{n} s_t$, which under $\Hnull$ follows $\Gamma(n, 1)$ (\autoref{prop:pvalue}).
For TextSeal, we use the entropy-weighted early-fusion score with $w_i^{\text{ent}} = \sqrt{\hat{H}_i}$ weighting (\autoref{subsec:entropy_detection}), where entropy is estimated from the student model's forward pass, and compute the $p$-value via the moment-matched Gamma approximation of \autoref{eq:p_value_gamma_combined}; the choice of weighting function is validated by the ablation in \autoref{fig:learnability_entropy}.
For SynthID, we apply the frequentist test described in \autoref{subsec:exp_setup}, computing a depth-weighted Z-score over the tournament layers.

\paragraph{Teacher Trace Quality.}
Pass rates are 48\% for the control, 48.2\% for SynthID, 47\% for TextSeal, and 39.8\% for Gumbel-Max, yielding 2{,}400, 2{,}408, 2{,}352, and 1{,}991 well-formed traces respectively.
Gumbel-Max traces are also notably shorter on average (${\sim}2{,}400$ response tokens vs.\ ${\sim}3{,}300$--$3{,}500$ for other methods), because its deterministic argmax selection causes more repetition loops at $T{=}1.0$; the filter removes these long repetitive traces, leaving only shorter clean ones.
As a result, the student is fine-tuned on different amounts of data across configurations; we do not normalize for this, as the variation in sample count is modest (${\sim}20\%$).

\begin{table}[t!]
\centering
\small
\caption{
    \textbf{Full learnability statistics (OpenR1, $N{=}5{,}000$ prompts).}
    Teacher $-\log_{10}(p)$ is the detection power of the watermark in the teacher's own traces (mean and median across individual traces).
    Student $-\log_{10}(p)$ is the pooled detection power after distillation (original setting, all retained traces).
    \textsuperscript{\dag}TextSeal uses entropy-weighted scoring.
}
\label{tab:learnability_full}
\begin{tabular}{lccccc}
\toprule
\textbf{Method} & \textbf{Retained} & \textbf{Pass} & \multicolumn{2}{c}{\textbf{Teacher $-\log_{10}(p)$}} & \textbf{Student} \\
 & \textbf{Traces} & \textbf{Rate} & Mean & Median & $-\log_{10}(p)$ \\
\midrule
Gumbel-Max   & 1{,}991 & 39.8\% & 14.89 & 9.09  & 24.80 \\
TextSeal     & 2{,}352 & 47.0\% & 33.15\textsuperscript{\dag} & 27.50\textsuperscript{\dag} & 35.82\textsuperscript{\dag} \\
SynthID      & 2{,}408 & 48.2\% & 14.39 & 12.12 & 13.54 \\
Control      & 2{,}400 & 48.0\% & 0.39  & 0.25  & --- \\
\bottomrule
\end{tabular}
\end{table}

\paragraph{Controlled Comparisons.}
The results above are obtained with each method's full set of well-formed traces, which differ in count ($1{,}991$--$2{,}408$) and average length (Gumbel-Max traces average ${\sim}2{,}400$ tokens vs.\ ${\sim}3{,}300$--$3{,}500$ for other methods).
To rule out training data volume as a confound, we repeat the experiment under two controlled conditions (\autoref{fig:learnability}):
(i)~\emph{equal traces}, where each method uses exactly $1{,}991$ traces (the Gumbel-Max minimum, randomly subsampled for the other methods), and
(ii)~\emph{equal tokens}, where each method is allocated ${\sim}15.1$M characters (subsampling traces for methods with more tokens, using all available traces for Gumbel-Max).
Under equal traces, TextSeal achieves the highest student accuracy ($81.0\%$), followed by SynthID and Control ($78.8\%$ each) and Gumbel-Max ($77.7\%$).
Under equal tokens, the spread narrows ($79.7\%$/$78.6\%$/$79.6\%$/$77.6\%$ for TextSeal/Gumbel-Max/SynthID/Control).
In both settings all watermarked students substantially improve over the pre-training baseline ($64.5\%$).
Detection results confirm that all three watermarks remain strongly detectable under both controls, validating that the learnability conclusions of \autoref{fig:learnability} are not artifacts of unequal training data volume.

\paragraph{Entropy Weighting Ablation.}
Each weighting variant in \autoref{fig:learnability_entropy} computes the weighted statistic $S_{\text{combined}} = \sum_{i=1}^{n} w_i^{\text{ent}} \cdot s_i$, where $s_i$ is TextSeal's early-fusion score (\autoref{eq:early_fusion}) and $w_i^{\text{ent}} = f(H_i)$ is a function of the local entropy $H_i$ at position $i$, estimated via a single forward pass of the student model.
The $p$-value is computed via the moment-matched Gamma approximation of \autoref{eq:p_value_gamma_combined}, which accounts for the heterogeneous weights.
Concave normalized-entropy transforms outperform linear/superlinear alternatives because they moderately upweight high-entropy positions---where the watermark has more room to influence token selection (\autoref{prop:expected_score})---without over-amplifying noisy extreme-entropy tokens.
Unnormalized power functions ($H_i^{1.0}$, $H_i^{1.5}$) are sensitive to the absolute entropy scale and perform no better than the unweighted baseline.

\section{Extended Related Work}\label{app:related}

\subsection{Post-Hoc Text Watermarking}

Early text watermarking altered surface-level text characteristics such as characters or spacing~\citep{brassil1995electronic}.
Later methods modify grammatical or syntactical structures via pre-established rules~\citep{topkara2005natural}, including synonym substitution~\citep{topkara2006hiding} and word reordering through passivization or topicalization~\citep{topkara2006words, topkara2006natural, meral2009natural}.
Text steganography follows similar principles~\citep{winstein1998lexical, chapman2001practical, bolshakov2004method, shirali2008new, chang2014practical, xiang2017novel}.
These edit-based systems exhibit low robustness and payload, \eg 1--2 bits per sentence~\citep{wilson2016avoiding}.
Deep learning methods have since been applied, including masked language models for steganography~\citep{ueoka2021frustratingly}, infilling models~\citep{yoo2023robust}, neural lexical substitution~\citep{qiang2023natural}, and encoder-decoders~\citep{abdelnabi2021adversarial, zhang2024remark, xu2024robust}.

\subsection{Generation-Time LLM Watermarking}

The first watermarks for machine-generated text date back to a method presumably used in Google Translate to filter translations from future training data~\citep{venugopal2011watermarking}.
For LLM-generated text, two concurrent approaches appeared shortly after the release of ChatGPT:
\citet{kirchenbauer2023watermark} bias a subset of the vocabulary (``green-red list''), while \citet{aaronson2023watermarking} alter the sampling via the Gumbel-max trick.
Both use pseudorandom seeds generated from a secret key and preceding tokens, enabling lightweight statistical detection without access to the model.

Subsequent work explores improved tests and multi-bit watermarking~\citep{fernandez2023three, yoo2024advancing, qu2024provably}, position-dependent seeds~\citep{christ2023undetectable, kuditipudi2023robust}, low-entropy optimizations~\citep{lee2023wrote, christ2023undetectable, huang2023optimal}, and semantic watermarks for improved robustness~\citep{liu2023semantic, liu2024adaptive, fu2024watermarking, hou2023semstamp, hou2024k}.
A key distinction is whether a method is \emph{distortion-free}: at each generation step, the next-token distribution is preserved, \ie $\Prob(\text{output}_t = v) = p_v^{(t)}$ for all $v$, where the probability is taken over the randomness of the watermark scheme (PRF seeds and, for dual-key methods, the key selection). Each individual token is drawn from the unmodified LLM distribution; only \emph{diversity} across repeated generations for the same prompt is reduced. See \autoref{app:method_overview} for detailed scheme descriptions.
Green-red list methods~\citep{kirchenbauer2023watermark} and low-entropy filtering methods (e.g., SWEET~\citep{lee2023wrote}, which skips watermarking on low-entropy tokens) are \emph{not} distortion-free: they alter the output distribution, degrading every generation.
MorphMark~\citep{wang2025morphmark} adaptively scales the green-red bias based on the natural green-list probability mass, reducing distortion in low-entropy contexts, but remains non-distortion-free since it still applies a logit bias.
Semantic watermarks~\citep{liu2023semantic, liu2024adaptive, hou2023semstamp} require auxiliary semantic encoders at generation time, making them harder to deploy.
Gumbel-max~\citep{aaronson2023watermarking}, Permute-and-Flip~\citep{zhao2024permute}, DiPMark~\citep{wu2023dipmark} (distortion-free green-red via pseudorandom permutations), SynthID-Text~\citep{dathathri2024scalable} (deployed in Google Gemini via tournament-based sampling), and WaterMax~\citep{giboulot2024watermax} (multiple generations per query, impractical for production) are distortion-free.
Toolkits have also been introduced to benchmark these methods~\citep{piet2023mark, pan2024markllm}.
Recent large-scale evaluations~\citep{fernandez2025good} show that Gumbel-max and SynthID achieve the best detectability-quality Pareto frontier among all methods, strictly dominating DiPMark, green-red variants, and semantic watermarks.

TextSeal builds on the Gumbel-max framework but introduces dual-key generation for diversity, entropy-weighted detection, and localized multi-region search---none of which are present in prior work.
We therefore compare TextSeal against these two practical baselines.
Because all three are distortion-free, the comparison is controlled: we fix the LLM, temperature, and top-$p$, and only vary the watermark-specific diversity parameter (key routing probability $\alpha$ for TextSeal, tournament depth for SynthID), isolating the watermark's effect from the decoding strategy.

\subsection{Post-Hoc LLM Watermarks for Data Protection}

Recent works apply LLM watermarks to training or evaluation data via paraphrasing.
Most exploit watermark radioactivity~\citep{sander2024watermarking}, \ie the detectable traces left when watermarked text is used for training.
Applications include detection of texts used in retrieval-augmented generation~\citep{jovanovic2025ward}, benchmark contamination detection~\citep{sander2025detecting}, and training data copyright~\citep{zhang2025leave}.
Waterfall~\citep{lau2024waterfall} evaluates post-hoc watermarking through LLM paraphrasing for provenance on code and natural text.
In \autoref{sec:learnability}, we demonstrate that TextSeal's watermark transfers through distillation, extending this line of work to reasoning trace provenance.

\end{document}